\def\BibTeX{{\rm B\kern-.05em{\sc i\kern-.025em b}\kern-.08em
    T\kern-.1667em\lower.7ex\hbox{E}\kern-.125emX}}
\newcommand{\icode}[1]{$\mathsf{#1}$}
\newenvironment{block}[1][t]
  {\begin{array}[#1]{@{}l@{}}}
  {\end{array}}
\definecolor{lightgray}{gray}{0.90}
\newcommand{\Gbox}[1]{\colorbox{lightgray}{$#1$}}
\newtheorem{theorem}{Theorem}
\newtheorem{definition}{Definition}
\newtheorem{lemma}[theorem]{Lemma}
\declaretheorem[numbered=no]{case}
\reservestyle{\oblang}{\mathsf}
\newcommand{\Int}{\mathsf{Int}}
\newcommand{\String}{\mathsf{String}}
\newcommand{\Bool}{\mathsf{Bool}}
\newcommand{\Unit}{\mathsf{Unit}}
\newcommand{\unit}{\mathsf{unit}}
\newcommand{\StringEq}{\mathsf{StringEq}}
\newcommand{\StringHashEq}{\mathsf{StringHashEq}}
\newcommand{\IntEq}{\mathsf{IntEq}}
\newcommand{\StringLen}{\mathsf{StringLen}}
\newcommand{\dom}{\mathit{dom}}
\newcommand{\sub}{<:}
\newcommand{\join}{\sqcup}
\newcommand{\ie}{\emph{i.e.}\xspace}
\newcommand{\etal}{\emph{et al.}\xspace}
\newcommand{\eg}{\emph{e.g.}\xspace}
\newcommand{\ltop}{\textsf{H}}
\newcommand{\lbot}{\textsf{L}}
\newcommand\obsec[0]{$\mathsf{Ob}_{\mathsf{SEC}}$\xspace}
\newcommand\defas[0]{\stackrel{\triangle}{=}}
\newcommand{\object}{\left[z : S => \overline{\metdef} \right]}
\newcommand{\objectx}[2]{\left[#1 : #2 => \overline{\metdef} \right]}
\newcommand{\objectxx}[3]{\left[#1 : #2 => #3 \right]}
\newcommand{\stypetop}[1]{#1_{\ltop}}
\newcommand{\stypebot}[1]{#1_{\lbot}}
\newcommand{\metdef}{m\left(x\right)e }
\newcommand{\methinvx}[2]{#1.m(#2)}
\newcommand{\ntrecordx}[3]{\left[\overline{#1 : #2 \rightarrow #3}\right]}
\newcommand{\ntrecord}{\ntrecordx{m}{S}{S}}
\newcommand{\rtypex}[2]{\textbf{Obj}(#1) . #2}
\newcommand{\stype}[2]{{#1\triangleleft#2}}
\newcommand{\rtop}{\left[ \hspace*{0.06cm} \right]}
\newcommand\ssubst[3]{#1\left[#2 / #3\right]}
\newcommand\mbsubst[5]{\ssubst{\ssubst{#1}{#2}{#3}}{#4}{#5}}
\newcommand{\reduce}{\longmapsto}
\newcommand{\GammaSub}[0]{\Phi}
\newcommand{\GammaST}[0]{\Sigma}
\newcommand{\unfold}[1]{\GammaST\left[#1\right]}
\newcommand\extgamma[3]{#1 \left[#2 \mapsto #3 \right]}
\newcommand\extgammax[5]{\extgamma{\extgamma{#1}{#2}{#3}}{#4}{#5}}
\newcommand{\typeeq}[2]{#1 \equiv #2}
\newcommand{\stypeeq}[2]{#1 \equiv #2}
\newcommand{\stypeof}[3]{#1 \vdash_{1} #2 : #3}
\newcommand{\subtp}[3]{#1 \vdash #2 \sub #3}
\newcommand{\subtps}[3]{\subtp{#1}{#2}{#3}}
\newcommand\tlookup[3]{\mathsf{msig}(#1,#2)=#3}
\newcommand\methimpl[3]{\mathsf{methimpl}(#1,#2)=#3}
\newcommand\rni[3]{\mathsf{TRNI}(#1,#2,#3)}
\newcommand\safevdash[0]{\vdash_{\mathsf{sf}}}
\newcommand\simplesub[3]{#1 \vdash_{\mathsf{sf}} #2 <: #3}
\newcommand{\setvx}[2]{\mathcal{V}_{#1}\llbracket#2\rrbracket}
\newcommand{\setcx}[2]{\mathcal{C}_{#1}\llbracket#2\rrbracket}
\definecolor{lightred}{RGB}{255,100,100}
\newcommand{\fsub}{\mathsf{F}_{<:}}
\newcommand\gobsec[0]{$\mathsf{Ob}^{\left\langle \right\rangle}_{\mathsf{SEC}}$\xspace}
\newcommand{\gntrecordx}[2]{\left[\overline{#1 : #2}\right]}
\newcommand{\gobject}{\left[z : S => \overline{\metdef} \right]}
\newcommand{\gmtype}[3]{\left\langle #1 \right\rangle #2 \rightarrow #3}
\newcommand{\mtype}[2]{#1 \rightarrow #2}
\newcommand{\gminv}[4]{#1.#2\left\langle #3\right\rangle(#4)}
\newcommand{\minv}[3]{#1.#2(#3)}
\newcommand{\StringFirst}{\mathsf{StringFst}}
\newcommand{\ListStrFstLen}{\mathsf{ListStrFstLen}}
\newcommand{\StrFstLen}{\mathsf{StrFstLen}}
\newcommand{\ListStrLen}{\mathsf{ListStrLen}}
\newcommand{\StringEqL}{\mathsf{StringEqL}}
\newcommand{\StringEqBad}{\mathsf{StringEqBad}}
\newcommand{\StringEqPoly}{\mathsf{StringEqPoly}}
\newcommand{\ListEqStr}{\mathsf{ListEqStr}}
\newcommand{\ListStr}{\mathsf{ListStr}}
\newcommand{\ilab}{*}
\newcommand{\rpolicy}{\mathsf{rdecl}}
\newcommand{\DeltaX}{\Delta}
\newcommand{\GammaTOk}[0]{\Delta_{\mathsf{ok}}}
\newcommand\wfe[1]{\vdash #1}
\newcommand\wfex[2]{#1 \vdash #2}
\newcommand\wfft[2]{#1 \vdash_{\triangleleft} #2}
\newcommand\wft[2]{#1 \vdash #2}
\newcommand\wf[2]{#1 \models #2}
\newcommand{\gtlookup}[3]{\mathsf{msig}(#1,#2,#3)}
\newcommand{\goinst}[2]{#1\left\langle #2 \right\rangle}
\newcommand{\primt}{P}
\newcommand{\primb}{{\mathsf{p}}}
\newcommand{\methods}{\mathsf{meths}}
\newcommand{\ptsound}[1]{\models #1}
\newcommand{\atomone}[2]{{\text{Atom}_{#1}\left[#2\right]}}
\newcommand{\atomvalone}[2]{{\text{Atom}^{val}_{#1}\left[#2\right]}}
\newcommand{\atomunion}[1]{{\text{Atom}\left[#1\right]}}  
\newcommand{\downreln}[1]{\left\lfloor R\right\rfloor_{n}}
\newcommand{\rhosyn}[1]{\rho(#1)} 
\newcommand{\rhosynx}[2]{{#1}_\mathsf{syn}(#2)}
\newcommand{\gsetv}[1]{\gsetvx{#1}{\rho}}
\newcommand{\gsetvx}[2]{{\mathcal{V}\llbracket#1\rrbracket}{#2}}
\newcommand{\gsetc}[1]{\gsetcx{#1}{\rho}}
\newcommand{\gsetcx}[2]{{\mathcal{C}\llbracket#1\rrbracket}{#2}}
\newcommand{\gsetg}[1]{\gsetgx{#1}{\rho}}
\newcommand{\gsetgx}[2]{\mathcal{G}\llbracket#1\rrbracket{#2}}
\newcommand{\gsetd}[1]{\mathcal{D}\llbracket#1\rrbracket}
\newcommand{\gtrni}[4]{\mathsf{PRNI}(#1,#2,#3,#4)}
\newcommand{\nsetv}[1]{\nsetvx{#1}}
\newcommand{\nsetvx}[1]{{\mathcal{V}\llbracket#1\rrbracket}}
\newcommand{\nsetc}[1]{\nsetcx{#1}}
\newcommand{\nsetcx}[1]{{\mathcal{C}\llbracket#1\rrbracket}}
\newcommand{\nsetg}[1]{\nsetgx{#1}}
\newcommand{\nsetgx}[1]{\mathcal{G}\llbracket#1\rrbracket}
\newcommand{\nsetd}[1]{\mathcal{D}\llbracket#1\rrbracket}
\newcommand{\xsigma}[2]{\sigma\left[#1 \mapsto #2 \right]}
\newcommand{\ubound}[2]{\mathsf{ub}(#1,#2)}
\newcommand{\inhole}[2]{#1\left[#2\right]}
\newcommand*\bigcdot{\mathpalette\bigcdot@{.7}}
\newcommand*\bigcdot@[2]{\mathbin{\vcenter{\hbox{\scalebox{#2}{$\m@th#1\bullet$}}}}}
\newcommand{\wfsub}[3]{#1 \vdash #2 \blacktriangleleft #3}
\definecolor{darkgreen}{RGB}{0,128,0}
\lstdefinelanguage{scala}{
    morekeywords={let,abstract,case,catch,class,def,%
      do,else,extends,false,final,finally,%
      for,if,implicit,import,match,mixin,%
      new,null,object,override,package,%
      private,protected,requires,return,sealed,%
      super,this,throw,trait,true,try,%
      type,val,var,while,with,yield, app, has,
			top,bottom,declassify,Obj},
    sensitive=true,
		keywordstyle={\color{blue}},
    morecomment=[l][\color{darkgreen}]{//},
    morecomment=[n]{/*}{*/},
    morestring=[b]",
    morestring=[b]',
    morestring=[b]""",
    escapeinside={(*}{*)},
    moredelim=**[is][{\btHL}]{`}{`}
  }
\begin{document}

\title{Polymorphic Relaxed Noninterference}

\author{\IEEEauthorblockN{Raimil Cruz\thanks{This work is partially funded by CONICYT FONDECYT Regular Projects 1150017 and 1190058, and is in part supported by the \href{https://erc.europa.eu/}{European Research Council} under \href{https://secure-compilation.github.io/}{ERC Starting Grant SECOMP (715753).}
Raimil Cruz is partially funded by CONICYT-PCHA/Doctorado Nacional/2014-63140148}}
\IEEEauthorblockA{\textit{PLEIAD Lab, Computer Science Department (DCC)} \\
\textit{University of Chile}\\
Santiago, Chile \\
racruz@dcc.uchile.cl}
\and
\IEEEauthorblockN{\'Eric Tanter}
\IEEEauthorblockA{\textit{PLEIAD Lab, Computer Science Department (DCC)} \\
\textit{University of Chile},
Santiago, Chile \\
\& Inria Paris, France \\
etanter@dcc.uchile.cl}
}

\maketitle

\begin{abstract}
Information-flow security typing statically preserves confidentiality by enforcing noninterference. To address the practical need of selective and flexible declassification of confidential information, 
several approaches have developed a notion of {\em relaxed} noninterference, where security labels are either functions or types. The labels-as-types approach to relaxed noninterference supports expressive declassification policies, including recursive ones, via a simple subtyping-based ordering, and provides a local, modular reasoning principle.
In this work, we extend this expressive declassification approach in order to support {\em polymorphic} declassification. First, we identify the need for bounded polymorphism through concrete examples. We then formalize polymorphic relaxed noninterference in a typed object-oriented calculus, using a step-indexed logical relation to prove that all well-typed terms are secure. Finally, we address the case of primitive types, which requires a form of ad-hoc polymorphism. Therefore, this work addresses practical hurdles to
providing controlled and expressive declassification for the construction of 
information-flow secure systems.
\end{abstract}

\section{Introduction}
\label{sec:gobsec-introduction}

An information-flow security type system statically ensures that public outputs (\eg~$\String_{\lbot}$) cannot depend on secret inputs (\eg~$\String_{\ltop}$), a property known as noninterference (NI)~\cite{volpanoAl:jcs1996}.
NI provides a modular reasoning principle about security, indexed by the observational power of an adversary. 
For instance,  a function $f: \String_{\ltop} \rightarrow \String_{\lbot}$ does not reveal any information about its argument; in fact, 
in a pure language, it is necessarily a constant function.

But noninterference is too strict in practice: for a system to be useful, confidential information sometimes needs to be {\em declassified}. Beyond introducing a declassification operator in the language, which compromises formal reasoning, various approaches have explored structured ways to support declassification policies~\cite{sabelfeldSands:jcs2009,liZdancewic:popl2005,hicksAl:plas2006,cruzAl:ecoop2017}. In particular, \citet{liZdancewic:popl2005}
introduce {\em relaxed} noninterference, supporting expressive declassification policies via {\em security labels as functions}. 
In this approach, instead of having security labels such as $\ltop$ for private and $\lbot$ for public information that are drawn from a fixed lattice of symbols, security labels {\em are} the very functions that describe how a given secret can be manipulated in order to produce a public value: for instance, one can realize the declassification policy {\em ``only the result of comparing the hash of the secret string $s$ with a public guess can be made public''} by attaching to $s$ a function that implements this declassification ($\lambda x.\lambda y.~\mathsf{hash}(x)=y$). Any use of the secret that does not follow the declassification policy yields private results. 
One can express the standard label $\ltop$ (resp. $\lbot$) as a constant function (resp. the identity function). A challenging aspect of this approach is that label ordering relies on a semantic interpretation of declassification functions. 

A more practical approach than this {\em labels-as-functions} approach was recently developed by \citet{cruzAl:ecoop2017} in an object-oriented setting, with a {\em labels-as-types} perspective: security types are \emph{faceted types} of the form $\stype{T}{U}$ where the first facet $T$---called the {\em safety type}---represents the implementation type, exposed to the private observer, and the 
second facet $U$---called the {\em declassification type}--represents the declassification policy as an object interface exposed to the public observer.\footnote{To account for $k > 2$ observation levels, faceted types can be extended to have $k$ facets~\cite{cruzAl:ecoop2017}. Here, we restrict the presentation to two observation levels.}
For instance, the type
$\stype{\String}{\top}$, where $\top$ is the empty object interface, denotes 
private values (no method is declassified) and the type $\stype{\String}{\String}$ represents public values (all methods are declassified).
These security types are abbreviated as $\String_{\ltop}$ and $\String_{\lbot}$, respectively. Interesting declassification policies stand in between these two extremes: for instance, given the interface $\StringLen \triangleq \left[\mathsf{length}: \Unit_{\lbot} -> \Int_{\lbot}\right]$,
the faceted type $\stype{\String}{\StringLen}$ exposes the method $\mathsf{length}$ to declassify the length of a string as a public integer, but not its content.
This type-based approach to declassification 
 is expressive as well as simple---in particular, because labels are types, label ordering is simply subtyping. Also, it extends the modular reasoning principle of NI to account for declassification~\cite{cruzAl:ecoop2017}, a property named \emph{type-based relaxed noninterference} (TRNI). For instance, with TRNI one can prove that
a function of type $\stype{\String}{\StringLen} \rightarrow \Bool_{\lbot}$ 
{\em must} produce equal results for strings of equal lengths.

The labels-as-types approach of Cruz \etal however lacks {\em security label polymorphism}. 
Security label polymorphism is a very useful feature of practical security-typed languages such as JIF~\cite{myers:jif} and FlowCaml~\cite{pottierSimonet:toplas2003}, which has only been explored in the context of standard security labels (symbols from a lattice). To the best of our knowledge, polymorphism has not been studied for 
expressive declassification mechanisms, such as labels-as-functions~\cite{liZdancewic:popl2005} or labels-as-types~\cite{cruzAl:ecoop2017}. 
We extend the labels-as-types approach
with declassification {\em polymorphism}, specifically 
{\em bounded} polymorphism that specifies both a lower and an upper bound for a polymorphic declassification type.

The main contribution of this paper is to develop the theory of bounded
polymorphic declassification as an extension of TRNI, called {\em polymorphic relaxed noninterference} (PRNI for short). PRNI brings new benefits in the expressiveness and design of declassification interfaces. Additionally, we address the necessary support for primitive types, through a form of ad-hoc polymorphism.

The labels-as-types approach has the practical benefits of relying on concepts that are well-known to developers---object interfaces and subtyping---in order to build systems with information flow security that cleanly account for controlled and expressive declassification. This work addresses the two major shortcomings of prior work in order to bring this approach closer to real-world secure programming.

Section~\ref{sec:background} provides background on labels-as-types and TRNI. Section~\ref{sec:gobsec-overview} then explains the main aspects of polymorphic relaxed noninterference (PRNI).
Section~\ref{sec:gobsec-model} formalizes polymorphic declassification in a core object-oriented language \gobsec. Then, Section~\ref{sec:gobsec-trni} develops a logical relation for PRNI and shows that all well-typed \gobsec terms satisfy PRNI.
Section~\ref{sec:gobsec-primitive-types} extends \gobsec with primitive types.
Section~\ref{sec:gobsec-related-work} discusses related work and Section~\ref{sec:conclusion} concludes. 
We have implemented an interactive prototype of \gobsec that is available at \url{https://pleiad.cl/gobsec/}.

\section{Background:Type-Based Relaxed Noninterference}
\label{sec:background}

We start with a quick review of type-based relaxed noninterference~\cite{cruzAl:ecoop2017}. 
Faceted security types allow programmers to express declassification policies
as type interfaces. For instance, one can express that a \icode{login} function can reveal the result of comparing a secret password for equality with a public guess.

\begin{lstlisting}[numbers=none]
(*$\stypebot{\String}$*) login((*$\stypebot{\String}$*) guess, (*$\stype{\String}{\StringEq}$*) password){
 if(password.eq(guess))
    return "Login Successful"
  else
     return "Login failed"
}
\end{lstlisting}

Note that leaking the secret password by directly returning it would not typecheck, since $\StringEq$ is not a subtype of $\String$ (recall that $\stypebot{\String}$ is short for $\stype{\String}{\String}$).
Taking advantage of the fact that object types are recursive, one can also 
express \emph{recursive declassification}, for instance that a list of secret strings can only be declassified by comparing its elements for equality.
Likewise, one can express \emph{progressive declassification} 
by nesting type interfaces. For instance, assuming that $\String$ has a method $\mathsf{hash}: \stypebot{\Unit} \rightarrow \stypebot{\Int}$, we 
can specify that only the hash of the password can be compared for equality with the interface type 
$\StringHashEq \triangleq \left[\mathsf{hash}: \stypebot{\Unit} \rightarrow \stype{\Int}{\IntEq}\right]$,
where $\IntEq \triangleq \left[\mathsf{eq}: \stypebot{\Int} \rightarrow \stypebot{\Bool}\right]$:
\begin{lstlisting}[numbers=none]
(*$\stypebot{\String}$*) login((*$\stypebot{\Int}$*) guess, (*$\stype{\String}{\StringHashEq}$*) password){
  if(password.hash().eq(guess)) ...
}
\end{lstlisting}

Cruz~\etal formalize faceted security types in \obsec, a core object-oriented language with three 
kinds of expressions: variables, objects and method invocations (Figure~\ref{fig:obsec-syntax}). An object $\object$ is a collection of methods that can 
refer to the defining object with the self variable $z$. An object type $\rtypex{\alpha}{\ntrecord}$ is a collection of 
method signatures that have access to the defining type through the self type variable $\alpha$. Security types $S = \stype{T}{U}$ are composed of two object types $T$ and $U$.
Note that to be well-formed, a security type $\stype{T}{U}$ requires $U$ to be a {\em supertype} of $T$. The type abstraction mechanism of subtyping (by which a supertype ``hides'' members of its subtypes)
is the key element to express declassification.

\begin{figure}[t]
  \begin{small}		
		\begin{displaymath}
			\begin{array}{rcll}
				e & ::= & v |  e.m(e)| x & \text{(terms)}\\
				v & ::= & \object  & \text{(values)}\\					
				T, U & ::= & O | \alpha & \text{(types)}\\
				O & ::= & \rtypex{\alpha}{\ntrecord} & \text{(object types)}\\
				S & ::= & \stype{T}{U} & \text{(security types)}\\
			\end{array}		
		\end{displaymath}	
		\begin{minipage}{0.3\columnwidth}			
			\framebox{$\Gamma |- e : S$}
		\end{minipage}
		\begin{minipage}{0.7\columnwidth}
			$\Gamma ::= \bigcdot | \Gamma, x: S$ (type environment)
		\end{minipage}
		\begin{mathpar}
		\inference[(TmD)]{
				\Gamma |- e_1 : \stype{T}{U} &
				\Gbox{m \in U} \\
				\tlookup{U}{m}{S_1 \rightarrow S_2}	&
				\Gamma |- e_2 : S_1  
			}{
				\Gamma |- \methinvx{e_1}{e_2}: S_2
			}		
		\end{mathpar}
		\begin{mathpar}
			\inference[(TmH)]{
					\Gamma |- e_1 : \stype{T}{U} &
					\Gbox{m \notin U} \\
					\tlookup{T}{m}{S_1 \rightarrow \Gbox{\stype{T_2}{U_2}}} &
					\Gamma |- e_2 : S_1 
				}{
					\Gamma |- \methinvx{e_1}{e_2}: \stype{T_2}{\Gbox{\top}}
					}
		\end{mathpar}
 \end{small}
 \caption{\obsec: Syntax and Static semantics (excerpts from \cite{cruzAl:ecoop2017})}
  \label{fig:obsec-syntax}
\end{figure}

The \obsec type system defines two rules to give a type to a method invocation depending on whether the invoked method is in the declassification 
type or not. 
Rule (TmD) specifies that if the invoked method $m$ is in the declassification type $U$ with type $S_1 -> S_2$, then the result type of the method invocation expression is $S_2$. Conversely, if the method $m$ is only present in the safety type $T$ with type $S_1 -> \stype{T_2}{U_2}$, then the result type of the method invocation is $\stype{T_2}{\top}$ (TmH): if we bypass the declassification type, the result must be protected as a secret.

The security property obtained by this approach is called Type-based Relaxed Noninterference (TRNI). 
At the core of TRNI is a notion of observational equivalence between objects {\em up to the discrimination power of the public observer}, which is specified by the declassification type. More precisely, two objects $o_1$ and $o_2$ are equivalent at type $\stype{T}{U}$ if, for any method $m$ with 
type $S_1 -> S_2$ in the declassification type $U$, invoking $m$ with equivalent values $v_1$ and $v_2$ at type $S_1$, produces equivalent results at type $S_2$.

TRNI is formulated as a {\em modular} reasoning principle, over open terms: $\rni{\Gamma}{e}{S}$. The closing typing environment $\Gamma$ specifies the secrecy of the inputs that $e$ can use, and the security type $S$ 
specifies the observation power of the adversary on the output. 

For instance, suppose $\StringLen$ is an interface that exposes a $\mathsf{length: \Unit_\lbot -> \Int_\lbot}$ method. Then, with $\Gamma = x:\stype{\String}{\StringLen}$, the judgment $\rni{\Gamma}{\mathsf{x.length()}}{\stypebot{\Int}}$ implies:
given the knowledge that two input strings $v_1$ and $v_2$
have the same length, the lower observer does not learn anything new about the inputs by executing $\mathsf{x.length()}$. Conversely,
$\rni{\Gamma}{\mathsf{x.eq("a")}}{\stypebot{\Bool}}$ does {\em not} hold: executing $\mathsf{x.eq("a")}$ and exposing the result as a public value would reveal more information than permitted by the input declassification type ($\mathsf{eq} \notin \StringLen$). However, $\rni{\Gamma}{\mathsf{x.eq("a")}}{\stypetop{\Bool}}$ {\em does} hold, because the result is private and therefore unaccessible to the public observer.

\section{Polymorphic Relaxed Noninterference}
\label{sec:gobsec-overview}

We first motivate polymorphic declassification with faceted types, and then we illustrate the role of bounded polymorphism for
declassification. 
Finally, we give an overview of the modular reasoning principle of polymorphic relaxed noninterference.

\subsection{Polymorphic Declassification}

When informally discussing the possible extensions to their approach to declassification, 
\citet{cruzAl:ecoop2017} illustrate the potential benefits of polymorphic declassification by giving the example of a list of strings that is polymorphic with respect to the declassification type of its elements:
\begin{displaymath}
\begin{array}{rcl}
	\ListStr\langle  X \rangle & \triangleq & 
			\begin{block}
			  \lbrack~
				\mathsf{isEmpty}:\stypebot{\Unit} -> \stypebot{\Bool}, \\
				\mathsf{head}:\stypebot{\Unit} -> \stype{\String}{X}, \\
				\mathsf{tail}:\stypebot{\Unit} -> \stypebot{\goinst{\ListStr}{X}}				
				\rbrack \\
			\end{block}	
\end{array}
\end{displaymath}

This recursive polymorphic declassification policy allows a public observer to traverse the list, and to observe {\em up to} $X$ on each of the elements. This restriction is visible in the signature of the \icode{head} method, which returns a value of type $\stype{\String}{X}$.

Then, with polymorphic declassification we can implement data structures that are agnostic to the declassification policies
of their elements, as well as polymorphic methods over these data structures.
For example, we can construct declassification-polymorphic lists of strings with the following \icode{cons} method:

\begin{lstlisting}[numbers=none]
(*$\stypebot{\goinst{\ListStr}{X}}$*) cons<(*$X$*)>((*$\stype{\String}{X}$*) s, (*$\stypebot{\goinst{\ListStr}{X}}$*) l){
  return new {
	  self: (*$\stypebot{\goinst{\ListStr}{X}}$*)
	  isEmpty() => false
	  head() => s
	  tail() => l
	}
}
\end{lstlisting}
The \icode{cons} method does not even access any method of list \icode{l}, it simply returns a new declassification-polymorphic list of strings as a new object with the expected methods. We can then use this method to define a declassification-polymorphic list concatenation method \icode{concat}:
$\stypebot{\goinst{\ListStr}{X}} \times \stypebot{\goinst{\ListStr}{X}} ->
\stypebot{\goinst{\ListStr}{X}}$ defined below:

\begin{lstlisting}[numbers=none]
(*$\stypebot{\goinst{\ListStr}{X}}$*) concat<(*$X$*)>((*$\stypebot{\goinst{\ListStr}{X}}$*) l1, (*$\stypebot{\goinst{\ListStr}{X}}$*) l2){
  if(l1.isEmpty()) return l2
  return cons<(*$X$*)>(l1.head(),
                 concat<(*$X$*)>(l1.tail(),l2))
}
\end{lstlisting}
The \icode{concat} and \icode{cons} methods are standard  object-oriented implementations of list concatenation and construction, respectively. 
The \icode{concat} method respects the declassification type $\goinst{\ListStr}{X}$ of both lists because 
it uses \icode{l1.isEmpty()} and
\icode{l1.tail()} to iterate over \icode{l1}, and it uses \icode{l1.head()} 
to create a new declassification-polymorphic list of type $\stypebot{\goinst{\ListStr}{X}}$. In particular, it uses no string-specific methods. 

\subsection{Bounded Polymorphic Declassification}

The declassification interface $\goinst{\ListStr}{X}$ above is fully polymorphic, in that a public observer cannot exploit {\em a priori} any information about the elements of the list.
In particular, it is not possible to implement a polymorphic \icode{contains} method that would yield publicly-observable results. 
Indeed, \icode{contains} needs to invoke \icode{eq} over the elements of the list (obtained with \icode{head}). Because the result of \icode{head} has declassification type $X$, for {\em any} $X$, the results of equality comparisons are necessarily private.

In order to support polymorphic declassification more flexibly, we turn to {\em bounded} parametric polymorphism. Bounded parametric polymorphism supports the specification of both upper and lower bounds on type variables. 
The type $\goinst{\ListStr}{X}$ is therefore equivalent to $\goinst{\ListStr}{X : \String  ..  \top}$, where the notation $X: A..B$ is used to denote that $X$ is type variable that ranges between $A$ and $B$). Note that for $\ListStr$ to be well-formed, the declassification type variable $X$ must at least be a supertype of the safety type $\String$.

Going back to declassification-polymorphic lists, if we want to allow the definition of methods like \icode{contains}, 
we can further constrain the type variable $X$ to be a subtype of $\StringEq$: 
\begin{displaymath}
	\begin{block}
		\ListEqStr \langle X: \String .. \StringEq \rangle \triangleq \\
		\quad \quad \lbrack~
			\cdots, \mathsf{tail}:\stypebot{\Unit} -> \stypebot{\mathsf{\goinst{\ListEqStr}{\mathit{X}}}}
		\rbrack \\
	\end{block}	
\end{displaymath}
The type $\ListEqStr$ denotes a recursive polymorphic declassification policy that allows a public observer to traverse the list and compare its elements for equality with a given public element.
With this policy we can implement a generic \icode{contains} function with publicly-observable result:
\begin{lstlisting}[numbers=none]
(*$\stypebot{\Bool}$*) contains<(*$X:\String..\StringEq$*)> 
             ((*$\stypebot{\goinst{\ListEqStr}{X}}$*) l, (*$\stypebot{\String}$*) s){
  if(l.isEmpty()) return false
  if(l.head().eq(s)) return true
  return contains(l.tail(),s)
}
\end{lstlisting}
The key here is that \icode{l.head().eq(s)} is guaranteed to be publicly observable, because the actual declassification policy with which $X$ will be instantiated necessarily includes (at least) the \icode{eq} method.
Thus, upper bounds on declassification variables are useful for supporting polymorphic {\em clients}.

As mentioned above, the lower bound of a type variable used for declassification must at least be the safety type for well-formedness. 
More interestingly, the lower bound plays a critical (dual) role for {\em implementors} of declassification-polymorphic functions.
Consider a method with signature
$$\gmtype{X:\String ..\top}{\stype{\String}{\StringLen}}{\stype{\String}{X}}$$

Can this method return non-public values? For instance, can it be the identity function? No, because returning a string of type $\stype{\String}{\StringLen}$ would be unsound. Indeed, a client could instantiate $X$ with $\String$, yielding $$\stype{\String}{\StringLen} -> \stype{\String}{\String}$$
Therefore, to be sound for all possible instantiations of $X$, the implementor of the method has no choice but to return a public string. 

To recover flexibility and allow a polymorphic implementation to return non-public values, we can constrain the lower bound of $X$. For instance
$$\gmtype{X:\StringLen .. \top}{\stype{\String}{\StringLen}}{\stype{\String}{X}}$$
admits the identity function as an implementation, in addition to other implementations that produce public results. Returned values cannot be {\em more private} than specified by the lower bound of $X$; their type must be a subtype of the lower bound.

Having illustrated the interest of upper and lower bounds of declassification type variables in isolation, we now present an example that combines both.
Consider two lists of strings, each with one of the following declassification policies:
\begin{displaymath}
\begin{block}
\goinst{\ListStrLen}{X:\String .. \StringLen} \defas \\
\begin{array}{rcl}
	 &  & 
			\begin{block}
			  \lbrack~
				\mathsf{isEmpty}:\stypebot{\Unit} -> \stypebot{\Bool}, \\
				\mathsf{head}:\stypebot{\Unit} -> \stype{\String}{X}, \\
			  \mathsf{tail}:\stypebot{\Unit} -> \stypebot{\ListStrLen}
				\rbrack \\
			\end{block}	\\
	\ListStrFstLen & \defas & 
			\begin{block}
			  \lbrack~
				\mathsf{isEmpty}:\stypebot{\Unit} -> \stypebot{\Bool}, \\
				\mathsf{head}:\stypebot{\Unit} -> \stype{\String}{\StrFstLen}, \\
			  \mathsf{tail}:\stypebot{\Unit} -> \stypebot{\ListStrFstLen}
				\rbrack \\
			\end{block}	
\end{array}
\end{block}
\end{displaymath}
$\ListStrLen$ is declassification polymorphic, ensuring that at least the length of its elements is declassified ($X$ has upper bound $\StringLen$).
The second policy, $\ListStrFstLen$, is monomorphic: it declassifies both the first character and the length of its elements.
If we want a function able to concatenate these two string lists, its most general polymorphic signature ought to be:
\begin{displaymath}
\begin{block}
\langle X:\StrFstLen .. \StringLen \rangle \\
\quad\stypebot{\goinst{\ListStrLen}{X}} \times \stypebot{\ListStrFstLen}
-> \stypebot{\goinst{\ListStrLen}{X}}
\end{block}
\end{displaymath}
The upper bound $\StringLen$ is required to have a valid instantiation of $\goinst{\ListStrLen}{X}$; the lower bound $\StrFstLen$ is required to be able to add elements of the second list to the returned list.

\subsection{Reasoning principles for PRNI}

Introducing polymorphism in declassification types yields an extended notion of type-based relaxed noninterference called {\em polymorphic relaxed noninterference} (PRNI). PRNI exactly characterizes that a program with polymorphic types must be
secure for any instantiation of its type variables. To account for type variables, the judgment 
$\gtrni{\Delta}{\Gamma}{e}{S}$ is parametrized by $\DeltaX$, a
set of bounded type variables (\ie $\DeltaX ::= \cdot | \DeltaX,X:A..B $). 
As in $\rni{\Gamma}{e}{S}$, the closing typing environment $\Gamma$ specifies the secrecy of the inputs that $e$ can use, and $S$ specifies
the observation level for the output. 
$\DeltaX$ gives meaning to the type variables that can occur in both $S$ and $\Gamma$: $e$ is secure for {\em any} instantiation of type variables that respects the bounds.

For instance, given $\DeltaX \defas X:\StrFstLen..\StringLen$ and $\Gamma \defas x: \stype{\String}{X}$, 
$\gtrni{\DeltaX}{\Gamma}{\mathsf{x.length()}}{\stypebot{\Int}}$ holds because for any type $T$ such as 
$\StrFstLen <: T <: \StringLen$, and the knowledge that two input strings are related at $\stype{\String}{T}$, and hence
at $\stype{\String}{\StringLen}$ (\ie both strings have the same length), the public observer does not learn anything new
by executing \icode{x.length()}. However, $\gtrni{\DeltaX}{\Gamma}{\mathsf{x.first()}}{\stypebot{\String}}$ does not hold. Note that if
we substitute $X$ by $\StringLen$, given two strings with the same length \icode{``abc"} and \icode{``123"}, the public observer
is able to distinguish them by executing \icode{``abc".first()} and \icode{``123".first()} and observing the results
\icode{``a"} and \icode{``1"} as public values.

Also, $\gtrni{\DeltaX}{\Gamma}{\mathsf{x}}{\stype{\String}{\StringFirst}}$ does not hold. Again, we can substitute $X$ by $\StringLen$,
and take input strings \icode{``abc"} and \icode{``123"}, which can be discriminated by the public observer at type $\stype{\String}{\StringFirst}$.
However, $\gtrni{\DeltaX}{\Gamma}{\mathsf{x}}{\stype{\String}{\StringLen}}$
does hold: any two equivalent values at $\stype{\String}{T}$ where $\StrFstLen <: T<:\StringLen$ have at least the same length.

The rest of this paper dives into the formalization of polymorphic relaxed noninterference in a pure object-oriented setting (Sections~\ref{sec:gobsec-model} and~\ref{sec:gobsec-trni}), before discussing the necessary extensions to accommodate primitive types (Section~\ref{sec:gobsec-primitive-types}).

\section{Formal Semantics}
\label{sec:gobsec-model}

We model polymorphic type-based declassification in \gobsec, an extension of the 
language \obsec~\cite{cruzAl:ecoop2017} with 
polymorphic declassification. 
\obsec is based on the object calculi of \citet{abadiCardelli:1996}, 
and our treatment of type variables and bounded polymorphism is inspired by Featherweight Java~\cite{igarashi01fj} and DOT~\cite{rompfAmin:oopsla2016}. 

\subsection{Syntax}

Figure~\ref{fig:gobsec-final-syntax} presents the syntax of \gobsec. We highlight 
the extension for polymorphic declassification, compared to the syntax of \obsec. 

The language has three kind of expressions: objects, method
invocations, variables. Objects $\gobject$are collections of method
definitions. Recall that the self variable $z$ binds the current object.

A security type is a faceted type $\stype{T}{U}$, where $T$ is called the {\em safety type} of $S$, and $U$ is called the {\em declassification type} of $S$.
Types $T$ include object types $O$ and self type variables $\alpha$.
Declassification types $U$ additionally feature type variables $X$, to express polymorphic declassification. We use metavariables $A$ and $B$ for declassification type bounds.

An object type $\rtypex{\alpha}{\gntrecordx{m}{M}}$ is a collection of method signatures with 
unique names (we sometimes use $R$ to refer to just a collection of methods). The self type variable $\alpha$ binds to the defined object type (\ie object types are recursive types).
A method signature $\gmtype{X:A..B}{S_1}{S_2}$ introduces the type variable $X$ with lower bound $A$ and upper bound $B$. 
To simplify the presentation of the calculus, we model single-argument methods with a single type variable.\footnote{The implementation supports both multiple arguments and multiple type variables.}

\begin{figure}[t]
  \begin{small}		
		\begin{displaymath}
			\begin{array}{rcll}
				e & ::= & v |  \gminv{e}{m}{\Gbox{U}}{e} | x  & \text{(terms)}\\
				v & ::= & o  & \text{(values)}\\					
				o & ::= & \gobject  & \text{(objects)}\\
				S & ::= & \stype{T}{U} & \text{(security types)}\\
				T& ::= & O | \alpha &  \text{(types)}\\
				U, A, B & ::= & T | \Gbox{X} & \text{(declassification types)}\\
				O & ::= & \rtypex{\alpha}{R} & \text{(object types)}\\
				R & ::= & \gntrecordx{m}{M} & \text{(record types)}\\
				M & ::= & \gmtype{\Gbox{X:A..B}}{S}{S} & \text{(method signatures)}\\
				\Gamma & ::= & \bigcdot | \Gamma, x:S & \text{(type environments)} \\
			\GammaSub & ::= & \bigcdot | \GammaSub, \alpha <: \beta & \text{(subtyping environments)}\\
			\Gbox{\DeltaX}	 & ::= & \Gbox{\bigcdot| \DeltaX, X : A..B} & 
			\begin{block}
				\text{(type } \\
				\text{variable environments)}
			\end{block}\\
			\alpha,\beta & ~ & \text{(self type variables)} & 
			\end{array}
		\end{displaymath}
 \end{small}
 \caption{\gobsec: Syntax}
  \label{fig:gobsec-final-syntax}
\end{figure}

\subsection{Subtyping}
\label{sec:gobsec-subtyping}

Figure~\ref{fig:gobsec-subtyping} presents the \gobsec subtyping judgment $\subtp{\DeltaX; \GammaSub}{U_1}{U_2}$. 
The type variable environment $\DeltaX$ is a set of type variables with their bounds, \ie ${\DeltaX ::= \bigcdot | 
\DeltaX, X: A..B}$. The subtyping environment $\GammaSub$ is a set of subtyping assumptions between self type 
variables, \ie ${\GammaSub ::= \bigcdot | \GammaSub, \alpha <: \beta}$

The rules for the monomorphic part of the language are similar to \obsec.
Rule (SObj) justifies subtyping between two object types; it holds if the methods of 
the left object type $O_1$ are subtypes of the corresponding methods on $O_2$. Both width and depth subtyping are supported.
Note that to verify subtyping of method collections, \ie $\subtp{\DeltaX;\GammaSub, \alpha <: \beta}{R_1}{R_2}$, we 
put in subtyping relation in $\GammaSub$ the self variables $\alpha$ and $\beta$. Rule (SVar) accounts for subtyping between self type variables and it holds if such subtyping relation exists in the subtyping environment.

The rules (STrans) and  (SSubEq) justify subtyping by transitivity and type equivalence respectively. 
We consider type equivalence  up to renaming and folding/unfolding of self type variables~\cite{cruzAl:ecoop2017}.

The novel part of \gobsec are type variables, handled by rules (SGVar1) and (SGVar2). We follow the approach of \citet{rompfAmin:oopsla2016}.
Rule (SGVar1) justifies subtyping between a type variable $X$ and a type 
$B$, if $B$ is the upper bound of the type variable in $\DeltaX$. 
Rule (SGVar2) is dual to (SGVar1), justifying that $A<:X$ if $A$ is the lower bound of $X$. 

The judgment $\subtps{\DeltaX;\GammaSub}{R_1}{R_2}$ accounts for subtyping between collections of methods, and is used in rule (SObj).
The judgment $\subtps{\DeltaX;\GammaSub}{M_1}{M_2}$ denotes subtyping between method signatures.
For this judgment to hold, the type variable bounds $A'..B'$ of the supertype (on the right) must be included within the bounds $A..B$ of the subtype (on the left); this ensures that any instantiation on the right is valid on the left.
Then, in a type variable environment extended with $X:A'..B'$, standard function subtyping must hold (contravariant on the argument type, covariant on the return type).

Finally, rule (SST) accounts for subtyping between security types, which requires facets to be pointwise subtypes.

\begin{figure}[t]
\begin{small}
\framebox{$\subtp{\DeltaX; \GammaSub}{U_1}{U_2}$}
 \begin{mathpar}
    \inference[(SObj)]{
      O_1 \triangleq  \rtypex{\alpha}{R_1} \quad O_2 \triangleq \rtypex{\beta}{R_2} \\
			\subtp{\DeltaX; \GammaSub, \alpha <: \beta}{R_1}{R_2}
    }{
       \subtp{\DeltaX;\GammaSub}{O_1}{O_2}
    }\\
		\inference[(SVar)]{
      \alpha <: \beta \in \GammaSub
    }{
      \subtp{\DeltaX;\GammaSub}{\alpha}{\beta}
    } \quad
		\inference[(SSubEq)]{
      \typeeq{O_1}{O_2}
    }{
      \subtp{\Delta_X;\GammaSub}{O_1}{O_2}
    }
		\\
		\inference[(SGVar1)]{
      X : A..B \in \DeltaX
    }{
      \subtp{\DeltaX;\GammaSub}{X}{B}
    }
		\quad
		\inference[(SGVar2)]{
      X : A..B \in \DeltaX
    }{
      \subtp{\DeltaX;\GammaSub}{A}{X}
    }
		\\
		\inference[(STrans)]{
      \subtp{\DeltaX;\GammaSub}{U_1}{U_2} & \subtp{\DeltaX;\GammaSub}{U_2}{U_3}
    }{
      \subtp{\DeltaX;\GammaSub}{U_1}{U_3}
    }
\end{mathpar}
\framebox{$\subtps{\DeltaX;\GammaSub}{R_1}{R_2}$}
\begin{mathpar}
		\inference[(SR)]{
				\overline{m'} \subseteq  \overline{m} \quad m_{i} = m'_{j} \implies \subtp{\DeltaX; \GammaSub}{M}{M'}
			}{
				 \subtp{\DeltaX;\GammaSub}{\gntrecordx{m}{M}}{\gntrecordx{m'}{M'}}
			}
\end{mathpar}
\framebox{$\subtps{\DeltaX;\GammaSub}{M_1}{M_2}$}
\begin{mathpar}
		\inference[(SM)]{								
				\subtp{\DeltaX; \GammaSub}{B'}{B} \quad \subtp{\DeltaX; \GammaSub}{A}{A'} \\
				\subtp{\DeltaX,\Gbox{X:A'..B'};\GammaSub}{S'_{1}}{S_{1}} \\
				\subtp{\DeltaX,\Gbox{X:A'..B'};\GammaSub}{S_{2}}{S'_{2}}
			}{
				 \subtp{\DeltaX;\GammaSub}{\gmtype{X:A..B}{S_{1}}{S_{2}}}{\gmtype{X:A'..B'}{S'_{1}}{S'_{2}}}
			}
\end{mathpar}
\framebox{$\subtps{\DeltaX;\GammaSub}{S_1}{S_2}$}
 \begin{mathpar}
   \inference[(SST)]{
     \subtp{\DeltaX;\GammaSub}{T_1}{T_2} &    
     \subtp{\DeltaX;\GammaSub}{U_1}{U_2} &
   }{
     \subtps{\DeltaX;\GammaSub}{\stype{T_1}{U_1}}{\stype{T_2}{U_2}}
   }   
  \end{mathpar}
\end{small}
\caption{\gobsec: Subtyping rules} 
\label{fig:gobsec-subtyping}
\end{figure}

\subsection{Type System}
\label{sec:gobsec-static-semantics} 

The typing rules of \gobsec appeal to some auxiliary definitions, given in Figure~\ref{fig:gobsec-aux-definitions}.
Function $\ubound{\DeltaX}{U}$ returns the upper bound of a type $U$ in the type variable environment $\DeltaX$. 
Since \gobsec has a top type ($\rtypex{\alpha}{\rtop}$) this recursive definition of \icode{ub} is well-founded; as in Featherweight Java~\cite{igarashi01fj}, we assume that $\DeltaX$ does not contain cycles.
The auxiliary judgment $\DeltaX |- m \in U$ holds if method $m$ belongs to type $U$. For a type variable, this means that the method is in the upper bound $\ubound{\DeltaX}{X}$.
Function $\mathsf{msig}(\DeltaX,U,m)$ returns the polymorphic method signature of method $m$ in type $U$. 
The rule for type variables looks up the signature in the upper bound.
The rule for object types is standard; 
remark that it returns closed type signatures with respect to the self type variable.
Finally, the judgment $\DeltaX |- U \in A..B$ holds if the type $U$ is a super type of $A$ and a subtype of $B$ in the type variable environment $\DeltaX$.

\begin{figure}[t]%
	\begin{small}
		\framebox{$\ubound{\DeltaX}{U} = T$}
			\begin{mathpar}
				 \inference{
					T \neq X
				 }{
					 \ubound{\DeltaX}{T} = T
				 }
				\quad 
				\inference{
					X: A..B \in \DeltaX 
				}{
					\ubound{\DeltaX}{X} = \ubound{\DeltaX}{B}
				}			
			\end{mathpar}
		\framebox{$\DeltaX |- m \in U$}
		\begin{mathpar}			
			\inference{		
				O \triangleq  \rtypex{\alpha}{\gntrecordx{m}{M}} 
			 }{
				\DeltaX |- m_i \in O
			 } \quad
			 \inference{		
				\DeltaX |- m \in \ubound{\DeltaX}{X}
			 }{
				 \DeltaX |- m \in X
			 }
		\end{mathpar}
		\begin{mathpar}
		\end{mathpar}
	  \framebox{$\tlookup{\DeltaX,U}{m}{M}$}
		\begin{mathpar}			
			\quad		
			 \inference{		
				~
			 }{
				 \gtlookup{\DeltaX}{X}{m} = \gtlookup{\_}{\ubound{\DeltaX}{X}}{m}
			 } \\
			\inference{
					O \triangleq  \rtypex{\alpha}{\gntrecordx{m}{M}}
			 }{
				 \gtlookup{\_}{O}{m_i} = \ssubst{M}{O}{\alpha}
			 }
		\end{mathpar}		
		\framebox{$\DeltaX |- U \in A..B$}
		\begin{mathpar}			
			\quad		
			 \inference{		
				\subtp{\DeltaX;\bigcdot}{A}{U}  \quad \subtp{\DeltaX;\bigcdot}{U}{B}
			 }{
				 \DeltaX |- U \in A..B
			 }	
		\end{mathpar}
	\end{small}
\caption{\gobsec: Some auxiliary definitions}
\label{fig:gobsec-aux-definitions}%
\end{figure}

Figure~\ref{fig:gobsec-static-semantics} presents the typing judgment  ${\DeltaX;\Gamma \vdash e: S}$ for \gobsec, which denotes that ``expression $e$ has type $S$ under type variable environment $\DeltaX$ and 
type environment $\Gamma$''. 
Note that in our presentation, we omit well-formedness rules for types and environments. They are included in Appendix~\ref{sec:gobsec-well-formedness-rules-appendix}.

The first three typing rules are standard: rule (TVar) types a variable according to the environment, rule (TSub) is the subsumption rule
and rule (TObj) types an object. The method definitions of the object must be well-typed with respect to the method signatures taken from the safety type $T$ of the security type $S$ ascribed to the self variable $z$. 
For this, the method body $e_i$ must be well-typed in an extended type variable environment with the type variable 
$\DeltaX,X:A_{i}..B_{i}$, and an extended type environment with the self variable and the method argument.

Rules (TmD) and (TmH) cover method invocation, and account for declassification. The actual argument type $U'$
must satisfy the variable bounds $\DeltaX |- U' \in A..B$. 
On the one hand, rule (TmD) applies when the method $m$ is in $U$ with 
signature  $\gmtype{X:A..B}{S_1}{S_2}$; this corresponds to a use of the object at its declassification interface. Then, the method invocation has type $S_2$ substituting $U'$ for $X$. On the other hand, rule (TmH) applies when $m$ is not in $U$, but it is in $T$; this corresponds to a use beyond declassification and should raise the security to high. This is why the result type is $\stype{\ssubst{T_2}{X}{U'}}{\top}$. This is all similar to the non-polymorphic rules (Figure~\ref{fig:obsec-syntax}), save for the type bounds check, and the type-level substitution.

\begin{figure}[t]
\begin{small}
\framebox{$\DeltaX; \Gamma |- e : S$}
\begin{mathpar}	
	\inference[(TVar)]{
				x \in dom(\Gamma)
			}{
				\DeltaX; \Gamma |- x: \Gamma(x)
			}\
  \inference[(TSub)]{
			\DeltaX;\Gamma |- e: S' \quad \subtps{\DeltaX;\bigcdot}{S'}{S}
    }{
      \DeltaX;\Gamma |- e: S
		}\\
  \inference[(TObj)]{
			S \triangleq \stype{T}{U} \quad 
			\tlookup{\_,T}{m_i}{\gmtype{X:A_i..B_i}{S'_i}{S''_i}} \\
			\DeltaX,X:A_i..B_i;\Gamma, z: S, x:S^{'}_i |- e_i : {S''_i}
    }{
      \DeltaX;\Gamma |- \objectx{z}{S}: S
    }
\end{mathpar}

\begin{mathpar}
  \inference[(TmD)]{
      \DeltaX;\Gamma |- e_1 : \stype{T}{U} & 
			\DeltaX |- m \in U \\ 
			\tlookup{\DeltaX;U}{m}{\gmtype{X:A..B}{S_1}{S_2}}	\\
			\DeltaX |- U' \in A..B
			&
			\DeltaX;\Gamma |- e_2 :  \ssubst{S_1}{U'}{X}
    }{
      \DeltaX;\Gamma |- \gminv{e_1}{m}{U'}{e_2}: \ssubst{S_2}{U'}{X}
    }		
\end{mathpar}
\begin{mathpar}
  \inference[(TmH)]{
      \DeltaX;\Gamma |- e_1 : \stype{T}{U} &
      \DeltaX |- m  \notin U \\
			\tlookup{\DeltaX;T}{m}{\gmtype{X:A..B}{S_1}{\stype{T_2}{U_2}}} \\
			\DeltaX |- U' \in A..B
			&
			\DeltaX;\Gamma |- e_2 : \ssubst{S_1}{U'}{X}
    }{
      \DeltaX;\Gamma |- \gminv{e_1}{m}{U'}{e_2}: \stype{\ssubst{T_2}{U'}{X}}{\top}			
			}
\end{mathpar}
\end{small}
\caption{\gobsec: Static semantics}
  \label{fig:gobsec-static-semantics}
\end{figure}

\subsection{Dynamic Semantics}
\label{sec:gobsec-dynamic-semantics}

The small-step dynamic semantics of \gobsec are standard, given in Figure~\ref{fig:gobsec-dynamic-semantics}. They rely 
on evaluation contexts and use the auxiliary function $\mathsf{methimpl}(o.m)$ to lookup a method implementation.
Note that types in general, and type variables in particular, do not play any role at runtime.

\begin{figure}[t]
\begin{small}
	\framebox{$\methimpl{o}{m}{x.e}$}
			\begin{mathpar}
				 \inference{		
					o \triangleq \objectx{z}{S}
				 }{
					 \methimpl{o}{m_i}{x.e_i}
				 }	
			\end{mathpar}
	\begin{mathpar}			
			\begin{array}{llll}
				E & ::= & \left[~\right] | E.m(e) |  v.m(E)& \text{(evaluation contexts)}
			\end{array}\\
      \inference[(EMInvO)]{
        o \triangleq \objectxx{z}{\_}{\_} \quad \methimpl{o}{m}{x.e}
      }{
        E[\gminv{o}{m}{\_}{v}] \reduce E[\mbsubst{e}{o}{z}{v}{x}]
      }			
	\end{mathpar} 
\end{small}
\caption{\gobsec: Dynamic semantics}
\label{fig:gobsec-dynamic-semantics}
\end{figure}

\subsection{Safety}
\label{sec:gobsec-safety}
We first define what it means for a closed expression $e$ to be \emph{safe}: an expression is safe if it evaluates to a value, or diverges without getting stuck.
\begin{restatable}[Safety]{definition}{gobsecSafe}
\label{def:gobsec-safe}
$\mathsf{safe}(e) \Longleftrightarrow \forall e'.~ e \reduce^{*} e' \implies e' = v~ or ~\exists e''.~ e' \reduce e''$
\end{restatable}
Well-typed \gobsec closed terms are safe.
\begin{restatable}[Syntactic type safety]{theorem}{gobsecTypeSafety}
\label{the:gobsec-type-safety}
$ |- e : S  \implies  \mathsf{safe}(e)$
\end{restatable}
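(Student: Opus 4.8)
The plan is to establish syntactic type safety by the standard syntactic method of progress and preservation. I would prove two lemmas. \emph{Progress}: if $\vdash e : S$ with $e$ closed, then either $e$ is a value or there is some $e'$ with $e \reduce e'$. \emph{Preservation} (subject reduction): if $\vdash e : S$ and $e \reduce e'$ then $\vdash e' : S$. Definition~\ref{def:gobsec-safe} then follows by a routine induction on the length of a reduction sequence $e \bred e'$: preservation keeps every reachable $e'$ well-typed, and progress guarantees such an $e'$ is either a value or can step further, so no reachable term is stuck.

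For progress I would first prove a canonical-forms / value-inversion lemma: a closed value $o$ with $\vdash o : \stype{T}{U}$ is an object whose ascribed safety type $T_0$ satisfies $\subtp{\bigcdot;\bigcdot}{T_0}{T}$, so that by width subtyping (SR) $o$ defines every method named in $T$; moreover, since well-formedness forces $\subtp{\bigcdot;\bigcdot}{T}{U}$, it defines every method named in $U$ as well. The proof threads through the subsumption rule (TSub) by induction on the typing derivation. With this in hand, progress is by induction on the typing derivation: (TVar) is vacuous for closed terms, (TObj) yields a value, and for a method invocation typed by (TmD) or (TmH), if the receiver or argument is not yet a value I step under the evaluation context ($E.m(e)$ or $v.m(E)$), while if both are values the canonical-forms lemma makes $\mathsf{methimpl}(o,m)$ defined---in both rules the invoked method lies in the safety type---so (EMInvO) fires.

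For preservation the only redex is method invocation, so the core is to type $E[\mbsubst{e}{o}{z}{v}{x}]$ at the type of $E[\gminv{o}{m}{U'}{v}]$. I would factor this through three supporting lemmas. First, a replacement lemma for evaluation contexts: a typing of $\inhole{E}{e_1}$ decomposes into a typing of the hole at some $S_1$ together with a typing of $\inhole{E}{e_1'}$ for any $e_1'$ of type $S_1$, reducing the problem to the bare redex. Second, a term-substitution lemma: substituting the receiver $o$ for the self variable $z$ and the argument $v$ for $x$ preserves the typing of the method body recorded by (TObj). Third, a type-substitution lemma: substituting the instantiation $U'$ (with $\DeltaX \vdash U' \in A..B$) for the bound variable $X$ preserves both subtyping and typing, which delivers the result types $\ssubst{S_2}{U'}{X}$ of (TmD) and $\ssubst{T_2}{U'}{X}$ of (TmH). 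The promotion of the declassification facet to $\top$ in (TmH) is discharged by (TSub) and (SST), since every declassification type is a subtype of the empty interface $\top$. Connecting the method-body typing, which (TObj) checks against the \emph{safety}-type signature, to the signature read off the \emph{declassification} type $U$ at the call site in (TmD) uses value-inversion to relate the object's actual faceted type to the ascribed one via (SObj)/(SR)/(SM)/(SST).

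I expect the type-substitution lemma for subtyping to be the main obstacle, exactly as in $\fsub$-style calculi with bounded quantification. The delicate cases are the variable rules (SGVar1) and (SGVar2): after substituting $U'$ for $X$, a subderivation that used $\subtp{\DeltaX;\GammaSub}{X}{B}$ must be rebuilt into one establishing $\subtp{\DeltaX;\GammaSub}{U'}{\ssubst{B}{U'}{X}}$, and this is precisely where the bound hypothesis $\DeltaX \vdash U' \in A..B$ (that is, $\subtp{\DeltaX;\bigcdot}{A}{U'}$ and $\subtp{\DeltaX;\bigcdot}{U'}{B}$) is consumed. Making this go through cleanly requires a companion narrowing lemma---typing and subtyping are preserved when a variable's bounds are replaced by tighter ones, needed because (SM) compares signatures under extended environments---and care that the acyclicity assumption on $\DeltaX$, already relied upon for the well-foundedness of $\ubound{\DeltaX}{U}$, keeps the induction well-founded. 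A secondary but routine technical point is the equi-recursive treatment of self type variables: inversion and substitution must commute with the folding/unfolding of $\alpha$ used in (SSubEq) and in the $\ssubst{M}{O}{\alpha}$ step of $\mathsf{msig}$.
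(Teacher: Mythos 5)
Your proposal is correct in outline but takes a genuinely different route from the paper. You propose the classic syntactic method---progress plus preservation, supported by canonical-forms, replacement, term-substitution, type-substitution and narrowing lemmas. The paper instead proves type safety \emph{semantically}: it defines a unary step-indexed logical predicate over the safety facet only ($\setvx{k}{T}$ and $\setcx{k}{T}$ in the appendix, where the declassification facet plays no role), defines semantic typing as $\models e : \stype{T}{U} \Longleftrightarrow \forall k \ge 0.~e \in \setcx{k}{T}$, shows that $\models e : S$ implies $\mathsf{safe}(e)$ directly from the definitions, and proves $\Gamma \vdash e : S \implies \Gamma \models e : S$ by induction on the typing derivation (a unary fundamental property, with the only new case over \obsec being (TPrim), discharged by assuming $\Theta$ and $\theta$ agree). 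The trade-off is instructive: the paper's route reuses the step-indexing infrastructure it must build anyway for PRNI and entirely sidesteps subject reduction, which is exactly where your plan concentrates its risk---the bounded type-variable substitution through (SGVar1)/(SGVar2), the narrowing lemma, and especially inversion in the presence of (STrans) and the equi-recursive (SSubEq) fold/unfold equivalence are the notoriously delicate parts of $\fsub$-style metatheory, and you correctly identify them as the main obstacles. Your route, if carried through, would yield a stronger and more informative preservation statement than the paper establishes, but at the cost of a substantially larger battery of syntactic lemmas; the paper's choice is the more economical one given that the binary logical relation for security already demands the same step-indexed machinery.
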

But of course, type safety is far from sufficient; we want to make sure that
well-typed \gobsec terms are {\em secure}. To this end, the next section formalizes the precise notion of security we consider in \gobsec, and proves that it is implied by typing.

\section{Polymorphic Relaxed Noninterference, Formally}
\label{sec:gobsec-trni}

We now formally define the security property of \emph{polymorphic type-based relaxed noninterference} (PRNI), and prove that
 the \gobsec type system soundly
enforces PRNI.

\subsection{Logical Relation}
\label{sec:gobsec-security-definitions}

We define how values, terms and environments are related 
through a step-indexed logical relation~\cite{ahmed:esop2006} 
(Figure~\ref{fig:gobsec-logical-relation}). 
Step-indexing is needed to ensure that the logical relation is
well-founded in presence of recursive object types.

The main novelty of this logical relation with respect that of \obsec is 
that it needs to give an interpretation to polymorphic security types of the form $\stype{T}{X}$. We do this by quantifying over all possible actual types $U$ for $X$ and interpreting $\stype{T}{U}$. The interpretation of a security type is expressed as sets of \emph{atoms} of the form $(k,e_1,e_2)$, where $k$ is a {\em step index} meaning that $e_1$ and $e_2$ are related
for $k$ steps. 

The definition also appeals to a {\em simple} typing judgment ${\stypeof{\Gamma}{e}{T}}$, which disregards the declassification types, and is therefore standard (Appendix~\ref{sec:gobsec-simple-ts-appendix}). 
We have that 
$\DeltaX;\Gamma |- e: \stype{T}{U} => \stypeof{\Gamma}{e}{T}$.
The use of this simple type system in the logical relation 
clearly separates the definitions of security from its
static enforcement by the type system
of \S\ref{sec:gobsec-static-semantics}~\cite{cruzAl:ecoop2017}.

The logical relation uses several auxiliary definitions. 
$\atomone{n}{T}$ requires $e_1$ and $e_2$ to be \emph{simply well-typed} expressions of type $T$ and the index $k$ to be strictly less than $n$. 
$\atomvalone{n}{T}$ restricts $\atomone{n}{T}$ to values.
$\atomunion{T}$ are atoms of simply well-typed expressions of type $T$ 
(\ie for {\em any} step-index $k$).

The definition of $\nsetv{\stype{T}{O}}$ relates two objects $o_1, o_2$ for $k$ steps if for 
any method $m \in O$ and with signature $\gmtype{X:A..B}{S'}{S''}$ and $j<k$, given related arguments for $j$ steps at $S'$,
invocations of $m$ produce related results for $j$ steps at $S''$.
More specifically, given {\em any} actual 
type $T'$ that satisfies the bounds of the type parameter $X$ (i.e., $T' \in A..B$)
and given related arguments in 
$\nsetv{\ssubst{S'}{T'}{X}}$ we must obtain related computations in $\nsetv{\ssubst{S''}{T'}{X}}$.

The relational interpretation of expressions $\nsetc{\stype{T}{U}}$ relates atoms of the form $(k,e_1,e_2)$ that satisfy
that for all $j < k$, if both  expressions $e_1$ and $e_2$ reduce to values $v_1$ and $v_2$ in at most $k$ steps 
then $v_1$ and $v_2$ must be equivalent for the remaining $k-j$ steps. 
This definition is termination-insensitive: if 
one expression does not terminate in less that $k$ steps,
then the two expressions are trivially equivalent.

Type environments have standard interpretations. 
$\nsetg{\Gamma}$ relates value substitutions $\gamma$, \ie
mappings from variables to closed values, as triples of the form $(k,\gamma_1,\gamma_2)$, where 
$\gamma_1$ and $\gamma_2$ are related if they have the same variables as $\Gamma$, and for any variable $x$, the associated values 
are related for $k$ steps at type $\Gamma(x)$.
Finally, a type substitution $\sigma$, \ie a mapping from type variables to closed types, {\em satisfies} a type variable environment $\DeltaX$, noted $\nsetd{\DeltaX}$, if it 
has the same type variables that $\DeltaX$ and the mapped type $T$ is within the type variable bounds.

\begin{figure*}[t]
	\begin{small}
	\begin{displaymath}
		\begin{array}{lcl}
			\atomone{n}{T} & = & \{(k,e_1,e_2) | k < n ~\wedge ~|-_1 e_1: T~\wedge~|-_1 e_2:T\} \\
			\atomvalone{n}{T} & = & \{(k,v_1,v_2) \in \atomone{n}{T} \} \\
			\atomunion{T} & = & \{ (k,e_1,e_2) \in \bigcup\limits_{n \ge 0} \atomone{n}{T} \}\\
			\nsetv{\stype{T}{O}}  &  = &
			\begin{block}
				\{ (k, v_1, v_2) \in \atomunion{T} | 
				\\
				(\forall m \in O . \quad
					\tlookup{O}{m}{\gmtype{X:A..B}{S'}{S''}}					
					\\
					\quad \forall j < k,\Gbox{T'}, v_1',v_2'. \quad \Gbox{|- T' ~\wedge~ T' \in A..B} ~\wedge~ \\
					\quad (j,v_1,v_2) \in \nsetv{\stype{T}{O}} ~\wedge~ (j, v_1', v_2') \in \nsetv{\ssubst{S'}{T'}{X}} \implies
					\\					 
					\quad \quad (j, \gminv{v_1}{m}{\_}{v'_1}, \gminv{v_2}{m}{\_}{v'_2}) \in \nsetc{\ssubst{S''}{T'}{X}} )
				\}
			\end{block} \\
			\nsetc{\stype{T}{U}} &  =  & 
			\begin{block}			 
				\{ (k, e_1, e_2) \in \atomunion{T}|~
					(\forall j < k.
					 (e_1 \reduce^{\le j} v_1~\wedge~e_2 \reduce^{\le j} v_2 )
					\implies (k-j,v_1,v_2) \in \nsetv{\stype{T}{U}})
				\}
			\end{block} \\
			\nsetg{\cdot} & = & \{(k,\emptyset,\emptyset)\} \\
			\nsetg{\Gamma;x:S} & = &
				\begin{block}					
					\{(k,\gamma_1\left[x \mapsto v_1 \right],\gamma_2\left[x \mapsto v_2 \right])|~
					(k,\gamma_1,\gamma_2) \in \nsetg{\Gamma}
					~\wedge
					~ (k,v_1, v_2) \in \nsetv{S}
					\} 
				\end{block} \\
			\nsetd{\cdot} & = & \{\emptyset\} \\
			\nsetd{\DeltaX;X:A..B} & = & 
			\begin{block}
				\{\sigma\left[X \mapsto T \right] |
					~\sigma \in \nsetd{\DeltaX} \wedge \DeltaX |- T \in A..B\} \\
			\end{block}
    \end{array}		
  \end{displaymath}	
	\end{small}
 \caption{\gobsec Step-indexed logical relation for type-based equivalence}
  \label{fig:gobsec-logical-relation}
\end{figure*}

\subsection{Defining Polymorphic Relaxed Noninterference}
Having defined the logical relation, we can now formally define PRNI. 
As standard, noninterference properties allow modular reasoning about open terms with respect to (term-level) variables. For PRNI, we extend this modular reasoning principle to open terms with respect to {\em type} variables.
Then, a simply well-typed expression $e$ under $\DeltaX$ and a well-formed $\Gamma$ satisfies PRNI at well-formed type $S$,
written $\gtrni{\DeltaX}{\Gamma}{e}{S}$, if for any type substitution $\sigma$ that satisfies $\DeltaX$ and 
two value substitutions $\gamma_1$ and $\gamma_2$ in the relational interpretation of $\sigma(\Gamma)$, applying the two 
value substitutions to the expression $e$ produces equivalent expressions at type $\sigma(S)$.  
As usual, the definition quantifies universally on the step index $k$.
We need only consider a single type substitution $\sigma$; indeed, type variables happen only in declassification types, which express the observation power of the public observer. Therefore, for each security type of the form $\stype{T}{X}$ we only need to consider {\em one} actual type $U$ within the bounds of $X$ to pick the observation power of the public observer. The substitution $\sigma$ captures all these choices.

\begin{definition}[Polymorphic relaxed noninterference]
\label{def:gobsec-prni}
\begin{displaymath}			
		\begin{array}{l}
			\gtrni{\DeltaX}{\Gamma}{e}{S} \Longleftrightarrow \\ 
			\quad S \triangleq \stype{T}{U} \quad \stypeof{\Gamma}{e}{T}~\wedge~ \wfex{\DeltaX}{\Gamma}~\wedge~ \DeltaX |- S~\wedge \\
			\quad \forall k \ge 0 .~\forall \sigma, \gamma_1,\gamma_2.
			~\sigma \in \nsetd{\DeltaX}. 
			~ (k,\gamma_1,\gamma_2) \in \nsetg{\sigma(\Gamma)} 
			\\
			\quad \implies (k,\sigma(\gamma_1(e)),\sigma(\gamma_2(e))) \in \nsetc{\sigma(S)}
		\end{array}
\end{displaymath}
\end{definition}

This definition captures the intuitive security notion that an expression is secure if it produces indistinguishable outputs 
up to the declassification power of the public observer (specified by $S$), when linked with indistinguishable inputs up to their declassification (specified by $\Gamma$).

\subsection{Security Type Soundness}

To establish that all well-typed \gobsec terms satisfy PRNI, we first introduce a notion of logically-related open terms, and prove 
that if an expression is related to itself, then it satisfies PRNI. We then prove the fundamental property of the logical relation, which 
states that well-typed terms are logically-related to themselves.

Two open expressions $e_1$ and $e_2$ are logically related at type $S$ in environments $\DeltaX$ and $\Gamma$ if, given a 
type substitution $\sigma$ satisfying $\DeltaX$ and value substitutions $\gamma_1$ and $\gamma_2$ in the relational interpretation of $\sigma(\Gamma)$, closing 
these expressions with the given substitutions produces related expressions related at type $\sigma(S)$.

\begin{definition}[Logical relatedness of open terms]
\label{def:gobsec-expression-equivalence}
\begin{displaymath}
	\begin{array}{l}
		\DeltaX ;\Gamma |- e_1 \approx e_2 : S   \Longleftrightarrow \\
		\quad \DeltaX;\Gamma |- e_i: S~\wedge~\wfex{\DeltaX}{\Gamma}~\wedge~ \DeltaX |- S~\wedge \\
		\quad \forall k \ge 0 .~\forall \sigma, \gamma_1,\gamma_2.
		~\sigma \in \nsetd{\DeltaX}. \\
		\quad \quad (k,\gamma_1, \gamma_2) \in \nsetg{\sigma(\Gamma)} \\
		\quad \quad \implies (k, \sigma(\gamma_1(e_1)), \sigma(\gamma_2(e_2)) \in \nsetc{\sigma(S)}
	\end{array}
\end{displaymath}
\end{definition}

Trivially, if an expression is logically related to itself, then it satisfies PRNI.

\begin{restatable}[Self logical relation implies PRNI]{lemma}{fpImpliesPRNI}
\label{def:logapprox-trni}
\mbox{}
\\
$\DeltaX;\Gamma |- e \approx e : S \implies \gtrni{\DeltaX}{\Gamma}{e}{S}$
\end{restatable}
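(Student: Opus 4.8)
The plan is to unfold both definitions and observe that they coincide on the quantified, step-indexed core, so that the only work is to reconcile their static typing premises. Instantiating Definition~\ref{def:gobsec-expression-equivalence} with $e_1 = e_2 = e$, the hypothesis $\DeltaX;\Gamma |- e \approx e : S$ provides three static premises ($\DeltaX;\Gamma |- e : S$, $\wfex{\DeltaX}{\Gamma}$, and $\DeltaX |- S$) together with the universally-quantified closing condition: for every $k \ge 0$, every $\sigma \in \nsetd{\DeltaX}$, and every $(k,\gamma_1,\gamma_2) \in \nsetg{\sigma(\Gamma)}$, one has $(k, \sigma(\gamma_1(e)), \sigma(\gamma_2(e))) \in \nsetc{\sigma(S)}$. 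Definition~\ref{def:gobsec-prni} of $\gtrni{\DeltaX}{\Gamma}{e}{S}$ requires exactly this same closing condition, verbatim, so that part transfers immediately without any additional argument.

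It then remains to assemble the static premises of PRNI from those of self logical relatedness. The well-formedness obligations $\wfex{\DeltaX}{\Gamma}$ and $\DeltaX |- S$ appear identically in both definitions and carry over directly. The only genuine gap is that logical relatedness supplies the full typing judgment $\DeltaX;\Gamma |- e : S$, whereas PRNI instead asks for $S$ to be decomposed as $\stype{T}{U}$ together with the simple typing judgment $\stypeof{\Gamma}{e}{T}$. Since $S$ is a security type, its syntactic shape $\stype{T}{U}$ is immediate. To obtain the simple judgment I appeal to the soundness of the full type system with respect to the simple one stated in the excerpt, namely $\DeltaX;\Gamma |- e : \stype{T}{U} \implies \stypeof{\Gamma}{e}{T}$, which yields $\stypeof{\Gamma}{e}{T}$ directly from the hypothesis.

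Having collected all four components required by Definition~\ref{def:gobsec-prni}, I conclude $\gtrni{\DeltaX}{\Gamma}{e}{S}$. I do not expect any real obstacle: the result is essentially a definitional unpacking, and the single non-trivial ingredient---bridging the full and simple typing judgments---is precisely the implication already recorded in the excerpt. The lemma is primarily organizational, cleanly reducing the proof that every well-typed term satisfies PRNI to the substantially harder fundamental property of the logical relation (every well-typed term is self logically related), which is where the actual effort will lie.
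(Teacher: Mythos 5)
Your proposal is correct and matches the paper's own proof: both observe that the two definitions share the same quantified step-indexed closing condition verbatim, and that the only gap is bridging the full security typing judgment to the simple typing judgment via the implication $\DeltaX;\Gamma |- e: \stype{T}{U} \implies \stypeof{\Gamma}{e}{T}$ (Lemma~\ref{lm:gobsec-security-ts-implies-simple-ts} in the appendix). Nothing is missing.
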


We then turn to proving that all well-typed terms are logically-related to themselves, \ie the fundamental property of the logical relation. 

\begin{restatable}[Fundamental property]{theorem}{gfprni}
\label{lm:gobsec-fp}
\mbox{}
\\
$\DeltaX;\Gamma \vdash e : S \implies \DeltaX;\Gamma |- e \approx e : S$
\end{restatable}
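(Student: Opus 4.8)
The plan is to prove the fundamental property by induction on the typing derivation $\DeltaX;\Gamma |- e : S$, following the standard recipe of establishing one \emph{compatibility lemma} per typing construct and then reading off the result: each compatibility lemma states that if the immediate subterms are logically related to themselves at their respective types, then so is the compound term, and the fundamental property follows because the induction hypothesis supplies exactly the required relatedness of the subterms. Before the main induction I would establish three auxiliary facts about the logical relation: (i) \emph{downward closure} in the step index, i.e. $(k,v_1,v_2) \in \nsetv{S}$ and $j \le k$ imply $(j,v_1,v_2) \in \nsetv{S}$, and similarly for $\nsetc{S}$ and $\nsetg{\Gamma}$, which is routine for step-indexed relations; (ii) \emph{soundness of subtyping}, namely $\subtp{\DeltaX;\bigcdot}{S'}{S}$ together with $\sigma \in \nsetd{\DeltaX}$ imply $\nsetv{\sigma(S')} \subseteq \nsetv{\sigma(S)}$; and (iii) a \emph{compositionality} fact, the substitution-composition identity $\sigma(\ssubst{S}{U'}{X}) = (\sigma[X \mapsto \sigma(U')])(S)$ on closed types, which lets us treat the result type $\ssubst{S_2}{U'}{X}$ of a polymorphic invocation as the image of $S_2$ under an extended type substitution. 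Fact (ii) discharges the (TSub) case outright, and fact (iii) is what bridges the syntactic type substitutions in the typing rules and the per-type quantification inside the relation.

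The base and invocation cases are then direct. For (TVar) with $e = x$, given $(k,\gamma_1,\gamma_2) \in \nsetg{\sigma(\Gamma)}$ the definition of $\nsetg{\cdot}$ yields $(k,\gamma_1(x),\gamma_2(x)) \in \nsetv{\sigma(\Gamma(x))}$, and since a value reduces to itself in zero steps this lands immediately in $\nsetc{\sigma(S)}$. The (TSub) case is closed by fact (ii). For the two method-invocation rules I would prove a single compatibility lemma for $\gminv{e_1}{m}{U'}{e_2}$: unfold the goal $\nsetc{\cdot}$, assume both sides reduce to values within $j < k$ steps, use the induction hypotheses on $e_1$ and $e_2$ to obtain related receiver objects in $\nsetv{\stype{T}{U}}$ and related arguments in $\nsetv{\ssubst{S_1}{U'}{X}}$, instantiate the object relation's quantifier with the actual type $U'$ (legitimate because $\DeltaX |- U' \in A..B$ holds as a side condition of the rule), and conclude that the two invocations are related at $\ssubst{S_2}{U'}{X}$, with fact (iii) handling the substituted result type. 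The (TmH) case differs only in that $m$ is looked up in the safety type and the result is protected at $\top$, the empty interface, so relatedness is trivial.

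The heart of the argument, and what I expect to be the main obstacle, is the object case (TObj). Here we must show $(k, \sigma(\gamma_1(o)), \sigma(\gamma_2(o))) \in \nsetv{\sigma(S)}$ for $o = \objectx{z}{S}$; by the definition of $\nsetv{\stype{T}{O}}$ this requires, for every declassified method $m$ with signature $\gmtype{X:A..B}{S'}{S''}$, every $j < k$, every actual type $T' \in A..B$, and every pair of related arguments at $\ssubst{S'}{T'}{X}$, that the two invocations of $m$ be related computations at $\ssubst{S''}{T'}{X}$. I would organize this as an induction on the step index $k$. After one reduction step each invocation substitutes the receiver for $z$ and the argument for $x$, so the body $e_m$ must be run in environments extended with $z:S$ and $x:S'$. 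I extend the type substitution to $\sigma[X \mapsto T']$, which lies in $\nsetd{\DeltaX, X:A..B}$ precisely because $T' \in A..B$, and extend the value substitutions with the related arguments and with the receiver bound to $z$. The outer typing induction hypothesis, applied to the body (well-typed under $\DeltaX, X:A..B;\Gamma, z:S, x:S'$), then yields relatedness of the two bodies at $S''$, which after fact (iii) is exactly the required conclusion.

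The subtle point, and the reason for the inner step-index induction, is discharging the self-binding $z:S$: to place the receiver into $\nsetg{\cdot}$ at $z$ we must already know $(j, \sigma(\gamma_1(o)), \sigma(\gamma_2(o))) \in \nsetv{\sigma(S)}$ at the strictly smaller index $j < k$, which is the very statement being proved. Step-indexing makes this well-founded: the inner induction supplies relatedness of the object to itself at all indices below $k$, so the self-referential obligation is met without circularity, and this is what ultimately tames the recursive object types. The second source of friction is pervasive bookkeeping: keeping the two levels of type variables straight---the ambient $\DeltaX$ handled once and for all by $\sigma$, versus each method's locally bound polymorphic variable handled by the per-invocation quantification over $T'$---and ensuring that fact (iii) commutes the substitutions $\ssubst{\cdot}{T'}{X}$ with the interpretation in every case. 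Once these auxiliary lemmas are in place, the induction on the typing derivation is mechanical, each case dispatched by its compatibility lemma.
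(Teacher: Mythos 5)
Your proposal is correct and follows the same overall architecture as the paper: induction on the typing derivation, one compatibility lemma per rule, supported by monotonicity (the paper's downward-closure lemma), soundness of subtyping for the relation (the paper's PER-subtyping lemma), and the substitution-composition identity $\sigma(\ssubst{S}{U'}{X}) = \ssubst{\sigma(S)}{\sigma(U')}{X}$, which the paper uses exactly where you predict, in the method-invocation cases. The one place you genuinely diverge is the object case. You propose an inner induction on the step index to discharge the self-binding $z:S$, re-deriving relatedness of the object to itself at all indices below $k$. The paper instead builds this hypothesis directly into the definition of $\nsetv{\stype{T}{O}}$: the implication guarding each method's behaviour has $(j,v_1,v_2) \in \nsetv{\stype{T}{O}}$ for $j<k$ as an explicit antecedent, so the compatibility lemma for (TObj) simply \emph{assumes} the receiver pair is related at index $j$ and feeds it into the extended value substitution for $z$ --- no inner induction required. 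Both resolutions are sound and well-founded for the same reason ($j<k$); yours is the more standard move when the relation is stated without the built-in self-hypothesis, while the paper's buys a shorter compatibility proof at the cost of a slightly heavier relation definition. Two smaller cosmetic differences: the paper factors your inline ``both sides reduce to values, then continue'' reasoning into separate anti-reduction and monadic-bind lemmas, and it keeps (TmD) and (TmH) as two compatibility lemmas rather than one, handling (TmH) by a dedicated ``well-typed terms are related at $\top$'' lemma, which is the same observation you make about relatedness at the empty interface being trivial.
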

\begin{proof}
The proof is by induction on the typing derivation of $e$. We use the common approach of defining compatibility lemmas for each typing rule~\cite{ahmed:esop2006}. Each case follows from the corresponding compatibility lemma. 
\end{proof}

Security type soundness follows directly from Theorem~\ref{lm:gobsec-fp} and Lemma~\ref{def:logapprox-trni}.

\begin{restatable}[Security type soundness]{theorem}{gsound}
\label{the:gsound}
\mbox{}
\\
$\DeltaX;\Gamma \vdash e : S \implies \gtrni{\DeltaX}{\Gamma}{e}{S}$
\end{restatable}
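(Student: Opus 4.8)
The plan is to obtain this theorem as an immediate composition of the two results established just above, so that essentially no new reasoning is required at this point. Given the hypothesis $\DeltaX;\Gamma \vdash e : S$, the goal is to establish $\gtrni{\DeltaX}{\Gamma}{e}{S}$, and I would do so by routing the typing derivation through the logical relation: first lift typing to self-relatedness, then convert self-relatedness to the security property.

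Concretely, the first step is to invoke the Fundamental property (Theorem~\ref{lm:gobsec-fp}) on the typing hypothesis $\DeltaX;\Gamma \vdash e : S$, which yields the self-relatedness fact $\DeltaX;\Gamma |- e \approx e : S$. The second step is to apply the self-logical-relation-implies-PRNI result (Lemma~\ref{def:logapprox-trni}) to this fact, which directly delivers $\gtrni{\DeltaX}{\Gamma}{e}{S}$. Chaining these two implications closes the proof. There are no side conditions left to discharge beyond those already packaged inside the two statements, namely the well-formedness of $\Gamma$ and of $S$ and the simple typing $\stypeof{\Gamma}{e}{T}$ for $S \triangleq \stype{T}{U}$, all of which are consequences of the well-typedness of $e$ (using the noted implication that $\DeltaX;\Gamma |- e : \stype{T}{U}$ entails $\stypeof{\Gamma}{e}{T}$).

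The genuinely hard part is not in this final two-line composition but in the Fundamental property it relies on. That proof proceeds by induction on the typing derivation, establishing one compatibility lemma per typing rule; the delicate cases are the polymorphic method-invocation rules (TmD) and (TmH), where the type-level substitution $\ssubst{S_2}{U'}{X}$ must be reconciled with the universal quantification over all bound-respecting actual types $T'$ built into the definition of $\nsetv{\stype{T}{O}}$. One must show that instantiating the relation's universally chosen $T'$ at the client's concrete $U'$ is sound, and that the step index decreases appropriately so that well-foundedness is preserved under the recursive object types. Since Theorem~\ref{lm:gobsec-fp} is assumed here, the security type soundness theorem itself is a direct corollary requiring only the argument sketched above.
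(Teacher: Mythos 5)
Your proof is correct and matches the paper's own argument exactly: the theorem is obtained by composing the Fundamental property (Theorem~\ref{lm:gobsec-fp}) with the self-logical-relation-implies-PRNI lemma (Lemma~\ref{def:logapprox-trni}), which is precisely what the paper does. Your additional remarks about where the real work lies (the compatibility lemmas for the polymorphic invocation rules) are accurate but not needed for this particular step.
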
%

Having proven that well-typed \gobsec programs are secure, we are almost ready to revisit the examples of Section~\ref{sec:gobsec-overview} to illustrate PRNI. We must first address the case of primitive types, discussed next.

\section{Ad-hoc Polymorphism for Primitive Types}
\label{sec:gobsec-primitive-types}

Both \obsec~\cite{cruzAl:ecoop2017} and \gobsec (so far) ignore the case of primitive types, such as integers and strings. However, in an object-oriented language, 
primitive types are both necessary and challenging from a security viewpoint. In particular, integrating type-based declassification with faceted types requires appealing to a form of {\em ad hoc} polymorphism.

\subsection{The Need and Challenge of Primitive Types}
\label{sec:gobsec-pt-overview}

In a pure object-oriented calculus (as in a pure functional calculus) without primitive types, the only real observation that can be made on programs is termination. A termination-insensitive notion of noninterference is therefore useless in a pure setting: one needs some primitive types with a purely syntactic notion of equality. Indeed, all the examples we presented in earlier sections assume a syntactic notion of observation for strings and integers.

Introducing primitive types calls for some form of label polymorphism. Indeed, we do not want to fix the security level of primitive operations, as this would be either impractical for the public observer (if all security labels were high) or for the secret observer (if all security labels were low).
This is why practical security-typed languages like FlowCaml~\cite{pottierSimonet:toplas2003} and Jif~\cite{myers:jif} use 
label-polymorphic primitive operators, specifying that the return label is the {\em least upper bound} of the argument labels. For instance, 
a binary integer operator would have type $\forall \ell_1, \ell_2. \Int_{\ell_1} \times \Int_{\ell_2} -> \Int_{\ell_1 \join \ell_2}$.
In a monomorphic security language, the same principle is hardcoded in the typing rules for primitive operators~\cite{zdancewic}.

Unfortunately this approach does not work with labels-as-types, even in a label-monomorphic setting. Indeed,
because labels are types, returning the join of the argument security labels means computing the {\em subtyping join} (denoted $\join_{<:}$ hereafter) of the declassification types. 
This is both impractical, incorrect, and potentially unsound from a security viewpoint:
\begin{itemize}
	\item {\em Impractical.} Consider a function of type $\stype{\Bool}{X_1} \times \stype{\Int}{X_2} -> \stype{\Bool}{(X_1 \join_{<:} X_2)}$. Given two public arguments (\ie~$X_1=
\Bool$, $X_2=\Int$), then assuming $\Bool \join_{<:} \Int = \top$, 
the result is necessarily private. While sound, this is way too conservative; it is impractical for primitive operations to always return private values even when given public inputs.
	\item {\em Incorrect.} Consider an integer comparison of type $\stype{\Int}{X_1} \times \stype{\Int}{X_2} -> \stype{\Bool}{(X_1 \join_{<:} X_2)}$. 
If we instantiate this signature with $\Int$ and $\Int$ we obtain an {\em ill-formed} return type, $\stype{\Bool}{\Int}$.
	\item {\em Insecure.} Consider a unary integer operator $\stype{\Int}{X} -> \stype{\Int}{X}$; this signature is not sound security-wise for all unary integer operators. Take an operator that trims the most-significant bit of its argument. If one declassifies only the parity of the argument, two equivalent inputs will not always yield two equivalent outputs (as the parity of the trimmed values might differ).
\end{itemize}

\subsection{Sound Signatures for Primitive Types}
\label{sec:gobsec-primitive-signatures}

The observations above reveal one of the flip sides of expressive declassification policies: because declassification is captured semantically, declassification polymorphism is a very strong notion compared to standard label polymorphism. In the general case, without appealing to intricate semantic conditions, 
there are therefore only two simple syntactic principles to define sound signatures for primitive operators:\footnote{The syntactic principles (P1) and (P2) are formally justified 
by the proof of Lemma~\ref{lm:syntactic-implies-semantic}, discussed in Section~\ref{fig:gobsec-pt-dynamic-semantics}.}

\begin{enumerate}
	\item[(P1)] if every argument is public (\eg $\String_{\lbot}$) then the return type can be public.
	\item[(P2)] if any argument is not public (\eg $\stype{\String}{\StringFirst}$) then the return type must be secret (\eg~$\String_{\ltop}$).
\end{enumerate}

As typical, we provide an object-oriented interface for primitive types (\eg$a+b$ is $a.+(b)$ as in Scala for instance). Therefore the principles above must be extended to account for the status of the {\em receiver} object:
if the primitive method invoked on the primitive value is part of its declassification type, then it is considered a public ``argument''; otherwise, it is private.

Note that, without any form of polymorphism, \ie~picking a single syntactic principle above, primitive types would be impractical. Duplicating all definitions to offer both options is also not a viable approach. 

\subsection{Polymorphic Primitive Signatures}

To support the two syntactic principles exposed above, we propose to use \emph{ad-hoc} polymorphism (akin to overloading) for primitive types $P$ in \gobsec.
We introduce \emph{polymorphic primitive signatures}, written $\stype{\primt}{\ilab} -> \stype{\primt}{\ilab}$. A primitive security type $\stype{\primt}{\ilab}$ is resolved polymorphically {\em at use site}, following principles (P1) and (P2) above.
Object-oriented interfaces for primitive types are exclusively composed of polymorphic primitive signatures. For instance, in \gobsec strings are primitives, declared by the following $\String$ type: 
\begin{displaymath}
\begin{array}{rcl}
		\String & \defas & 
		\begin{block}
			  \lbrack~
				\mathsf{concat}:\stype{\String}{\ilab} -> \stype{\String}{\ilab},\\
				\mathsf{first}: \stype{\Unit}{\ilab} \rightarrow \stype{\String}{\ilab}, \\
				\mathsf{length}: \stype{\Unit}{\ilab} \rightarrow \stype{\Int}{\ilab}, \\
				\mathsf{eq}: \stype{\String}{\ilab}  \rightarrow \stype{\Bool}{\ilab}, \\
				\cdots
				\rbrack
		\end{block}
\end{array}
\end{displaymath}

To illustrate, assume $\mathsf{a}:\stypebot{\String}$, $\mathsf{b}:\stypebot{\String}$ and $\mathsf{c} :\stypetop{\String}$. 
Then $\mathsf{a.eq(b)}$ has type $\stypebot{\Bool}$, while $\mathsf{a.eq(c)}$ has type $\stypetop{\Bool}$.

Primitive types can also be subject to declassification policies. For instance, consider:
\[\StringEqPoly \defas \lbrack \mathsf{eq}: \stype{\String}{\ilab}  \rightarrow \stype{\Bool}{\ilab} \rbrack
\]

\noindent and $\mathsf{d}:\stype{\String}{\StringEqPoly}$. Then $\mathsf{d.eq(b)}$ has type $\stypebot{\Bool}$, while $\mathsf{d.concat(a)}:\stypetop{\String}$.

Furthermore, one can use any type signature in a declassification policy for a primitive type, as long as it is sound. For instance, 
$\StringEqL \defas \lbrack \mathsf{eq} : \stypebot{\String} -> \stypebot{\Bool} \rbrack$ respects principle 1). Conversely, $\StringEqBad \defas \lbrack \mathsf{eq} : \stypetop{\String} -> \stypebot{\Bool} \rbrack$ cannot be used as it would violate the soundness principles above (in \gobsec, $\stype{\String}{\StringEqBad}$ is ill-formed).

\subsection{Formal Semantics}
We now formalize the treatment of primitive types in \gobsec. Figure~\ref{fig:gobsec-extend-pt} presents the extended syntax to support primitive values $\primb$ 
and primitive types $\primt$. Expression $e.m(e)$ is for method invocation on primitives; as explained previously, primitive types expose an object-oriented interface, so $a + b$ is $a.+(b)$. We extend the category $S$ with security types of the form $\stype{\primt}{\ilab}$ and 
introduce a new category $I$, for primitive signatures $\stype{\primt}{\ilab} -> \stype{\primt}{\ilab}$.  
The security type $\stype{\primt}{\ilab}$ can be used for standard signatures $\gmtype{X:A..B}{S_1}{S_2}$ as well.

\begin{figure}[t]
  \begin{small}		
		\begin{displaymath}
			\begin{array}{rcll}
				e & ::= & \cdots | e.m(e) & \text{(terms)}\\
				v & ::= & \cdots | \primb & \text{(values)}\\
				T & ::= & \cdots | \primt &  \text{(types)}\\				
				M & ::= & \cdots | I & \text{(method signatures)}\\
				S & ::= & \cdots | \stype{\primt}{\ilab} &  \text{(security types)}\\								
				I & ::= & \stype{\primt}{\ilab} -> \stype{\primt}{\ilab} &  \text{(primitive signatures)}\\								
				\primt & ::= & \text{(eg. Int, String)} & \text{(primitive types)}\\
				\GammaSub & ::= & \cdots | \GammaSub, \primt <: \beta & \text{(subtyping environments)}\\
			\end{array}
		\end{displaymath}
 \end{small}
 \caption{\gobsec: Extended syntax for primitive types}
  \label{fig:gobsec-extend-pt}
\end{figure}

\begin{figure}[t]
\begin{small}
\framebox{$\subtp{\DeltaX; \GammaSub}{U_1}{U_2}$}
 \begin{mathpar}
		
		\cdots \and \inference[(SPObj)]{
      \methods(\primt) = R_1 \quad O \triangleq \rtypex{\beta}{R_2} \\			
			\subtp{\DeltaX;\GammaSub, \primt <: \beta}{R_1}{R_2} \\
    }{
       \subtp{\DeltaX;\GammaSub}{\primt}{O}
    }\\
    \inference[(SPVar)]{
      \primt <: \beta \in \GammaSub
    }{
      \subtp{\DeltaX;\GammaSub}{\primt}{\beta}
    }
\end{mathpar}
\framebox{$\subtp{\DeltaX; \GammaSub}{M_1}{M_2}$}
 \begin{mathpar}
		\cdots \quad
		\inference[(SImpl)]{
    }{
       \subtp{\DeltaX;\GammaSub}{I}{I}
    }
\end{mathpar}

\end{small}
\caption{\gobsec: Subtyping rules for primitive types} 
\label{fig:gobsec-pt-subtyping}
\end{figure}

The changes to subtyping are in Figure~\ref{fig:gobsec-pt-subtyping}. Now, subtyping assumptions can be also made 
between a primitive type and a self type variable, \ie ${\GammaSub ::= \bigcdot | \GammaSub, \alpha <: \beta | \GammaSub, \primt <: \beta}$, and function  $\methods$ returns the methods of a primitive type.
Rule (SPObj) justifies subtyping between a primitive type and an object type, and it is very similar to rule (SObj) of 
Figure~\ref{fig:gobsec-subtyping} for object types. Rule (SPVar) accounts for subtyping between primitive types and type variables 
and it holds if such subtyping relation exists in the subtyping environment $\GammaSub$. Note that there is no rule for subtyping between an object type and a primitive type, because this would not be sound. 
Rule (SImpl) accounts for subtyping between the same primitive signature. There is no rule to justify subtyping between a primitive 
signature and a standard signature.

As we discussed at the end of Section~\ref{sec:gobsec-primitive-signatures}, we need an extra condition for the well-formedness of security types to ensure sound declassification. Given the type $\stype{T}{U}$, if $T$ 
has a method $m:I$, the method signature of $m$ in $U$ must be either  the same primitive signature $I$,  
or a normal signature that is \emph{sound}. We use the predicate $\mathsf{soundsig}$ to express that signature
$\gmtype{X:A..B}{S_1}{S_2}$ is sound, which must satisfy that either the argument type is public, or the return type is private:
\begin{displaymath}
		\begin{array}{lcl}
			\mathsf{soundsig}(\gmtype{\_}{\stype{\primt}{U_1}}{\stype{T_2}{U_2}}) & <=> & U_1=\primt \vee  U_2 = \top
		\end{array}
\end{displaymath}

Figure ~\ref{fig:gobsec-pt-static-semantics} presents the extension to the typing rules of \gobsec. Rule (TPrim) justifies typing for primitive values, using a function $\Theta$ that specifies each primitive type.
The new typing rules (TPmD) and (TPmH) realize ad hoc polymorphism for primitive types.
Rule (TPmD) is key: it applies when $m$ is in the declassification type $U$, and uses the function $\rpolicy$ to calculate the declassification type of the return type, based on the type of the argument: if the argument is public, so is the returned value.
Rule (TPmH) applies when $m$ is not in the declassification type $U$, and similarly to (TmH), ensures that the returned value is private.

\begin{figure}[t]
\begin{small}
	\framebox{$\DeltaX; \Gamma |- e : S$}
	\begin{mathpar}
		\cdots \quad 
		\inference[(TPrim)]{
			 \primt = \Theta(\primb)
			}{
			 \DeltaX;\Gamma |- \primb: \stype{\primt}{\primt}
		}\\
		 \inference[(TPmD)]{
				\DeltaX;\Gamma |- e_1 : \stype{T}{U} & \DeltaX |- m \in U \\
				\tlookup{\DeltaX,U}{m}{\stype{\primt_1}{\ilab} -> \stype{\primt_2}{\ilab}}	\\ 
				\DeltaX;\Gamma |- e_2 : \stype{\primt_1}{U_1} \\
				\rpolicy(\stype{\primt_1}{U_1},\primt_2) = \primt'_2
			}{
				\DeltaX;\Gamma |- \minv{e_1}{m}{e_2}: \stype{\primt_2}{\primt'_2}
			}\\
			\inference[(TPmH)]{
				\DeltaX;\Gamma |- e_1 : \stype{T}{U}  & \DeltaX |- m \notin U\\
				\tlookup{\DeltaX,T}{m}{\stype{\primt_1}{\ilab} -> \stype{\primt_2}{\ilab}}	\\				
				\DeltaX;\Gamma |- e_2 : \stype{\primt_1}{U_1}
			}{
				\DeltaX;\Gamma |- \minv{e_1}{m}{e_2}: \stype{\primt_2}{\top}
			}
	\end{mathpar}
	\framebox{$\rpolicy(\stype{\primt_1}{U_1},\primt_1) = U$}
	\begin{mathpar}
		\begin{array}{rl}
			\rpolicy(\stype{\primt_1}{U_1},\primt_2) = & 
			\begin{cases}
							\primt_2 & \primt_1 = U_1\\
							\top &	\textit{otherwise} \\
			\end{cases}			
		\end{array}
	\end{mathpar}
\end{small}
\caption{\gobsec: Extended static semantics for primitive types}
  \label{fig:gobsec-pt-static-semantics}
\end{figure}

Figure~\ref{fig:gobsec-pt-dynamic-semantics} shows the extension to the dynamic semantics to support primitive values. Rule (EMInvP) executes a method 
invocation on a primitive value using the 
function ${\theta}$, which abstracts over the internal implementation of primitive values. 

To prove type safety for \gobsec with primitives, we only need to assume that 
$\Theta$ and $\theta$---which are parameters of the language---agree on the specification and implementation of all primitive types and their operations.

\begin{figure}[t]
\begin{small}
	\begin{mathpar}
	 
	  \inference[(EMInvP)]{
		  ~
		}{
			E[\primb_1.m(\primb_2)] \reduce E[\theta(m,\primb_1,\primb_2)]
		}
	\end{mathpar} 
\end{small}
\caption{\gobsec: Dynamic semantics of primitive values}
\label{fig:gobsec-pt-dynamic-semantics}
\end{figure}

\subsection{Logical Relation for Primitive Types}

Figure~\ref{fig:gobsec-pt-logical-relation} presents the extension to the logical relation of Figure~\ref{fig:gobsec-logical-relation} to account for primitive types. First, $\nsetv{\stype{\primt}{\primt}}$ relates {\em syntactically equal} primitive values of type $\primt$. Second, 
the definition of $\nsetv{\stype{T}{O}}$ now accounts for primitive values that are observed with a declassification type $O$.
$\nsetv{\stype{T}{O}}$ still relates values $v_1$ and $v_2$ if, for all methods of $O$, given related arguments, the invocations of $m$ on 
$v_1$ and $v_2$ produce related computations. However, the definition now discriminates between each type of signatures.
For a method $m$ with primitive signature $\stype{\primt_1}{\ilab} -> \stype{\primt_2}{\ilab}$, we require one of the following
conditions to hold.
If we get related arguments $v'_1$ and $v'_2$ at 
$\stype{\primt_1}{\primt_1}$ (\ie public values), method invocations $v_1.m(v'_1)$ and $v_2.m(v'_2)$ need to be related at 
the public type $\stype{\primt_2}{\primt_2}$. 
Otherwise, if the arguments $v'_1$ and $v'_2$ are related at a non-public type 
($\stype{\primt_1}{U},~\primt_1 \neq U$), then $v_1.m(v'_1)$ and $v_2.m(v'_2)$ need to be related at the top type $\stype{\primt}{\top}$.
These conditions are expressed in the definition by requiring related method invocations in $\nsetc{\stype{\primt_2}{\rpolicy(\stype{\primt_1}{U_1},\primt_2)}}$.

\begin{figure}[t]%
	\begin{small}
	\begin{minipage}{0.50\textwidth}
		\begin{displaymath}
    \begin{array}{ll}			
			\nsetv{\stype{\primt}{\primt}} = & \{(k,\primb,\primb) \in \atomunion{\primt}\} \\
			\nsetv{\stype{T}{O}} = & \		  
			\begin{block}
				\{ (k, v_1, v_2) \in \atomunion{T} | \ldots
				\\ \forall m \in O . \quad \tlookup{\bigcdot,O}{m}{\stype{\primt_1}{\ilab} -> \stype{\primt_2}{\ilab}}
				\\ \forall j < k,v_1',v_2'.  U_1 >: \primt_1\\
				((j,v'_1,v'_2) \in \nsetv{\stype{\primt_1}{U_1}} \implies \\
				(j,v_1.m(v'_1),v_2.m(v'_2)) \in \nsetc{\stype{\primt_2}{\rpolicy(\stype{\primt_1}{U_1},\primt_2)}})
				\}
			\end{block} \\
		\end{array}
	\end{displaymath}
	\end{minipage}
	\end{small}
\caption{\gobsec: Step-indexed logical relation with new definitions for primitive types}%
\label{fig:gobsec-pt-logical-relation}%
\end{figure}

Extending the fundamental property (Lemma~\ref{lm:gobsec-fp}) for primitive types requires the following lemma, which states that syntactically-equal primitive values of type $\primt$ are in the object-oriented interpretation of any type $\stype{\primt}{O}$---essentially, equal values cannot be discriminated.

\begin{restatable}[Equal values are logically related]{lemma}{syntacticImpliesSemantic}
\label{lm:syntactic-implies-semantic}
\mbox{}\\
$\forall k \geq 0, \primb, \primt, O.\\
~~~~~~~~~~~~~\stypeof{}{\primb}{O}~\wedge~\primt<:O \implies  (k,\primb,\primb) \in \nsetv{\stype{\primt}{O}}$
\end{restatable}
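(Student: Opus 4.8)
The plan is to prove the membership $(k,\primb,\primb) \in \nsetv{\stype{\primt}{O}}$ by directly unfolding the definition of the value relation for primitive receivers (Figure~\ref{fig:gobsec-pt-logical-relation}) and checking each conjunct. No induction on the step index $k$ is required here: the ``self'' condition on the receiver, $(j,v_1,v_2) \in \nsetv{\stype{T}{O}}$, sits in the \emph{antecedent} of the per-method implication, so it is available as an assumption rather than being a proof obligation. First I would discharge the atom membership $(k,\primb,\primb) \in \atomunion{\primt}$, which amounts to $\stypeof{}{\primb}{\primt}$; this holds because $\primb$ is a primitive value of type $\primt$, and $\stypeof{}{\primb}{O}$ then follows by subsumption from $\primt <: O$.

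Next I would observe that every method $m \in O$ necessarily carries a primitive signature $\stype{\primt_1}{\ilab} -> \stype{\primt_2}{\ilab}$. Since $\primt <: O$ can only be derived by (SPObj), inversion through (SR) matches each method of $O$ with a method of $\methods(\primt)$; and because primitive signatures only subtype themselves via (SImpl)---there being no subtyping rule between a primitive signature and a normal signature---the matched signature in $O$ must be exactly that primitive signature. This both justifies the shape assumed by the definition and, since $O$'s methods coincide with primitive signatures already present in $\primt$, discharges the well-formedness side-condition $\mathsf{soundsig}$ for $\stype{\primt}{O}$.

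I would then fix a method $m \in O$, an index $j < k$, a declassification type $U_1 >: \primt_1$, and arguments with $(j,v_1',v_2') \in \nsetv{\stype{\primt_1}{U_1}}$, and establish the required $(j, \primb.m(v_1'), \primb.m(v_2')) \in \nsetc{\stype{\primt_2}{\rpolicy(\stype{\primt_1}{U_1},\primt_2)}}$ by a case split on $\rpolicy$. When $U_1 = \primt_1$ (public argument), the definition of $\nsetv{\stype{\primt_1}{\primt_1}}$ forces $v_1' = v_2' = \primb'$, i.e.\ syntactically equal primitive values; both invocations then take one (EMInvP) step to the identical value $\theta(m,\primb,\primb')$, which the assumed agreement between $\Theta$ and $\theta$ guarantees to be a primitive value of type $\primt_2$, hence related to itself in $\nsetv{\stype{\primt_2}{\primt_2}}$, matching the target $\rpolicy(\stype{\primt_1}{\primt_1},\primt_2) = \primt_2$. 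When $U_1 \neq \primt_1$, we have $\rpolicy(\stype{\primt_1}{U_1},\primt_2) = \top$, so the target $\stype{\primt_2}{\top}$ exposes no methods; membership in $\nsetc{\stype{\primt_2}{\top}}$ is then immediate, since whenever the two invocations reduce to values those values are simply well-typed at $\primt_2$ (by preservation for the simple type system) and relatedness at $\nsetv{\stype{\primt_2}{\top}}$ is vacuous.

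I expect the public-argument case to be the main obstacle, as it is where the purely syntactic notion of observation on primitives is essential: I must combine (i) the fact that public relatedness collapses to syntactic equality of primitive values, (ii) the determinism of the reduction rule (EMInvP), and (iii) the assumption that $\theta$ respects the specification $\Theta$, so that both sides reduce to one and the same well-typed primitive of type $\primt_2$. The remaining care concerns bookkeeping of step indices inside $\nsetc$, which is routine: the single reduction step leaves a remaining budget $\geq 0$, and the resulting values are related for all indices either by reflexivity (public case) or vacuously (top case).
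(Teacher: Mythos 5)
There is a genuine gap: your inversion argument concludes that \emph{every} method of $O$ carries a primitive signature, and on that basis you discharge only the primitive-signature conjunct of $\nsetv{\stype{\primt}{O}}$. That conclusion is correct for the plain subtyping judgment ($I$ only subtypes $I$ via (SImpl)), but it is not the relation under which the lemma is meant to be read. The hypothesis comes from well-formedness of the security type $\stype{\primt}{O}$, which uses the facet-wise judgment $\wfsub{\DeltaX;\GammaSub}{M}{M}$; its extra rule (IG) explicitly permits a primitive signature of $\primt$ to be declassified by a \emph{standard} signature $\gmtype{\_}{\stype{\primt_1}{U_1}}{\stype{T_2}{U_2}}$, provided $\mathsf{soundsig}$ holds. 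This is not a corner case: it is exactly how policies such as $\StringEqL = \lbrack \mathsf{eq} : \stypebot{\String} -> \stypebot{\Bool} \rbrack$ or the first definition of $\StringLen$ in the illustration section declassify a primitive string, and it is the case for which the $\mathsf{soundsig}$ condition and principles (P1)/(P2) exist at all. The paper's proof accordingly splits on the two kinds of signatures in $O$: for primitive signatures it argues as you do, and for standard signatures it invokes $\mathsf{soundsig}$ to conclude that either $U_1 = \primt_1$ (so the arguments are syntactically equal and the results of $\theta$ coincide) or $U_2 = \top$ (so relatedness at the result type is trivial). Your proof never touches the first conjunct of the value relation (the one quantifying over standard signatures $\gmtype{X:A..B}{S'}{S''}$ of $O$), so as written it does not establish membership in $\nsetv{\stype{\primt}{O}}$ for such $O$. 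Ironically, you note that the signatures in $O$ ``discharge the well-formedness side-condition $\mathsf{soundsig}$'' --- but $\mathsf{soundsig}$ is not a side condition to be discharged here; it is the hypothesis you must \emph{use} in the missing case.

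The parts you do prove are essentially the paper's argument for the primitive-signature conjunct: the case split on $\rpolicy$, syntactic equality of public primitive arguments forcing identical results of $\theta$ (modulo partiality), and triviality at $\top$ are all as in the paper, and you are right that no induction on $k$ is needed. To repair the proof, add the standard-signature conjunct and reduce it, via $\mathsf{soundsig}$, to the same two subcases you already handle.
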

\begin{proof}
Because $\stype{\primt}{O}$ is a well-formed security type, $O$ consists of primitive signatures and standard, sound signatures. 
For the case of primitive signatures, at some point we have $\stype{\primt_1}{*}-> \stype{\primt_2}{*}$ and
equivalent values at $(j,v_1,v_2) \in \stype{\primt_1}{U_1}$ and we have to prove that
$(j, \delta(m,b,v_1),\delta(m,b,v_2) \in \nsetc{\stype{\primt_2}{\rpolicy(\stype{\primt_1}{U_1},\primt_2)}}$.
We do case analysis on $U$. If $U_1 = \primt_1$, we know that $v_1 = v_2$ and hence if $\delta(m,b,v_1)$ and
$\delta(m,b,v_2)$ are defined, their results are syntactically equal, so $(j, \delta(m,b,v_1),\delta(m,b,v_2)) \in
\nsetc{\stype{\primt_2}{\primt_2)}}$.
If $U \neq \primt_1$, then the proof obligation is $(j, \delta(m,b,v_1),\delta(m,b,v_2) \in \nsetc{\stype{\primt_2}{\top}}$; this is trivial because any two values are related at $\top$.
For the case of standard signatures, at some point we have $\stype{\primt_1}{U_1}-> \stype{\primt_2}{U_2}$ and
we have to prove that $(j, \delta(m,b,v_1),\delta(m,b,v_2)) \in \nsetc{\stype{\primt_2}{\rpolicy(\stype{\primt_1}{U_1},\primt_2)}}$. Since
the signature is sound, we know that either $U_1 = \primt_1$ or $U_2 = \top$; then the result follows similarly to the primitive signature case.
\end{proof}

Note that the two syntactic principles for sound signatures of primitive types introduced in Section~\ref{sec:gobsec-primitive-signatures} are justified by the need to establish Lemma~\ref{lm:syntactic-implies-semantic}.
Principle (P1) is necessary because we cannot assume anything about two invocations of an {\em arbitrary} partial function $\delta$, except that given {\em syntactically} equal arguments, if it produces results, then those results are syntactically equal. Principle (P2) is justified 
because any two invocations of $\delta$ are observationally equivalent at $\top$ (like any computation in general), so the actual relation between the arguments does not matter. For any primitive operator signature that does not abide to either (P1) or (P2), it is possible to devise a $\delta$ that violates Lemma~\ref{lm:syntactic-implies-semantic}

Consequently, \gobsec with primitive types is a sound security-typed language, \ie~all well-typed programs satisfy PRNI (Theorem~\ref{the:gsound}).

\subsection{Illustration}

In Section~\ref{sec:gobsec-overview} we gave informal examples of secure and insecure programs with respect to PRNI. Now, armed with Theorem~\ref{the:gsound}, and the definitions for primitive types, we can formally check if a given program is secure by typechecking it. The prototype implementation of \gobsec features a number of examples and allows one to try out the language and typechecker.
In this section, we unfold the reasoning underlying the proof of Theorem~\ref{the:gsound} for a specific example, in order to illustrate the technical details of PRNI and the relational interpretation of object types, including primitive signatures.

To alleviate notation, we omit unused type parameters in method declarations and type instantiations in method invocations. Note that we introduce the $\Unit$ type with its unique $\unit$ primitive value. 

We illustrate polymorphic declassification by considering type and variable environments $\DeltaX \defas X:\String..\StringLen$ and $\Gamma \defas x: \stype{\String}{X}$. We discuss two possible formal definitions for $\StringLen$, either using standard method signatures, or using primitive signatures:
\begin{enumerate}
	\item $\rtypex{\alpha}{\left[\mathsf{length}: 
	\mtype{\Unit_{\lbot}}{\Int_{\lbot}\right]}}$
	\item $\rtypex{\alpha}{\left[\mathsf{length}: \stype{\Unit}{\ilab} -> \stype{\Int}{\ilab}\right]}$
\end{enumerate}

With definition 1) above, the program $\minv{x}{\mathsf{length}}{\unit}$ has type $\stypebot{\Int}$; \ie $\Delta;\Gamma |- \minv{x}{\mathsf{length}}{\unit} : \stypebot{\Int}$. 
Then, by Theorem~\ref{the:gsound}, we know that 
$\gtrni{\DeltaX}{\Gamma}{\minv{x}{\mathsf{length}}{\unit}}{\stypebot{\Int}}$ holds; the program is secure for any public observer.

Let us unfold $\gtrni{\DeltaX}{\Gamma}{\minv{x}{\mathsf{length}}{\unit}}{\stypebot{\Int}}$ to verify why it holds. 
For any type substitution $X \mapsto T \in \nsetd{\Delta}$ and equivalent value substitutions 
$(k,x \mapsto v_1,x \mapsto v_2) \in \nsetg{\bigcdot,x:\stype{\String}{T}}$, we have that
$(k,\minv{v_1}{\mathsf{length}}{\unit}),\minv{v_2}{\mathsf{length}}{\unit}) \in \nsetc{\stype{\Int}{\Int}}$.

To verify this: 
\begin{enumerate}
	\item By $(k,x \mapsto v_1,x \mapsto v_2) \in \nsetg{\bigcdot,x:\stype{\String}{T}}$ we know that 
$(k,v_1,v_2) \in \nsetv{\stype{\String}{T}}$. Because $T <: \StringLen$, we have $\nsetv{\stype{\String}{T}} \subseteq \nsetv{\stype{\String}{\StringLen}}$ by a subtyping lemma, and hence $(k,v_1,v_2) \in \nsetv{\stype{\String}{\StringLen}}$.

	\item Then, instantiate the definition of $\nsetv{\stype{\String}{\StringLen}}$ with $\mathsf{length},k,T,\unit,\unit$. Note that:
		\begin{itemize}
			\item $\mathsf{length} \in \StringLen$
			\item $\tlookup{\bigcdot,\StringLen}{\mathsf{length}}{\mtype{\Unit_{\lbot}}{\Int_{\lbot}}}$
			\item $T \in \String..\StringLen$, which follows from $X \mapsto T \in \nsetd{\Delta}$
			\item and $(k,\unit,\unit) \in \nsetv{\stype{\Unit}{\Unit}}$ (by definition of $\nsetv{\stype{\primt}{\primt}}$), 
		\end{itemize} 
\end{enumerate}
Then $(k,\minv{v_1}{\mathsf{length}}{\unit},\minv{v_2}{\mathsf{length}}{\unit}) \in \nsetc{\stype{\Int}{\Int}}$.

With definition 2) above of $\StringLen$, we apply the same steps until the instantiation of 
$\nsetv{\stype{\String}{\StringLen}}$. At this point, since $\mathsf{length}$ has a primitive signature, we
have to consider the extended case for primitive type signatures from Figure~\ref{fig:gobsec-pt-logical-relation}. Instantiate it with 
$\mathsf{length},k,\unit,\unit$, and observe that 
$\tlookup{\bigcdot,\StringLen}{\mathsf{length}}{\stype{\Unit}{\ilab} -> \stype{\Int}{\ilab}}$. Then, 
given that $(k,\unit,\unit) \in \nsetv{\stype{\Unit}{\Unit}}$, 
we have that $(k,\minv{v_1}{\mathsf{length}}{\unit},\minv{v_2}{\mathsf{length}}{\unit}) \in \nsetc{\stype{\Int}{\rpolicy(\stype{\Unit}{\Unit},\Int)}} = 
\nsetc{\stype{\Int}{\Int}}$.

\section{Related work}
\label{sec:gobsec-related-work}

{\bf Declassification.}
Expressive declassification policies were introduced by \citet{liZdancewic:popl2005} with the labels-as-functions approach. They
define two kinds of declassification policies: local and global. 
Local policies are concerned with one secret, while global policies express coordinated declassification of several secrets.
Label operations in this approach rely on a
semantic interpretation of declassification functions based on a general notion of program equivalence. In addition to the induced complexity, this precludes recursive policies.

The labels-as-types approach~\cite{cruzAl:ecoop2017} uses type interfaces instead of functions to express declassification policies. This simplifies the concepts involved (label ordering is simply subtyping), making an implementation more easily realizable. The approach naturally support local policies. More advanced typing disciplines such as refinement types~\cite{liquid:popl2008} could in principle be used to express global policies.

Conceptually, \citet{cruzAl:ecoop2017} relate
secure information flow with type abstraction, a connection also explored under different angles by \citet{bowmanAhmed:icfp2015} and \citet{washburnWeirich:lics2005}. \citet{bowmanAhmed:icfp2015} translate the noninterference result of the Dependency Core Calculus (DCC)~\cite{abadiAl:popl99} to parametricity, while \citet{washburnWeirich:lics2005} use 
information control mechanisms to ensure a generalized from of parametricity in presence of runtime type inspection.

The Decentralized Label Model (DLM)~\cite{myersLiskov:tosem2000} of Jif
enforces {\em robust declassification}~\cite{zdancewicMyers:csf2001}: restricting {\em who} can declassify values, using the {\em integrity policy} to ensure that the declassification is not triggered by conditions affected by an active attacker. Here, we do not model any notion 
of authority, focusing on the {\em what} dimension of declassification~\cite{sabelfeldSands:jcs2009}.

As noted by \citet{sabelfeldSands:jcs2009}, many declassification approaches of the {\em what} dimension can be expressed using partial equivalence relations to model the public observer knowledge. Here, we use the logical interpretation of security types (Figure~\ref{fig:gobsec-logical-relation})  
to specify the partial equivalence relation that a public observer can use to distinguish values and computations. 

{\bf Label polymorphism.}
Support for label polymorphism in security-typed programming languages can be classified in two categories: static and dynamic label polymorphism. 
Static label polymorphism can either be provided via explicit syntactic constructions to introduce generic labels~\cite{myers:jif}, or implicitly 
with constraint-based label inference~\cite{myers:jif,pottierSimonet:toplas2003,sunAl:sas2004}. The dynamic form of label polymorphism relies on labels as first-class entities that can be passed around like standard values~\cite{myers:jif,stefanAl:haskell2011,stefanAl:jfp2017}.

The Jif language~\cite{myers:jif} supports all three forms of label polymorphism. It provides a direct syntax to introduce labels at the method and class levels, which can be constrained. Also, Jif features label inference: local variables are inferred to have a fresh generic label 
that is resolved using constraints from the context. Inferred fields and method arguments have default labels. In addition, Jif
supports first-class labels. Our work focuses on 
the foundations of explicit declassification polymorphism, and currently does not address label inference and first-class labels. Because labels are types, label inference would boil down to fairly standard type inference; first-class labels however would require a notion of first-class types, which should be considered with care.

\citet{sunAl:sas2004} design a constraint-based label inference mechanism for an object-oriented language 
with classes and inheritance. Classes and methods are label polymorphic. The programmer can rely on the inference mechanism to achieve label polymorphism or to specify generic labels at the class level; method-level explicit
polymorphism is not considered. \citet{stefanAl:haskell2011, stefanAl:jfp2017} provide label-polymorphism via first-class labels much like Jif.

{\bf Declassification and Polymorphism.}
When present, the declassification mechanisms of the label-polymorphic proposals discussed above~\cite{myers:jif,pottierSimonet:toplas2003, sunAl:sas2004,stefanAl:jfp2017} are completely orthogonal to label polymorphism. The polymorphic labels-as-types approach developed in this work 
allows us to reason about declassification and label polymorphism with the single and unified concept of standard types.

Our approach is closely related to that of~\citet{hicksAl:plas2006}, which propose {\em trusted declassification} in an object-oriented language based on 
the DLM~\cite{myersLiskov:tosem2000}, where each label is composed of principals. Declassification is globally
defined, associating principals to the {\em trusted methods} that can be used to declassify an expression to another principal. 
Because classes are polymorphic with respect to principals, this induces a form of implicit label polymorphism.
More precisely, a class definition is checked at instantiation-time with the actual principals provided for the instantiation. This use-site polymorphism for principals is similar to our treatment of polymorphic primitive signatures (Section~\ref{sec:gobsec-primitive-types}).

\citet{tseZdancewic:esop2005} propose certificate-based declassification and conditioned noninterference. They extend System~$\fsub$ with monadic labels similarly to DCC~\cite{abadiAl:popl99}, using DLM~\cite{myersLiskov:tosem2000}. 
Declassification is modeled as a  
{\em read privilege} that a principal is allowed to give
to another principal in a certain context. Their work merges standard types with labels, principals and privileges in
the same syntactic category of types. Since System~$\fsub$ supports type polymorphism, the language supports label polymorphism. However, it is not clear how to use label polymorphism to express polymorphic declassification in that setting.

Finally, the syntactic principles we introduce for primitive signatures 
are related to the work of~\citet{liZdancewic:popl2005} on labels-as-functions. 
For local policies, the typing rules for integer primitive operators 
follow the same principles, but are more expressive. In particular, they provide typing rules for binary integer operators where one operand 
has an arbitrary declassification policy and the other operand is public; the resulting label is a functional composition of the operand label with a function wrapping the operator. As explained before, this semantic implication cannot be expressed with the labels-as-types approach, unless one is willing to consider much more advanced typing disciplines.

\section{Conclusion}
\label{sec:conclusion}

We extend relaxed noninterference in a labels-as-types approach to selective and expressive declassification in order to account for polymorphism. The proposed declassification polymorphism is novel and useful to precisely control declassification of polymorphic structures and to define procedures that are polymorphic over the declassification policies of their arguments.
Bounded polymorphism further controls the guarantees and expectations of clients and providers with respect to declassification.
Bringing type-based declassification to real-world programming also requires addressing the issue of primitive types, which were ignored in prior work. For this we introduce a novel form of ad-hoc polymorphism. We formalize the approach, prove its soundness, and provide a prototype implementation. 

This work provides a necessary and solid basis to integrate type-based declassification in existing languages. A particularly appealing alternative is to study the realization of our approach in Scala: its type system is expressive enough to encode bounded polymorphic declassification, and adjusting the typechecker to account for security levels (\ie the additional rule for method invocation) should be achievable via a compiler plugin.

\section*{Acknowledgment}
We thank C\u{a}t\u{a}lin Hri\c{t}cu and the anonymous reviewers for their detailed comments and suggestions.

\clearpage

\bibliographystyle{IEEEtranN}
\bibliography{../_bib/strings,../_bib/pleiad,../_bib/bib,../_bib/common}

\begin{thebibliography}{22}
\providecommand{\natexlab}[1]{#1}
\providecommand{\url}[1]{#1}
\csname url@samestyle\endcsname
\providecommand{\newblock}{\relax}
\providecommand{\bibinfo}[2]{#2}
\providecommand{\BIBentrySTDinterwordspacing}{\spaceskip=0pt\relax}
\providecommand{\BIBentryALTinterwordstretchfactor}{4}
\providecommand{\BIBentryALTinterwordspacing}{\spaceskip=\fontdimen2\font plus
\BIBentryALTinterwordstretchfactor\fontdimen3\font minus
  \fontdimen4\font\relax}
\providecommand{\BIBforeignlanguage}[2]{{%
\expandafter\ifx\csname l@#1\endcsname\relax
\typeout{** WARNING: IEEEtranN.bst: No hyphenation pattern has been}%
\typeout{** loaded for the language `#1'. Using the pattern for}%
\typeout{** the default language instead.}%
\else
\language=\csname l@#1\endcsname
\fi
#2}}
\providecommand{\BIBdecl}{\relax}
\BIBdecl

\bibitem[Volpano et~al.(1996)Volpano, Irvine, and Smith]{volpanoAl:jcs1996}
D.~Volpano, C.~Irvine, and G.~Smith, ``A sound type system for secure flow
  analysis,'' \emph{Journal of Computer Security}, vol.~4, no. 2-3, pp.
  167--187, Jan. 1996.

\bibitem[Sabelfeld and Sands(2009)]{sabelfeldSands:jcs2009}
A.~Sabelfeld and D.~Sands, ``Declassification: Dimensions and principles,''
  \emph{Journal of Computer Security}, vol.~17, no.~5, pp. 517--548, 2009.

\bibitem[Li and Zdancewic(2005)]{liZdancewic:popl2005}
P.~Li and S.~Zdancewic, ``Downgrading policies and relaxed noninterference,''
  in \emph{Proceedings of the 32nd {ACM SIGPLAN-SIGACT} Symposium on Principles
  of Programming Languages (POPL 2005)}.\hskip 1em plus 0.5em minus 0.4em\relax
  Long Beach, CA, USA: ACM Press, Jan. 2005, pp. 158--170.

\bibitem[Hicks et~al.(2006)Hicks, King, McDaniel, and Hicks]{hicksAl:plas2006}
B.~Hicks, D.~King, P.~McDaniel, and M.~Hicks, ``Trusted declassification:
  high-level policy for a security-typed language,'' in \emph{Proceedings of
  the workshop on Programming Languages and Analysis for Security (PLAS 2006},
  2006, pp. 65--74.

\bibitem[Cruz et~al.(2017)Cruz, Rezk, Serpette, and Tanter]{cruzAl:ecoop2017}
R.~Cruz, T.~Rezk, B.~Serpette, and {\'E}.~Tanter, ``Type abstraction for
  relaxed noninterference,'' in \emph{Proceedings of the 31st European
  Conference on Object-Oriented Programming (ECOOP 2017)}, ser. Leibniz
  International Proceedings in Informatics (LIPIcs), P.~M\"uller, Ed.,
  vol.~74.\hskip 1em plus 0.5em minus 0.4em\relax Barcelona, Spain: Schloss
  Dagstuhl--Leibniz-Zentrum fuer Informatik, Jun. 2017, pp. 7:1--7:27.

\bibitem[Myers(accessed March 2019)]{myers:jif}
A.~C. Myers, ``Jif homepage,'' http://www.cs.cornell.edu/jif/, accessed March
  2019.

\bibitem[Pottier and Simonet(2003)]{pottierSimonet:toplas2003}
F.~Pottier and V.~Simonet, ``Information flow inference for {ML},'' \emph{ACM
  Transactions on Programming Languages and Systems}, vol.~25, no.~1, pp.
  117--158, 2003.

\bibitem[Abadi and Cardelli(1996)]{abadiCardelli:1996}
M.~Abadi and L.~Cardelli, \emph{A Theory of Objects}.\hskip 1em plus 0.5em
  minus 0.4em\relax Springer-Verlag, 1996.

\bibitem[Igarashi et~al.(2001)Igarashi, Pierce, and Wadler]{igarashi01fj}
A.~Igarashi, B.~C. Pierce, and P.~Wadler, ``{Featherweight Java}: a minimal
  core calculus for {Java} and {GJ},'' \emph{ACM Transactions on Programming
  Languages and Systems}, vol.~23, no.~3, pp. 396--450, 2001.

\bibitem[Rompf and Amin(2016)]{rompfAmin:oopsla2016}
T.~Rompf and N.~Amin, ``Type soundness for dependent object types {(DOT)},'' in
  \emph{Proceedings of the 31st {ACM SIGPLAN} Conference on Object-Oriented
  Programming, Systems, Languages, and Applications, ({OOPSLA} 2016), part of
  {SPLASH} 2016}, E.~Visser and Y.~Smaragdakis, Eds.\hskip 1em plus 0.5em minus
  0.4em\relax Amsterdam, The Netherlands: ACM Press, Oct. 2016, pp. 624--641.

\bibitem[Ahmed(2006)]{ahmed:esop2006}
A.~Ahmed, ``Step-indexed syntactic logical relations for recursive and
  quantified types,'' in \emph{Proceedings of the 15th European Symposium on
  Programming Languages and Systems (ESOP 2006)}, ser. Lecture Notes in
  Computer Science, P.~Sestoft, Ed., vol. 3924.\hskip 1em plus 0.5em minus
  0.4em\relax Vienna, Austria: Springer-Verlag, Mar. 2006, pp. 69--83.

\bibitem[Zdancewic(2002)]{zdancewic}
S.~Zdancewic, ``Programming languages for information security,'' Ph.D.
  dissertation, Cornell University, Aug. 2002.

\bibitem[Rondon et~al.(2008)Rondon, Kawaguchi, and Jhala]{liquid:popl2008}
P.~M. Rondon, M.~Kawaguchi, and R.~Jhala, ``Liquid types,'' in
  \emph{Proceedings of the {ACM SIGPLAN} Conference on Programming Language
  Design and Implementation (PLDI 2008)}, R.~Gupta and S.~P. Amarasinghe,
  Eds.\hskip 1em plus 0.5em minus 0.4em\relax ACM Press, Jun. 2008, pp.
  159--169.

\bibitem[Bowman and Ahmed(2015)]{bowmanAhmed:icfp2015}
W.~J. Bowman and A.~Ahmed, ``Noninterference for free,'' in \emph{Proceedings
  of the 20th ACM SIGPLAN Conference on Functional Programming (ICFP
  2015)}.\hskip 1em plus 0.5em minus 0.4em\relax Vancouver, Canada: ACM Press,
  Aug. 2015, pp. 101--113.

\bibitem[Washburn and Weirich(2005)]{washburnWeirich:lics2005}
G.~Washburn and S.~Weirich, ``Generalizing parametricity using
  information-flow,'' in \emph{Proceedings of the 20th {IEEE} Symposium on
  Logic in Computer Science ({LICS} 2005)}.\hskip 1em plus 0.5em minus
  0.4em\relax Chicago, IL, USA: IEEE Computer Society Press, Jun. 2005, pp.
  62--71.

\bibitem[Abadi et~al.(1999)Abadi, Banerjee, Heintze, and
  Riecke]{abadiAl:popl99}
M.~Abadi, A.~Banerjee, N.~Heintze, and J.~G. Riecke, ``A core calculus of
  dependency,'' in \emph{Proceedings of the 26th {ACM} Symposium on Principles
  of Programming Languages (POPL 99)}.\hskip 1em plus 0.5em minus 0.4em\relax
  San Antonio, TX, USA: ACM Press, Jan. 1999, pp. 147--160.

\bibitem[Myers and Liskov(2000)]{myersLiskov:tosem2000}
A.~C. Myers and B.~Liskov, ``Protecting privacy using the decentralized label
  model,'' \emph{ACM Transactions on Software Engineering and Methodology},
  vol.~9, pp. 410--442, Oct. 2000.

\bibitem[Zdancewic and Myers(2001)]{zdancewicMyers:csf2001}
S.~Zdancewic and A.~C. Myers, ``Robust declassification,'' in \emph{Proceedings
  of the 14th {IEEE} Computer Security Foundations Workshop ({CSFW}
  2001)}.\hskip 1em plus 0.5em minus 0.4em\relax Cape Breton, Nova Scotia,
  Canada: IEEE Computer Society Press, Jun. 2001, pp. 15--23.

\bibitem[Sun et~al.(2004)Sun, Banerjee, and Naumann]{sunAl:sas2004}
Q.~Sun, A.~Banerjee, and D.~A. Naumann, ``Modular and constraint-based
  information flow inference for an object-oriented language,'' in
  \emph{Proceedings of the 11th Static Analysis Symposium ({SAS} 2004)}, ser.
  Lecture Notes in Computer Science, R.~Giacobazzi, Ed., vol. 3148.\hskip 1em
  plus 0.5em minus 0.4em\relax Verona, Italy: Springer-Verlag, Aug. 2004, pp.
  84--99.

\bibitem[Stefan et~al.(2011)Stefan, Russo, Mitchell, and
  Mazi{\`e}res]{stefanAl:haskell2011}
D.~Stefan, A.~Russo, J.~C. Mitchell, and D.~Mazi{\`e}res, ``Flexible dynamic
  information flow control in {Haskell},'' in \emph{Proceedings of the 4th ACM
  Symposium on Haskell}.\hskip 1em plus 0.5em minus 0.4em\relax ACM Press,
  2011, pp. 95--106.

\bibitem[Stefan et~al.(2017)Stefan, Mazi{\`e}res, Mitchell, and
  Russo]{stefanAl:jfp2017}
D.~Stefan, D.~Mazi{\`e}res, J.~C. Mitchell, and A.~Russo, ``Flexible dynamic
  information flow control in the presence of exceptions,'' \emph{Journal of
  Functional Programming}, vol.~27, 2017.

\bibitem[Tse and Zdancewic(2005)]{tseZdancewic:esop2005}
S.~Tse and S.~Zdancewic, ``A design for a security-typed language with
  certificate-based declassification,'' in \emph{Proceedings of the 14th
  European Symposium on Programming Languages and Systems (ESOP 2005)}, ser.
  Lecture Notes in Computer Science, S.~Sagiv, Ed., vol. 2986.\hskip 1em plus
  0.5em minus 0.4em\relax Berlin, Heidelberg: Springer-Verlag, 2005, pp.
  279--294.

\end{thebibliography}

\appendices
\newif\ifproofs
\proofstrue

\onecolumn

\section{Full syntax and semantics}

\subsection{Syntax}

\begin{figure}[htbp]
  \begin{small}		
		\begin{displaymath}
			\begin{array}{rcll}
				e & ::= & v |  e.m(e) | \Gbox{\gminv{e}{m}{U}{e}}  | x | \Gbox{\primb} & \text{(terms)}\\
				o & ::= & \gobject  & \text{(objects)}\\
				v & ::= & o | \Gbox{\primb} & \text{(values)}\\					
				T & ::= & O | \alpha | \Gbox{\primt} &  \text{(types)}\\
				U,A,B & ::= & T | \Gbox{X} & \text{(policies)}\\
				O & ::= & \rtypex{\alpha}{\gntrecordx{m}{M}} & \text{(object type)}\\
				R & ::= & \gntrecordx{m}{M} & \text{(record types)}\\
				M & ::= & \gmtype{\Gbox{X:A..B}}{S}{S} | \Gbox{I} & \text{(method signature)}\\
				I & ::= & \stype{\primt}{\ilab} -> \stype{\primt}{\ilab} &  \text{(primitive signatures)}\\								
				S & ::= & \stype{T}{U} | \Gbox{\stype{\primt}{\ilab}} &  \text{(security types)}\\				
				\primt & ::= & \text{(eg. Int)} & \text{(primitive type)}\\
			\end{array}		
		\end{displaymath}
 \end{small}
 \caption{\gobsec: Full Syntax}
  \label{fig:gobsec-syntax-appendix}
\end{figure}

\clearpage

\subsection{Environments}

\begin{small}
  \begin{displaymath}
    \begin{array}{rcll}            		  
			\Gamma & ::= & \bigcdot | \Gamma, x:S & \text{(type environment)} \\
			\Gbox{\DeltaX}	 & ::= & \Gbox{ \bigcdot | \DeltaX, X : A..B} & \text{(generic type variable environment)}\\
			\GammaSub & ::= & \bigcdot | \GammaSub, \alpha <: \beta | \GammaSub, \Gbox{\primt <: \beta} & \text{(subtyping environment)}\\
			\GammaTOk	 & ::= & \bigcdot | \GammaTOk, \alpha | \GammaTOk, X : A..B  & \text{(type variable environment)}\\
			\GammaST	 & ::= & \bigcdot | \GammaST, \alpha \triangleq O & \text{(type definition environment)}\\
    \end{array}		
  \end{displaymath}
 \end{small}

The type variable environment $\GammaTOk$ keeps track of the type variable definitions, both self type variables and generic 
type variables. This environment is not directly
used in the static semantics presented in the body of the paper. Recall that in the typing judgment $\DeltaX;\Gamma |- e: S$ of 
Figure~\ref{fig:gobsec-static-semantics}, we extend the typing environment $\Gamma$ with closed types regarding self type 
variables. For this reason we use $\DeltaX$ instead of $\GammaTOk$. Note that $\DeltaX$ just keeps track of generic type
variable definition. Then, the environment  $\GammaTOk$ is just an auxiliary mechanism to verify that an object type
is well-formed (Figure~\ref{fig:gobsec-ok-types})

\clearpage
\subsection{Type Equivalence}
\label{sec:gobsec-type-equivalence-appendix}

The type equivalence judgment (Figure~\ref{fig:gobsec-type-equivalence}) is essentially the same that the work of \citet{cruzAl:ecoop2017}, with minor 
modifications to the rule (O-congr) to support type variables.

Two types are equivalent (Figure~\ref{fig:gobsec-type-equivalence}) if the equivalence can be derived through the congruence induced by rules (Alpha-Eq) and (Fold-Unfold). For example:\\
 $\rtypex{\alpha}{\left[m: \left\langle  X: \alpha .. \top \right\rangle \alpha \rightarrow \alpha \right]} \equiv 
 {\rtypex{\beta}{\left[m: \left\langle  X: \beta .. \top \right\rangle \beta \rightarrow \beta \right]}}$ \\
 ${\rtypex{\alpha}{\left[m: \left\langle  X: \alpha .. \top \right\rangle \top  \rightarrow \alpha \right]}} \equiv
{\rtypex{\alpha}{\left[m: \left\langle  X: \alpha .. \top \right\rangle \top \rightarrow \rtypex{\beta}{\left[m: \left\langle  Y: \beta .. \top \right\rangle \top  \rightarrow \beta \right]} \right]}}$

\noindent

\begin{figure}[htbp]
\begin{small}
\framebox{$\typeeq{U}{U}$}
\begin{mathpar}
	\inference[(Sym)]{
	~
	}{	
		\typeeq{U}{U}
	}
	\quad
	\inference[(Refl)]{
		\typeeq{U_1}{U_2}
	}{
		\typeeq{U_2}{U_1}
	}
	\quad
	\inference[(Trans)]{
		\typeeq{U_1}{U_2} & \typeeq{U_2}{U_3}
	}{
		\typeeq{U_1}{U_3}
	}
	\\	
	\inference[(O-Congr)]{
		\typeeq{L_{i}}{L'_{i}} & \typeeq{U_{i}}{U'_{i}} &
		\typeeq{S_{1i}}{S'_{1i}} & \typeeq{S_{2i}}{S'_{2i}}
	}{
		\typeeq{\rtypex{\alpha}{\gntrecordx{m}{\gmtype{X:A..B}{S_1}{S_2}}}}{\rtypex{\alpha}{\gntrecordx{m}{\gmtype{X:L'..U'}{S'_1}{S'_2}}}}
	}
	\\
	\inference[(Alpha-Eq)]{
		O \triangleq \rtypex{\alpha}{\gntrecordx{m}{M}} &
		\beta~\text{fresh}
	}{
		\typeeq{O}{\ssubst{O}{\beta}{\alpha}}
	}
	\quad
	\inference[(Fold-Unfold)]{
		~
	}{
		\typeeq{O}{\ssubst{O}{O}{\alpha}}
	}
\end{mathpar}

\noindent
\framebox{$\stypeeq{S}{S}$}
\begin{mathpar}
	\inference{
		\typeeq{T_1}{T_2} & \typeeq{U_1}{U_2}
	}{
		\stypeeq{\stype{T_1}{U_1}}{\stype{T_2}{U_2}}
	}
\end{mathpar}
\end{small}
\caption{\gobsec: Type equivalence}%
\label{fig:gobsec-type-equivalence}%
\end{figure}

\clearpage
\subsection{Subtyping}

\begin{figure}[htbp]
\begin{small}
\framebox{$\subtp{\DeltaX; \GammaSub}{U}{U}$}
 \begin{mathpar}    
    \inference[(SPObj)]{
      \methods(\primt) = R_1 \quad O \triangleq \rtypex{\beta}{R_2} \\			
			\subtp{\DeltaX;\GammaSub, \primt <: \beta}{R_1}{R_2} \\
    }{
       \subtp{\DeltaX;\GammaSub}{\primt}{O}
    }\quad
    \inference[(SPVar)]{
      \primt <: \beta \in \GammaSub
    }{
      \subtp{\DeltaX;\GammaSub}{\primt}{\beta}
    }
		\\
    \inference[(SObj)]{
      O_1 \triangleq  \rtypex{\alpha}{R_1} \quad O_2 \triangleq \rtypex{\beta}{R_2} \\
			\subtp{\DeltaX; \GammaSub, \alpha <: \beta}{R_1}{R_2}
    }{
       \subtp{\DeltaX;\GammaSub}{O_1}{O_2}
    } \quad
		\inference[(SGVar1)]{
      X : A..B \in \DeltaX
    }{
      \subtp{\DeltaX;\GammaSub}{X}{B}
    }
		\quad
		\inference[(SGVar2)]{
      X : A..B \in \DeltaX
    }{
      \subtp{\DeltaX;\GammaSub}{A}{X}
    }
		\\
		\inference[(SVar)]{
      \alpha <: \beta \in \GammaSub
    }{
      \subtp{\DeltaX;\GammaSub}{\alpha}{\beta}
    } \quad
    \inference[(SSubEq)]{
      \typeeq{O_1}{O_2}
    }{
      \subtp{\DeltaX;\GammaSub}{O_1}{O_2}
    }
    \\
    \inference[(STrans)]{
      \subtp{\DeltaX;\GammaSub}{U_1}{U_2} & \subtp{\DeltaX;\GammaSub}{U_2}{U_3}
    }{
      \subtp{\DeltaX;\GammaSub}{U_1}{U_3}
    }
\end{mathpar}
\framebox{$\subtps{\DeltaX;\GammaSub}{R_1}{R_2}$}
\begin{mathpar}
		\inference[(SR)]{
				\overline{m'} \subseteq  \overline{m} \quad m_{i} = m'_{j} \implies \subtp{\DeltaX; \GammaSub}{M}{M'}
			}{
				 \subtp{\DeltaX;\GammaSub}{\gntrecordx{m}{M}}{\gntrecordx{m'}{M'}}
			}
\end{mathpar}
\framebox{$\subtps{\DeltaX;\GammaSub}{M}{M}$}
\begin{mathpar}
		\inference[(SM)]{								
				\subtp{\DeltaX; \GammaSub}{B'}{B} \quad \subtp{\DeltaX; \GammaSub}{A}{A'} \\
				\subtp{\DeltaX,X:A'..B';\GammaSub}{S'_1}{S_1} &
				\subtp{\DeltaX,X:A'..B';\GammaSub}{S_2}{S'_2}
			}{
				 \subtp{\DeltaX;\GammaSub}{\gmtype{X:A..B}{S_{1}}{S_{2}}}{\gmtype{X:A'..B'}{S'_{1}}{S'_{2}}}
			} \quad		
		\inference[(SImpl)]{
    }{
       \subtp{\DeltaX;\GammaSub}{I}{I}
    }
\end{mathpar}
\framebox{$\subtps{\DeltaX;\GammaSub}{S}{S}$}
 \begin{mathpar}
   \inference[(TSubST)]{
     \subtp{\DeltaX;\GammaSub}{T_1}{T_2} &    
     \subtp{\DeltaX;\GammaSub}{U_1}{U_2} &
   }{
     \subtps{\DeltaX;\GammaSub}{\stype{T_1}{U_1}}{\stype{T_2}{U_2}}
   } 
  \end{mathpar}
\end{small}
\caption{\gobsec: Subtyping rules} 
\label{fig:gobsec-subtyping-appendix}
\end{figure}

\clearpage
\subsection{Standard well formedness of types and environments}
\label{sec:gobsec-well-formedness-rules-appendix}

The Fig.~\ref{fig:gobsec-ok-env} and Fig.~\ref{fig:gobsec-ok-types} show the rules for well formedness of environments and 
types respectively. 

Since $\DeltaX$ match the syntax of $\GammaTOk$, we can use the judgments 
$\wft{\GammaTOk}{U}$ and $\wft{\GammaTOk}{S}$ with $\DeltaX$ and 
they hold if $U$ (an $S$ respectively) are closed with respecto to self type variables $\alpha$.

\begin{figure}[h]
\framebox{$\wfex{\DeltaX}{\Gamma}$}
\begin{mathpar}
	\inference{
		~
	}{
		\wfex{\DeltaX}{\cdot}
  }\quad
	\inference{
		\wfex{\DeltaX}{\Gamma} &  \wf{\DeltaX}{S} & x \notin dom(\Gamma)
	}{
		\wfex{\DeltaX}{\Gamma,x:S}
}
\end{mathpar}

\framebox{$\wfe{\DeltaX}$}
\begin{mathpar}	
	\inference{
		~
	}{
		\wfe{\cdot}
	}
	\quad
	\inference{
		\wfe{\DeltaX} & X \notin dom(\DeltaX) & \DeltaX |- A & \DeltaX |- B &
	}{
		\wfe{\DeltaX, X: A..B}
	}	
\end{mathpar}

\framebox{$\wfe{\GammaSub}$}
\begin{mathpar}	
	\inference{
		~
	}{
		\wfe{\cdot}
	}
	\quad
	\inference{
		\wfe{\GammaSub} & \alpha_i \notin dom(\GammaSub) \cup cod(\GammaSub)
	}{
		\wfe{\GammaSub, \alpha_1 <: \alpha_2}
	}
	\quad
	\inference{
		\wfe{\GammaSub} & \alpha_2 \notin dom(\GammaSub) \cup cod(\GammaSub) &  \ptsound{\primt}
	}{
		\wfe{\GammaSub, \primt <: \alpha_2}
	}
\end{mathpar}

\caption{\gobsec: Environments well-formedness}%
\label{fig:gobsec-ok-env}%
\end{figure}

\begin{figure}[h]
\begin{small}
	\framebox{$\wft{\GammaTOk}{U}$}
	\begin{mathpar}
		\inference{
			\ptsound{\primt}
		}{
		 \wft{\GammaTOk}{\primt}
		}\quad
		\inference{
			X \in \GammaTOk
		}{
		 \wft{\GammaTOk}{X}
		} \quad
		\inference{
			\alpha \in \GammaTOk
		}{
		 \wft{\GammaTOk}{\alpha}
		} 
		\quad
		\inference{
			O \equiv \rtypex{\alpha}{\gntrecordx{m}{\gmtype{X : A..B}{S_1}{S_2}}} \\
			(i \neq j \implies m_i \neq m_j) \\		
		\wft{\GammaTOk,\alpha}{A} & 
		\wft{\GammaTOk,\alpha}{B} \\
		\wft{\GammaTOk,\alpha,X: A..B}{S_{1i}} & 
		\wft{\GammaTOk,\alpha,X: A..B}{S_{2i}}
		}	{
			\wft{\GammaTOk}{O}
		}
	\end{mathpar}	
\end{small}
\vspace{5mm}

\begin{small}	
	\framebox{$\wft{\GammaTOk}{S}$}	
	\begin{mathpar}	
			\inference{
				\wft{\GammaTOk}{T} \quad \wft{\GammaTOk}{U}
			}{
				\wft{\GammaTOk}{\stype{T}{U}} 
			}
	\end{mathpar}
\end{small}

\caption{\gobsec: Well-formedness of types}
\label{fig:gobsec-ok-types}
\end{figure}

\clearpage
\subsection{Well-formedness of security types facet-wise}

\begin{displaymath}
		\begin{array}{lcl}
			\mathsf{soundsig}(\gmtype{\_}{\stype{\primt_1}{U_1}}{\stype{T_2}{U_2}}) & <=> & \primt_1 = U_1~\vee~U_2 = \top
		\end{array}
\end{displaymath}

Fig.~\ref{fig:gobsec-x-appendix} shows extended subtyping judgments that are the same the ones of Figure~\ref{fig:gobsec-subtyping-appendix}, except
that the judgment $\wfsub{\DeltaX;\GammaSub}{M}{M}$ adds the rule (IG) justifying that is OK to declassify a primitive signature with a concrete sound signature.

The judgment $\wf{\DeltaX}{S}$ holds if the type $S$ is a closed type and a well-formed security type. 

Having formalized well-formedness of environments and types, we assume them in most definitions.

\begin{figure}[htbp]
\begin{small}
\framebox{$\wfsub{\DeltaX; \GammaSub}{U}{U}$}
\begin{mathpar}
 \cdots
\end{mathpar}
\framebox{$\wfsub{\DeltaX;\GammaSub}{R_1}{R_2}$}
\begin{mathpar}
 \cdots
\end{mathpar}
\framebox{$\wfsub{\DeltaX;\GammaSub}{M}{M}$}
\begin{mathpar}
		\cdots \quad
		\inference[(IG)]{
			\wfsub{\DeltaX;\GammaSub}{T_1}{\primt_1} &  \wfsub{\DeltaX;\GammaSub}{\primt_2}{T_2} & \mathsf{soundsig}(\gmtype{\_}{\stype{T_1}{U_1}}{\stype{T_2}{U_2}})
		}{
		 	\wfsub{\DeltaX;\GammaSub}{\stype{\primt_1}{*} -> \stype{\primt_2}{*}}{\gmtype{\_}{\stype{T_1}{U_1}}{\stype{T_2}{U_2}}} 
		}
\end{mathpar}
\framebox{$\wfsub{\DeltaX;\GammaSub}{S}{S}$}
\begin{mathpar}
 \cdots
\end{mathpar}
\end{small}
\caption{\gobsec: Rules for valid relation between facets} 
\label{fig:gobsec-x-appendix}
\end{figure}

\begin{figure}[h]
\begin{small}
	\framebox{$\wfft{\DeltaX;\GammaST}{U}$}
	\begin{mathpar}	
			\inference{
				O \equiv \rtypex{\alpha}{\gntrecordx{m}{M}} \\
				\wfft{\DeltaX;\GammaST, \alpha:O}{M} 
				}{
				 \wfft{\DeltaX;\GammaST}{O}
			}\quad
			\inference{
				 ~
			}{
			 \wfft{\DeltaX;\GammaST}{\alpha}
		 }\quad
			\inference{
				 ~
			}{
			 \wfft{\DeltaX;\GammaST}{X}
		 }
		\quad
			\inference{
				 ~
			}{
			 \wfft{\DeltaX;\GammaST}{\primt}
		 }
	\end{mathpar}
	\framebox{$\wfft{\DeltaX;\GammaST}{M}$}
	\begin{mathpar}	
			\inference{				
				\wfft{\DeltaX,X:A..B;\GammaST}{S_{1}} & \wfft{\DeltaX,X:A..B;\GammaST}{S_{2}}
				}{
				 \wfft{\DeltaX;\GammaST}{\gmtype{X:A..B}{S_1}{S_2}}
			} \quad
			\inference{				
				~
				}{
				 \wfft{\DeltaX;\GammaST}{I}
			}
	\end{mathpar}
\begin{mathpar}
\end{mathpar}

	\framebox{$\wfft{\DeltaX;\GammaST}{S}$}
	\begin{mathpar}	
		\inference{
			\wfft{\DeltaX;\GammaST}{T} \quad \wfft{\DeltaX;\GammaST}{U}  \quad \wfsub{\DeltaX;\cdot}{\unfold{T}}{\unfold{U}}
		}{
			\wfft{\DeltaX;\GammaST}{\stype{T}{U}} 
		
		}
\end{mathpar}


	\framebox{$\wf{\DeltaX}{S}$}
	\begin{mathpar}	
		\inference{
			\wft{\DeltaX}{S} ~~\wfft{\DeltaX;\cdot}{S}
			}{
			 \wf{\DeltaX}{S}
		}
	\end{mathpar}

\end{small}
 \caption{\gobsec: Well-formedness of security types facet-wise}
  \label{fig:gbosec-ok-types-subtyping}
\end{figure}

\clearpage
\subsection{Auxiliary definitions}
\label{sec:gobsec-auxiliary-definition-appendix}

\begin{figure}[htbp]
	\begin{small}
		\framebox{$\ubound{\DeltaX}{U} = T$}
			\begin{mathpar}
				 \inference{
					T \neq X
				 }{
					 \ubound{\DeltaX}{T} = T
				 }
				\quad 
				\inference{
					X: A..B \in \DeltaX \\
					\ubound{\DeltaX}{B} = T_2
				}{
					\ubound{\DeltaX}{X} = T_2
				}			
			\end{mathpar}
		\framebox{$\DeltaX |- m \in U$}
		\begin{mathpar}
			\inference{		
				O \triangleq  \rtypex{\alpha}{\gntrecordx{m}{M}} 
			 }{
				\DeltaX |-m_i \in O
			 }\quad
			 \inference{		
				\methods(\primt) = \gntrecordx{m}{I}     
			 }{
				 \DeltaX |- m_i \in \primt
			 }\quad
			 \inference{		
				\DeltaX |- m \in \ubound{\DeltaX}{X}
			 }{
				 \DeltaX |- m \in X
			 }	
		\end{mathpar}
		\framebox{$\tlookup{\DeltaX,U}{m}{M}$}
		\begin{mathpar}
				\inference{
					O \triangleq  \rtypex{\alpha}{\gntrecordx{m}{M_i}}
			 }{
				 \gtlookup{\_}{O}{m_i} = \ssubst{M_i}{O}{\alpha}
			 } \quad
			 \inference{
				\methods(\primt) =  \gntrecordx{m}{I}
			 }{
				 \gtlookup{\_}{\primt}{m_i} = I_i
			 }
			\\
			 \inference{
					~
				 }{
					 \gtlookup{\DeltaX}{X}{m} = \gtlookup{\_}{\ubound{\DeltaX}{X}}{m}
				 }	
		\end{mathpar}		
		\framebox{$\methimpl{o}{m}{x.e}$}
		\begin{mathpar}
			 \inference{		
				o \triangleq \objectx{z}{S}
			 }{
				 \methimpl{o}{m_i}{x.e_i}
			 }	
		\end{mathpar}		
		\framebox{$\DeltaX |- U \in A..B$}
		\begin{mathpar}			
			\quad		
			 \inference{		
				\subtp{\DeltaX;\bigcdot}{A}{U}  \quad \subtp{\DeltaX;\bigcdot}{U}{B}
			 }{
				 \DeltaX |- U \in A..B
			 }	
		\end{mathpar}
		\framebox{$\rpolicy(\stype{\primt}{U},\primt) = U$}
	\begin{mathpar}
		\begin{array}{rl}
			\rpolicy(\stype{\primt_1}{U_1},\primt_2) = & 
			\begin{cases}
							\primt_2 & \primt_1 = U_1\\
							\top &	\textit{otherwise} \\
			\end{cases}			
		\end{array}
	\end{mathpar}
	\end{small}
\caption{\gobsec: Auxiliary definitions}
\label{fig:gobsec-aux-definitions-appendix}%
\end{figure}

\clearpage
\subsection{Typing}
\label{sec:gobsec-typing-appendix} 

\begin{figure}[htbp]
\begin{small}
\framebox{$\DeltaX; \Gamma |- e : S$}
\begin{mathpar}
	\inference[(TVar)]{
				x \in dom(\Gamma)
			}{
				\DeltaX; \Gamma |- x: \Gamma(x)
			} \quad
  \inference[(TSub)]{
			\DeltaX;\Gamma |- e: S' \quad \subtps{}{S'}{S}
    }{
      \DeltaX;\Gamma |- e: S
		}
\end{mathpar}
\begin{mathpar}
		\inference[(TPrim)]{			
			\primt = \Delta_{\primb}(\primb)
    }{
      \Gamma |- \primb: \stype{\primt}{\primt}
    } \quad
		\inference[(TObj)]{
			S \triangleq \stype{T}{U} \quad 
			\tlookup{\_,T}{m_i}{\gmtype{X:A_{i}..B_{i}}{S'_i}{S''_i}} \\
			\DeltaX,X:A_{i}..B_{i};\Gamma, z: S, x:S^{'}_i |- e_i : {S''_i}
    }{
      \DeltaX;\Gamma |- \objectx{z}{S}: S
    }
\end{mathpar}

\begin{mathpar}
  \inference[(TmD)]{
      \DeltaX;\Gamma |- e_1 : \stype{T}{U} \quad
			\DeltaX |- m \in U \quad
			\tlookup{\DeltaX;U}{m}{\gmtype{X:A..B}{S_1}{S_2}}	\\
			\DeltaX |- U' \in A..B &
			\DeltaX;\Gamma |- e_2 :  \ssubst{S_1}{U'}{X}
    }{
      \DeltaX;\Gamma |- \gminv{e_1}{m}{U'}{e_2}: \ssubst{S_2}{U'}{X}
    }		
\end{mathpar}
\begin{mathpar}
  \inference[(TmH)]{
      \DeltaX;\Gamma |- e_1 : \stype{T}{U} \quad
      \DeltaX |- m  \notin U \quad
			\gtlookup{\_}{T}{m}{\gmtype{X:A..B}{S_1}{\stype{T_2}{U_2}}} \\
			\DeltaX |- U' \in A..B
			&
			\DeltaX;\Gamma |- e_2 : \ssubst{S_1}{U'}{X}
    }{
      \DeltaX;\Gamma |- \gminv{e_1}{m}{U'}{e_2}: \stype{\ssubst{T_2}{U'}{X}}{\top}			
			}
			\\
		 \inference[(TPmD)]{
			\DeltaX;\Gamma |- e_1 : \stype{T}{U} & \DeltaX |- m \in U &
			\tlookup{\DeltaX,U}{m}{\stype{\primt_1}{\ilab} -> \stype{\primt_2}{\ilab}}	\\ 
			\DeltaX;\Gamma |- e_2 : \stype{\primt_1}{U_1} & 
			\rpolicy(\stype{\primt_1}{U_1},\primt_2) = \primt'_2
		}{
			\DeltaX;\Gamma |- \minv{e_1}{m}{e_2}: \stype{\primt_2}{\primt'_2}
		}\\
		\inference[(TPmH)]{
			\DeltaX;\Gamma |- e_1 : \stype{T}{U}  & \DeltaX |- m \notin U &
			\tlookup{\DeltaX,T}{m}{\stype{\primt_1}{\ilab} -> \stype{\primt_2}{\ilab}}	\\				
			\DeltaX;\Gamma |- e_2 : \stype{\primt_1}{U_1}
		}{
			\DeltaX;\Gamma |- \minv{e_1}{m}{e_2}: \stype{\primt_2}{\top}
		}
\end{mathpar}
\end{small}
\caption{\gobsec: Static semantics}
  \label{fig:gobsec-static-semantics-appendix}
\end{figure}

\clearpage
\subsection{Dynamic semantics}
\label{sec:gobsec-dynamic-semantics-appendix}

\begin{figure}[h]
\begin{small}
	\begin{mathpar}			
			\begin{array}{llll}
				E & ::= & \left[~\right] | E.m(e) |  v.m(E)& \text{(evaluation contexts)}
			\end{array}\\
      \inference[(EMInvO)]{
        o \triangleq \objectxx{z}{\_}{\_} \quad \methimpl{o}{m}{x.e}
      }{
        E[\gminv{o}{m}{\_}{v}] \reduce E[\mbsubst{e}{o}{z}{v}{x}]
      }			
	\end{mathpar}
	\begin{mathpar}	 
	  \inference[(EMInvP)]{
		  ~
		}{
			E[\primb_1.m(\primb_2)] \reduce E[\theta(m,\primb_1,\primb_2)]
		}
	\end{mathpar} 
\end{small}
\caption{\gobsec: Full Dynamic semantics }
\label{fig:gobsec-dynamic-semantics-appendix}
\end{figure}

\clearpage
\subsection{Simple Type System}
\label{sec:gobsec-simple-ts-appendix}

Figure~\ref{fig:gobsec-safe-type-system} and Figure \ref{fig:gobsec-safe-subtyping} define a simple type system and 
simple subtyping judgments respectively that do not take into account the declassification type. The typing judgment $\Gamma \safevdash e: S$ uses 
the subtyping judgment $\GammaSub \safevdash S <: S$. The definition of the judgment $\GammaSub \safevdash T <: T$ is straightforward and then omit here.

\begin{figure}[htbp]
\begin{small}
\framebox{$\Gamma \safevdash e: S$}
\begin{mathpar}
	\inference[(T1Var)]{
				x \in dom(\Gamma)
			}{
				\Gamma \safevdash x: \Gamma(x)
			} \quad
  \inference[(T1Sub)]{
			\Gamma \safevdash e: S' \quad \simplesub{}{S'}{S}
    }{
      \Gamma \safevdash e: S
		}
		 \quad
  \inference[(T1Prim)]{
			 \primt = \Theta(\primb)
			}{
			\Gamma \safevdash \primb: \stype{\primt}{\primt}
		}
\end{mathpar}
\begin{mathpar}
  \inference[(T1Obj)]{			
			\tlookup{\_,T}{m_i}{\gmtype{\_}{S'_i}{S''_i}} &   
			\Gamma, z: S, x_i:S^{'}_i \safevdash e_i : {S''_i}
    }{
      \Gamma \safevdash \objectx{z}{S}: S
    }
\end{mathpar}
\begin{mathpar}
  \inference[(T1mI)]{
      \Gamma \safevdash e_1 : \stype{T}{U} 
      &	\tlookup{\_,T}{m}{\gmtype{\_}{S_1}{S_2}} & 
			\Gamma \safevdash e_2 : S_1
    }{
      \Gamma \safevdash \gminv{e_1}{m}{\_}{e_2}: S_2
			}
\end{mathpar}
\begin{mathpar}
  \inference[(T1PmI)]{
      \Gamma \safevdash e_1 : \stype{T}{U} 
      &	\tlookup{\_,T}{m}{\stype{\primt_1}{\ilab} -> \stype{\primt_2}{\ilab}} & 
			\Gamma \safevdash e_2 : \stype{\primt_1}{\ilab}
    }{
      \Gamma \safevdash \minv{e_1}{m}{e_2}: \stype{\primt_2}{\ilab}
			}
\end{mathpar}

\framebox{$\stypeof{\Gamma}{e}{T}$}
\begin{mathpar}
  \inference{
    \Gamma \safevdash e : \stype{T}{U}
  }{
    \stypeof{\Gamma}{e}{T}
  }
\end{mathpar}
\end{small}
\caption{\gobsec simple typing, defined in terms of single-facet typing}%
\label{fig:gobsec-safe-type-system}%
\end{figure}

\begin{figure}[htbp]
\begin{small}
\framebox{$\GammaSub \safevdash T <: T$}
\begin{mathpar}
	\cdots
\end{mathpar}
\framebox{$\GammaSub \safevdash S <: S$}
\begin{mathpar}
	\inference{
    \GammaSub \safevdash T_1 <: U_1
  }{
    \GammaSub |-_{\mathsf{sf}} \stype{T_1}{U_1} <: \stype{T_2}{U_2}
  }
\end{mathpar}
\end{small}
\caption{\gobsec simple subtyping}%
\label{fig:gobsec-safe-subtyping}%
\end{figure}

\ifproofs

\clearpage
\section{Unary model}
\label{sec:gobsec-unary-logical-relation}

We present the logical predicate for safety (Figure~\ref{fig:gobsec-unary-logical-relation-appendix}). This logical predicate
gives a safety meaning to a safety type, hence the declassification type does not play any role in the definitions.

\begin{figure}[htbp]
  \begin{small}
		\begin{displaymath}
			\begin{array}{rcl}
			\setvx{k}{\primt} & = & 
			\begin{block}				
				\{ \primb |\\
				\quad (\forall j < k.\; v \in \setvx{j}{\primt} ~\wedge~ \\
				\qquad (\forall m \in \primt, \primb'. ~\tlookup{\bigcdot,\primt}{m}{\stype{\primt'}{\ilab} -> \stype{\primt''}{\ilab}} \\
				 \qquad ~~ \primb'\in \setvx{j}{\primt'} \implies \theta(m,\primb,\primb') \in \setcx{j}{\primt''})) \} 
			\end{block} \\
			& & \\
			\setvx{k}{O} & = & 
			\begin{block}				
				\{ v = \objectxx{z}{\stype{O_1}{\_}}{\_} ~|~  |-_1 \stype{O_1}{\_} <: \stype{O}{\_}~\wedge~ \\
				\quad (\forall j < k.\; v \in \setvx{j}{O_1} ~\wedge~ \\
				\qquad (\forall m \in O, v'. ~\tlookup{\bigcdot;O}{m}{\gmtype{\_}{\stype{T'}{\_}}{\stype{T''}{\_}}} \quad \methimpl{v}{m}{x.e} \\
				 \qquad ~~ v'\in \setvx{j}{T'} \implies \mbsubst{e}{v}{z}{v'}{x} \in \setcx{j}{T''})) \} 
			\end{block} \\
			& & \\
			\setcx{k}{T} & = & 
			\begin{block}
			\{ e | \forall j < k. ~\forall e'. (e  \reduce^j e' ~\wedge~ \mathsf{irred}(e')) \implies e' \in \setvx{k-j}{T} \} 
			\end{block}
			\end{array}
		\end{displaymath}
	 \end{small}
 \caption{Unary logical relation for safety}
  \label{fig:gobsec-unary-logical-relation-appendix}
\end{figure}

\clearpage
\section{Type safety}
\label{sec:gobsec-type-safety-appendix}

\gobsecSafe*

\begin{definition}[Semantic typing]
\label{def:gobsec-semantictyping}
$\models e : \stype{T}{U} \Longleftrightarrow  \forall k \ge 0.~e \in \setcx{k}{T}$.
\end{definition}

\begin{restatable}[Semantic type safety]{lemma}{typesafetyB}
\label{lm:gobsec-typesafetyB}
$\models e : S \implies \mathsf{safe}(e)$
\end{restatable}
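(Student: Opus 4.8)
The plan is to unfold both definitions and observe that the only real content is a small step-index bookkeeping argument. Recall that $\models e : S$ with $S \triangleq \stype{T}{U}$ means, by Definition~\ref{def:gobsec-semantictyping}, that $e \in \setcx{k}{T}$ for every $k \ge 0$, and that $\mathsf{safe}(e)$ requires every $e'$ reachable from $e$ via $\reduce^{*}$ to be either a value or reducible. So I would fix an arbitrary $e'$ with $e \reduce^{*} e'$ and show it satisfies that disjunction.

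First I would make the reduction length explicit: since $e \reduce^{*} e'$, there is some $m \ge 0$ with $e \reduce^{m} e'$. If $e'$ is reducible, the second disjunct of $\mathsf{safe}$ holds and there is nothing more to do. So I would assume $\irrede(e')$ and show $e'$ must be a value. Here the step-indexing trick is to instantiate the hypothesis at an index strictly larger than the observed number of steps: choosing $k = m+1$ gives $e \in \setcx{m+1}{T}$, and since $m < m+1$, $e \reduce^{m} e'$, and $\irrede(e')$, the defining clause of $\setcx{m+1}{T}$ yields $e' \in \setvx{(m+1)-m}{T} = \setvx{1}{T}$.

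It then remains to observe that membership in the value relation forces $e'$ to be a syntactic value, which is immediate by inspection of Figure~\ref{fig:gobsec-unary-logical-relation-appendix}: $\setvx{k}{\primt}$ contains only primitive values $\primb$, and $\setvx{k}{O}$ contains only objects $\objectxx{z}{\_}{\_}$; there is no clause for a free self-type variable $\alpha$, since $T$ is a closed type. Hence $e' \in \setvx{1}{T}$ implies $e'$ is a value, establishing the first disjunct. As $e'$ was arbitrary, $\mathsf{safe}(e)$ follows.

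I do not expect a serious obstacle here, as the lemma is essentially a direct corollary of the definitions. The one point deserving care is the index accounting---one must pick $k$ strictly greater than the reduction length $m$ so that the premise $j < k$ (instantiated at $j = m$) is satisfied and the residual index $k-j$ stays positive---together with the routine remark that the value interpretation is, syntactically, a set of values. Well-formedness (closedness of $T$ with respect to self-type variables) is assumed throughout, so only the $O$ and $\primt$ cases of $\setvx{\cdot}{T}$ are relevant.
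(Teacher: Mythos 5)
Your proof is correct and matches the paper's approach: the paper simply states that the lemma ``follows directly from Definitions~\ref{def:gobsec-safe} and~\ref{def:gobsec-semantictyping},'' and your unfolding --- instantiating the hypothesis at $k=m+1$ for a reduction of length $m$, extracting $e' \in \setvx{1}{T}$, and noting that the value interpretation contains only syntactic values --- is exactly the routine argument being elided. The index bookkeeping you highlight is the only non-trivial detail, and you handle it correctly.
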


\begin{proof}
The proof follows directly from definitions ~\ref{def:gobsec-safe} and \ref{def:gobsec-semantictyping} .
\end{proof}

\begin{restatable}[Type Safety]{lemma}{safetyTPrim}
\label{lm:gobsec-type-safety}
$\Gamma \vdash e : S \implies \Gamma \models e : S$
\end{restatable}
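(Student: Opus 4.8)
The plan is to prove this \emph{fundamental property} of the unary (safety) logical relation of Figure~\ref{fig:gobsec-unary-logical-relation-appendix} by induction on the derivation of $\Gamma \vdash e : S$, following the compatibility-lemma methodology: I would establish one compatibility statement per typing rule and then assemble them. Throughout, write $S = \stype{T}{U}$ and recall that the unary relation ignores declassification facets, so only the safety type $T$ matters. Unfolding the open judgment $\Gamma \models e : \stype{T}{U}$ (the open extension of Definition~\ref{def:gobsec-semantictyping}), the goal in each case is: for every $k \ge 0$ and every value substitution $\gamma \in \setgx{k}{\Gamma}$, show $\gamma(e) \in \setcx{k}{T}$. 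Before the case analysis I would prove three standard auxiliary facts: (i) \emph{downward closure} of $\setvx{k}{\cdot}$ and $\setcx{k}{\cdot}$ in the step index; (ii) a \emph{subtyping lemma}, $\simplesub{}{T_1}{T_2} \implies \setvx{k}{T_1} \subseteq \setvx{k}{T_2}$ on safety types; and (iii) that every value of $\setvx{k}{T}$ lies in $\setcx{k}{T}$.

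The easy cases follow directly. For (T1Var), $\gamma(x) \in \setvx{k}{T}$ holds by definition of $\setgx{k}{\Gamma}$, and (iii) gives membership in $\setcx{k}{T}$. For (T1Sub), apply the induction hypothesis at the subderivation's safety type and close under the subtyping lemma (ii). For (T1Prim), I would use the language assumption that $\theta$ and $\Theta$ agree, so that $\primb \in \setvx{k}{\primt}$: unfolding the primitive value relation, each invocation $\theta(m,\primb,\primb')$ on a well-typed argument produces a primitive value of the declared result type, which is irreducible and thus lands in the computation relation. The method-invocation rules (T1mI) and (T1PmI) are handled uniformly: by the induction hypothesis the receiver evaluates into $\setvx{}{O}$ (resp.\ $\setvx{}{\primt}$) and the argument into the relation at its type; I then \emph{unfold} the definition of the receiver's value relation at a smaller step index to extract that the resulting method body (resp.\ the $\theta$-result) lies in $\setcx{}{T''}$, and a standard anti-reduction argument over evaluation contexts $E$ propagates this to the whole invocation.

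The crux is the object-introduction rule (T1Obj). Here $\gamma(o)$ is a value and I must show $\gamma(o) \in \setvx{k}{O}$, where $O$ is the safety object type. The difficulty is the recursive self-reference: the value relation requires, for every $j < k$, both that $\gamma(o) \in \setvx{j}{O}$ \emph{again} and that each method body, with the self variable $z$ bound to $\gamma(o)$ itself, behaves well. I would break this cycle with an \emph{inner induction on $k$}: assuming $\gamma(o) \in \setvx{j}{O}$ for all $j < k$, I discharge each method obligation by invoking the \emph{outer} induction hypothesis on the body $e_i$ (typed under $\Gamma, z : S, x : S'_i$), instantiated with the extended substitution $\gamma[z \mapsto \gamma(o)][x \mapsto v']$. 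This substitution is valid at step $j$ precisely because $\gamma(o) \in \setvx{j}{O}$ is furnished by the inner hypothesis and $v' \in \setvx{j}{T'}$ is the given related argument; the resulting double substitution then matches the $\mbsubst{e_i}{\gamma(o)}{z}{v'}{x}$ demanded by the relation. Reflexivity of subtyping reconciles the declared self type with $O$.

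The main obstacle, as usual for step-indexed relations over recursive objects, is exactly this object case: threading the step indices so that the self-reference is consumed at a \emph{strictly} smaller index than the one being proved—which is what makes the inner induction well-founded—and aligning the substitution-based method body with the relation's $\mbsubst{\cdot}{\cdot}{z}{\cdot}{x}$ shape. The primitive-value case carries an analogous but milder self-referential structure and likewise leans on downward closure together with the $\Theta$/$\theta$ agreement assumption; every remaining case is routine once the three auxiliary lemmas are in place.
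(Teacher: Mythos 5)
Your proposal is correct and follows essentially the same route as the paper: the paper's proof is a one-line induction on the typing derivation that defers all cases except (TPrim) to the analogous proof for \obsec in \citet{cruzAl:ecoop2017}, and handles (TPrim) exactly as you do, via the assumed agreement between $\Theta$ and $\theta$. Your write-up simply fills in the standard step-indexed details (downward closure, subtyping, the inner induction on $k$ for the object case) that the paper leaves implicit.
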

\begin{proof}
The proof is by induction on the typing derivation of $e$. The different case with respect to the proof of type safety 
of \obsec~\cite{cruzAl:ecoop2017} is the case (TPrim). For that case we only need to assume that 
$\Theta$ and $\theta$---which are parameters of the language---agree on the specification and implementation of all primitive types and their operations.
\end{proof}

\gobsecTypeSafety*
\begin{proof}
This follows directly from Lemmas~\ref{lm:gobsec-typesafetyB} and~\ref{lm:gobsec-type-safety}
\end{proof}

\clearpage
\section{Polymorphic relaxed noninterference}
\renewcommand{\rhosyn}[1]{\sigma(#1)} 
\renewcommand{\rhosynx}[2]{{#1}(#2)}
\renewcommand{\gsetv}[1]{\nsetv{#1}}
\renewcommand{\gsetc}[1]{\nsetc{#1}}
\renewcommand{\gsetd}[1]{\nsetd{#1}}
\renewcommand{\gsetg}[1]{\nsetg{#1}}

\subsection{Logical relation for PRNI}

Figure~\ref{fig:gobsec-lg-prni-forall-type} shows the full logical relation for PRNI

\begin{figure}[h]
	\begin{small}  
	\begin{displaymath}
		\begin{array}{lcl}
			\sigma  & ::= & \emptyset | \sigma[X \mapsto T] \\
			\atomone{n}{T} & = & \{(k,e_1,e_2) | k < n ~\wedge ~|-_1 e_1: T~\wedge~|-_1 e_2:T\} \\
			\atomvalone{n}{T} & = & \{(k,v_1,v_2) \in \atomone{n}{T} \} \\
			\atomunion{T} & = & \{ (k,e_1,e_2) \in \bigcup\limits_{n \ge 0} \atomone{n}{T} \}\\
			\nsetv{\stype{\primt}{\primt}} & = & \{(k,\primb,\primb) \in \atomunion{\primt}\} \\
			\nsetv{\stype{T}{O}}  &  = & 
			\begin{block}
				\{ (k, v_1, v_2) \in \atomunion{T} | 
				\\
				((\forall m \in O . \quad
					\tlookup{O}{m}{\gmtype{X:A..B}{S'}{S''}}					
					\\
					\quad \forall j < k,T', v_1',v_2'. \quad |- T' ~\wedge~ T' \in A..B ~\wedge~ \\
					\quad (j,v_1,v_2) \in \nsetv{\stype{T}{O}} ~\wedge~ (j, v_1', v_2') \in \nsetv{\ssubst{S'}{T'}{X}} \implies
					\\					 
					\quad \quad (j, \gminv{v_1}{m}{\_}{v'_1}, \gminv{v_2}{m}{\_}{v'_2}) \in \nsetc{\ssubst{S''}{T'}{X}} )~\wedge~ \\
				(\forall m \in O . \quad \tlookup{O}{m}{\stype{\primt_1}{\ilab} -> \stype{\primt_2}{\ilab}} \\
					\quad \forall j < k,v_1',v_2'. U_1 >: \primt_1\\
					\quad ((j,v'_1,v'_2) \in \nsetv{\stype{\primt_1}{U_1}} 
					\implies (j,v_1.m(v'_1),v_2.m(v'_2)) \in \nsetc{\stype{\primt_2}{\rpolicy(\stype{\primt_1}{U_1},\primt_2)}}))
				\}
			\end{block} \\			
			\nsetc{\stype{T}{U}} &  =  & 
			\begin{block}			 
				\{ (k, e_1, e_2) \in \atomunion{T}|
				(\forall j < k.
					 (e_1 \reduce^{\le j} v_1~\wedge~e_2 \reduce^{\le j} v_2 )
					\implies (k-j,v_1,v_2) \in \nsetv{\stype{T}{U}})
				\}
			\end{block} \\
			\nsetg{\cdot} & = & \{(k,\emptyset,\emptyset)\} \\
			\nsetg{\Gamma;x:S} & = &
				\begin{block}					
					\{(k,\gamma_1\left[x \mapsto v_1 \right],\gamma_2\left[x \mapsto v_2 \right])|
					(k,\gamma_1,\gamma_2) \in \nsetg{\Gamma}
					~\wedge
					~ (k,v_1, v_2) \in \nsetv{S}
					\} 
				\end{block} \\
			\nsetd{\cdot} & = & \{\emptyset\} \\
			\nsetd{\DeltaX;X:A..B} & = & 
			\begin{block}
				\{\sigma\left[X \mapsto T \right] |
					~\sigma \in \nsetd{\DeltaX} \wedge \DeltaX |- T \in A..B\} \\
			\end{block} \\
			\gtrni{\DeltaX}{\Gamma}{e}{S} & \Longleftrightarrow &
			\begin{block}
				\quad S \triangleq \stype{T}{U} \quad \stypeof{\Gamma}{e}{T}~\wedge~ \DeltaX |- \Gamma~\wedge~\DeltaX |- S\\
				\quad \forall k \ge 0 .~\forall \sigma, \gamma_1,\gamma_2.
				~\sigma \in \nsetd{\DeltaX}. 
				~ (k,\gamma_1,\gamma_2) \in \nsetg{\sigma(\Gamma)} 
				\\
				\quad \implies (k,\sigma(\gamma_1(e)),\sigma(\gamma_2(e))) \in \nsetc{\sigma(S)}
			\end{block}\\
			\DeltaX ;\Gamma |- e_1 \approx e_2 : S  & \Longleftrightarrow &
			\begin{block}
				\DeltaX;\Gamma |- e_i: S~\wedge \forall k \ge 0 .~\forall \sigma, \gamma_1,\gamma_2.
				~\sigma \in \gsetd{\DeltaX}. \\
				\quad \quad (k,\gamma_1, \gamma_2) \in \gsetg{\sigma(\Gamma)} \implies (k, \rhosyn{\gamma_1(e_1)}, \rhosyn{\gamma_2(e_2)} \in \nsetc{\sigma(S)}
			\end{block}
			\end{array}\\
  \end{displaymath}
	\end{small}
 \caption{\gobsec. Step-indexed logical relations for PRNI}  
 \label{fig:gobsec-lg-prni-forall-type}
\end{figure}

\begin{figure}[h]
	\begin{small}  
	\begin{displaymath}
		\begin{array}{rcl}			
			\gamma & ::= & \emptyset | \extgamma{\gamma}{x}{v} \\
			\emptyset \models \Gamma &  & \\
			\extgamma{\gamma}{x}{v} \models \Gamma,x:S & \iff & 
			\begin{block}
				\gamma \models \Gamma ~\wedge~  \bigcdot;\bigcdot |- v: S
			\end{block}
			\end{array}\\
  \end{displaymath}
	\end{small}
 \caption{\gobsec. PRNI. Auxiliary definition (used in proofs)}  
 \label{fig:gobsec-lg-auxiliary-definitions}
\end{figure}

\clearpage
\subsection{Auxiliary Lemma: Simple typing}

\begin{lemma}[Well-type programs are simple well-typed]
\label{lm:gobsec-security-ts-implies-simple-ts}
\mbox{}
\\
If $\DeltaX;\Gamma |- e: \stype{T}{U}$ then $\stypeof{\Gamma}{e}{T}$
\end{lemma}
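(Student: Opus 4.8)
The plan is to proceed by induction on the derivation of $\DeltaX;\Gamma |- e : \stype{T}{U}$, constructing for each case a simple typing derivation ending in $\Gamma \safevdash e : \stype{T}{U'}$ for some declassification facet $U'$, from which $\stypeof{\Gamma}{e}{T}$ follows by the defining rule of the single-facet judgment. Two facts will drive the argument. First, well-formedness of the security type $\stype{T}{U}$ yields $\subtp{\DeltaX;\bigcdot}{T}{U}$, so $U$ is a supertype of $T$; by width and depth subtyping (rules (SObj), (SR), (SM)) every method of $U$ is also a method of $T$, and the safety facets of its $T$-signature are related by subtyping to those of its $U$-signature. Second, and crucially, type variables $X$ range only over declassification types ($U,A,B ::= T \mid X$ in Figure~\ref{fig:gobsec-final-syntax}), so they never occur in a safety facet; hence a substitution $\ssubst{\cdot}{U'}{X}$ leaves safety facets unchanged, and the simple type system — which tracks only safety facets and drops $\DeltaX$ altogether — is insensitive both to $\DeltaX$ and to such substitutions.

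Most cases are then direct. (TVar) and (TPrim) map onto (T1Var) and (T1Prim). For (TSub) I would first establish the auxiliary fact that full subtyping restricts to simple subtyping on safety facets, $\subtp{\DeltaX;\GammaSub}{T_1}{T_2} \implies \simplesub{}{T_1}{T_2}$, by a routine induction on the subtyping derivation, and then combine it with the induction hypothesis through (T1Sub). For (TObj), signatures are looked up from the safety type $T$ in both systems, so applying the induction hypothesis to each method body $e_i$ — typed under the extended $\DeltaX,X{:}A_i..B_i$ and environment — and feeding the results into (T1Obj) suffices, with the extended $\DeltaX$ simply disappearing. The beyond-declassification rules (TmH) and (TPmH) likewise look the method up in $T$, exactly as (T1mI) and (T1PmI) do, so they transfer almost verbatim.

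The main obstacle is the genuine declassification cases (TmD) and (TPmD), where the full system looks the method signature up in the \emph{declassification} type $U$ while the simple rules (T1mI)/(T1PmI) look it up in the \emph{safety} type $T$. Here I would invoke the first fact above: since $\subtp{\DeltaX;\bigcdot}{T}{U}$, the method $m$ is present in $T$, and by the contravariant/covariant premises of (SM) the argument (resp. result) safety facet of its $T$-signature is a supertype (resp. subtype) of the corresponding $U$-facet. The induction hypothesis gives $e_2$ a simple type at the safety facet of the $U$-argument; simple subsumption raises it to the $T$-argument's safety facet so (T1mI) applies, and a further subsumption on the result recovers precisely the safety facet of $\ssubst{S_2}{U'}{X}$ (which, by the second fact, coincides with that of $S_2$). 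For (TPmD) the method has a primitive signature, and since primitive signatures only subtype themselves (rule (SImpl)), the $T$- and $U$-signatures coincide, making the reconciliation immediate; the declassification facet computed by $\rpolicy$ plays no role in the safety facet, so (T1PmI) delivers the required $\stype{\primt_2}{\ilab}$. Assembling the cases yields $\Gamma \safevdash e : \stype{T}{U'}$ and therefore $\stypeof{\Gamma}{e}{T}$.
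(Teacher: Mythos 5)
The paper states this lemma in the appendix without any proof at all (its companion, Lemma~\ref{lm:gobsec-security-st-implies-simple-st} on subtyping, is likewise left unproven), so there is no official argument to compare yours against. Your induction on the typing derivation is clearly the intended route, and its overall structure is sound: the easy cases transfer rule-by-rule, (TSub) goes through the subtyping transfer lemma you identify (which is exactly the paper's Lemma~\ref{lm:gobsec-security-st-implies-simple-st}), and the crux is correctly located in the declassification cases, where facet-wise well-formedness of $\stype{T}{U}$ supplies membership of $m$ in $T$ and the (SM) variances let you reconcile the $T$- and $U$-signatures via simple subsumption.

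One step is justified incorrectly, however. You claim that because $X$ ranges only over declassification types, ``a substitution $\ssubst{\cdot}{U'}{X}$ leaves safety facets unchanged.'' That is false as stated: a safety facet $T_2$ is an object type whose method signatures contain security types whose declassification facets and bounds may mention $X$, so $\ssubst{T_2}{U'}{X} \neq T_2$ in general --- indeed the paper's own rules (TmD) and (TmH) conclude with the substituted safety facet $\ssubst{T_2}{U'}{X}$, which would be pointless if the substitution were the identity there. What is true is that $X$ never occurs \emph{as} a (possibly nested) safety facet, only inside declassification positions; so to close the (TmD) case you must instead argue that the \emph{unsubstituted} return type delivered by (T1mI) is related by simple subtyping to $\ssubst{T_2}{U'}{X}$, i.e., that the judgment $\GammaSub \safevdash T <: T$ identifies types differing only at declassification positions. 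That is plausible, but it is a property of a definition the paper omits as ``straightforward,'' and it deserves to be isolated as an explicit auxiliary lemma rather than asserted as a syntactic triviality. Two smaller omissions: when $U$ is a type variable, membership and signature lookup go through $\ubound{\DeltaX}{X}$, so the width-subtyping step needs the chain $T <: A <: X <: B$ rather than $T <: U$ directly; and in (TmD) the well-formedness rule (IG) also permits the $T$-signature of a declassified method to be a primitive signature, a sub-case that your reconciliation via (SM) does not cover and that would route through (T1PmI) rather than (T1mI).
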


\begin{lemma}[Security subtyping implies simple subtyping]
\label{lm:gobsec-security-st-implies-simple-st}
\mbox{}
\\
If $\DeltaX;\GammaSub |- T' <: T $ then $\GammaSub \safevdash T' <: T$
\end{lemma}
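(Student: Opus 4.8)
The plan is to proceed by induction on the derivation of $\subtp{\DeltaX;\GammaSub}{T'}{T}$, translating each rule of the full (security) subtyping into the corresponding rule of the simple subtyping judgment $\GammaSub \safevdash T' <: T$. Since the premises of the object rule bottom out in the record and method-signature judgments, this is really a mutual induction over all the auxiliary subtyping judgments, each handled by the same declassification-erasing recipe; I will focus on the type-level judgment, where the genuinely new phenomena occur. The crucial structural observation is that $T'$ and $T$ are drawn from the category of types $T ::= O \mid \alpha \mid \primt$, which contains no declassification variables, whereas the general judgment $\subtp{\DeltaX;\GammaSub}{U_1}{U_2}$ ranges over declassification types that may include variables. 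Hence neither endpoint of the conclusion is a type variable, even though intermediate judgments in the derivation may be.

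For the monomorphic rules the translation is immediate. In the cases (SObj), (SVar), (SSubEq), (SPObj) and (SPVar) the endpoints are object types, self variables, or primitive types, each of which has a direct counterpart in the simple subtyping judgment; I apply the induction hypothesis to the premises (e.g. to the record-subtyping premise of (SObj), invoking the companion statement for records) and reassemble the conclusion with the analogous simple rule. Type equivalence (SSubEq) carries over unchanged, since equivalence relates only object types and never mentions declassification variables.

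The interesting cases are the variable rules and transitivity. Rules (SGVar1) and (SGVar2) cannot be the last rule of the derivation: their conclusions are $\subtp{\DeltaX;\GammaSub}{X}{B}$ and $\subtp{\DeltaX;\GammaSub}{A}{X}$, each with a variable as one endpoint, whereas both $T'$ and $T$ are genuine types. Thus variables can appear only as the \emph{midpoint} of an application of (STrans). I therefore analyze (STrans), with premises $\subtp{\DeltaX;\GammaSub}{T'}{U_2}$ and $\subtp{\DeltaX;\GammaSub}{U_2}{T}$. If $U_2$ is a type, the induction hypothesis applies to both premises and simple transitivity closes the case. If $U_2 = X$ with $X : A..B \in \DeltaX$, I invert the two premises using a small auxiliary fact: the only way to conclude $\subtp{\DeltaX;\GammaSub}{T_1}{X}$ for a non-variable $T_1$ is an application of (SGVar2), possibly composed with (STrans), so the left premise reduces to $\subtp{\DeltaX;\GammaSub}{T'}{A}$; dually, concluding $\subtp{\DeltaX;\GammaSub}{X}{T_2}$ forces (SGVar1) and reduces the right premise to $\subtp{\DeltaX;\GammaSub}{B}{T}$. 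The induction hypothesis then gives $\GammaSub \safevdash T' <: A$ and $\GammaSub \safevdash B <: T$.

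The main obstacle is to bridge $A$ and $B$: to conclude $\GammaSub \safevdash T' <: T$ by simple transitivity I additionally need $\GammaSub \safevdash A <: B$, i.e. that the bounds of $X$ are consistent. This does not follow from the induction on the current derivation, because $\subtp{\DeltaX;\GammaSub}{A}{B}$ is not one of its subderivations, so a naive self-application would be circular. I will therefore record bound consistency as a well-formedness invariant of $\DeltaX$, maintained whenever $\DeltaX$ is extended (and in particular guaranteed for the environments arising from well-formed method signatures and security types, where the declassification facet is required to be a supertype of the safety facet). Concretely I would either strengthen the statement so that $\GammaSub \safevdash A <: B$ for every binding $X:A..B \in \DeltaX$ is proven first as a standalone property of well-formed $\DeltaX$, or phrase the whole argument as an induction on the \emph{size} of the derivation so that the appeal to consistency is applied to a strictly smaller object. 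With bound consistency available, simple transitivity closes the transitivity-through-a-variable case, and the induction—and hence the lemma—goes through.
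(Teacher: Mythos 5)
The paper states this lemma bare in the appendix, with no proof at all, so there is no official argument to compare against; your proposal has to be judged on its own terms. Your overall strategy --- a mutual induction over the subtyping judgments that erases the declassification-specific content rule by rule, with all the real work concentrated in (STrans) when the midpoint is a declassification variable --- is the right one, and it is to your credit that you surfaced the point the paper silently elides: closing the transitivity-through-$X$ case requires $\GammaSub \safevdash A <: B$ for the binding $X:A..B$, yet the well-formedness rule for $\DeltaX$ (Figure~\ref{fig:gobsec-ok-env}) demands only that $A$ and $B$ be well-formed types, not that the interval be consistent. Without such an invariant the lemma is actually false as stated: take $X:A..B$ with incomparable object types $A$ and $B$; then $\subtp{\DeltaX;\bigcdot}{A}{B}$ follows from (SGVar2), (SGVar1) and (STrans), while $\bigcdot \safevdash A <: B$ fails. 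So your proposed strengthening of well-formedness is a necessary repair, not optional bookkeeping. It is worth observing, though, that every actual use of the lemma in the paper (in Lemma~\ref{lm:gobsec-atom-subtyping}, fed by Lemma~\ref{lm:gobsec-type-subst-preserves-subtyping-n}) instantiates $\DeltaX$ with $\bigcdot$, so the variable cases are vacuous there and the proof degenerates to the trivial homomorphic translation; the difficulty you uncover is entirely an artifact of the lemma being stated more generally than it is used.

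Two places in your sketch still need to be discharged explicitly if the general statement is retained. First, the inversion step ``the left premise reduces to $\subtp{\DeltaX;\GammaSub}{T'}{A}$'' is not a single inversion: since bounds may themselves be type variables, a derivation of $T' <: X$ can thread through a chain of variables, and you need a genuine promotion/exposure lemma in the style of $\fsub$ metatheory, proved by its own induction and relying on the assumed acyclicity of $\DeltaX$, before you reach a non-variable bound. Second, the consistency invariant must survive the environment extension in case (SM): its premises give $\subtp{\DeltaX;\GammaSub}{A}{A'}$ and $\subtp{\DeltaX;\GammaSub}{B'}{B}$, which together with $A <: B$ do not yield $A' <: B'$ for the extended binding $X:A'..B'$, so consistency of the new interval has to be imported from well-formedness of the types being compared rather than from the subtyping premises. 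Neither point is fatal, but both must be handled for the induction to close.
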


\begin{lemma}[Value substitution preserves simple typing]
\label{lm:gobsec-value-substitution-preserve-simple-typing}
\mbox{}

If $\stypeof{\Gamma}{e}{T}$ and $ \gamma \models \Gamma$ then ${\stypeof{}{\gamma(e)}{T}}$
\end{lemma}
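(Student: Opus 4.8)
The plan is to prove this as a routine value-substitution lemma for the \emph{simple} (single-facet) type system of Figure~\ref{fig:gobsec-safe-type-system}. Recall that $\stypeof{\Gamma}{e}{T}$ holds exactly when $\Gamma \safevdash e : \stype{T}{U}$ for some declassification facet $U$, so it suffices to establish the statement for the single-facet judgment $\safevdash$ and then read off the conclusion for $\vdash_1$. The hypothesis $\gamma \models \Gamma$ unfolds (Figure~\ref{fig:gobsec-lg-auxiliary-definitions}) to the fact that $\gamma$ maps each $x \in \dom(\Gamma)$ to a \emph{closed} value $\gamma(x)$ with $\bigcdot;\bigcdot \vdash \gamma(x) : \Gamma(x)$. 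By Lemma~\ref{lm:gobsec-security-ts-implies-simple-ts}, each such value is then also simply well-typed, $\safevdash \gamma(x) : \stype{T_x}{U'_x}$ where $\Gamma(x) = \stype{T_x}{U_x}$; this is the single point where the proof crosses from the full security type system into the simple one.

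First I would prove the single-variable case: if $\Gamma, x:S_x \safevdash e : S$ and $v$ is closed with $\safevdash v : S_x$, then $\Gamma \safevdash e[v/x] : S$, for an arbitrary residual context $\Gamma$. The proof goes by induction on the derivation of the first premise, one case per rule of Figure~\ref{fig:gobsec-safe-type-system}. Rule (T1Var) splits on whether the typed variable is $x$: if so, $e[v/x] = v$ and I weaken the closed typing of $v$ to $\Gamma$ (using (T1Sub) and the permissive simple subtyping on security types of Figure~\ref{fig:gobsec-safe-subtyping} to reconcile the declassification facet with $\Gamma(x)$ if needed); otherwise the variable is left untouched. Rule (T1Prim) is immediate since primitive values are closed. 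Rules (T1Sub), (T1mI) and (T1PmI) follow by applying the induction hypothesis to each subderivation and reusing the same subtyping and method-lookup side conditions, which are preserved verbatim because they mention only types, never term variables.

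The hard part will be rule (T1Obj), where the method bodies $e_i$ are typed in the context extended with the self variable $z$ and the argument variable $x_i$. Here I would first $\alpha$-rename $z$ and each $x_i$ so that they are fresh with respect to $x$ and to the free variables of $v$; then substitution commutes with the binders, so the object under $[v/x]$ has bodies $e_i[v/x]$, and the extended context $\Gamma, z:S, x_i:S'_i$ differs from $(\Gamma, x:S_x), z:S, x_i:S'_i$ only by the position of the disjoint binding $x:S_x$. Applying the induction hypothesis to each body (with $\Gamma, z:S, x_i:S'_i$ now the residual context) re-establishes the premises of (T1Obj), so the object is well-typed at $S$ after substitution.

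Finally I would obtain the full statement by iterating the single-variable lemma along the bindings of $\gamma$. Writing $\gamma = [x_1 \mapsto v_1, \dots, x_n \mapsto v_n]$, each $v_i$ is closed and simply well-typed at $\Gamma(x_i)$ by the opening remark, so substituting them one at a time peels the bindings off $\Gamma$; closedness of the $v_i$ guarantees that no already-substituted variable is reintroduced, and hence that the iterated single substitutions coincide with the simultaneous substitution $\gamma(e)$. After all $n$ steps the context is empty and I conclude $\safevdash \gamma(e) : \stype{T}{U}$, that is $\stypeof{}{\gamma(e)}{T}$, as required.
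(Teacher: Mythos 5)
Your proposal is correct and matches the paper's approach: the paper's entire proof is ``By induction on $\Gamma$,'' which is exactly your final step of peeling off the bindings of $\gamma$ one at a time, and your single-variable substitution lemma (proved by induction on the typing derivation, with the usual $\alpha$-renaming care in the (T1Obj) case and the crossing from $\bigcdot;\bigcdot \vdash v : \Gamma(x)$ to simple typing via Lemma~\ref{lm:gobsec-security-ts-implies-simple-ts}) is just the standard inner lemma the paper leaves implicit. No gaps.
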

\begin{proof}
By induction on $\Gamma$.
\end{proof}

\clearpage
\subsection{Auxiliary Lemma: Logical relation}

\subsubsection{Atom subtyping}
\begin{lemma}[Atom subtyping]
\label{lm:gobsec-atom-subtyping}
\mbox{}
\\
If $(k,v_1,v_2) \in \atomunion{T'}$ and  $\bigcdot;\bigcdot |- T' <: T$  \\
then $(k,v_1,v_2) \in \atomunion{T}$
\end{lemma}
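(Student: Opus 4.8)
The plan is to unfold the definition of $\atomunion{\cdot}$ at both ends and reduce the statement to a plain subsumption argument in the \emph{simple} type system. By definition, $\atomunion{T'} = \{(k,e_1,e_2) \in \bigcup_{n\ge 0}\atomone{n}{T'}\}$, so the hypothesis $(k,v_1,v_2)\in\atomunion{T'}$ says exactly that $\vdash_1 v_1 : T'$ and $\vdash_1 v_2 : T'$; the step index $k$ is unconstrained, since one may always take the witness $n = k+1$. Symmetrically, the goal $(k,v_1,v_2)\in\atomunion{T}$ reduces to $\vdash_1 v_1 : T$ and $\vdash_1 v_2 : T$ at the \emph{same} index $k$. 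Hence the content of the lemma is purely that simple well-typedness at a safety type is preserved when passing to a safety supertype, and $k$ itself plays no role.

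First I would convert the hypothesis, which is phrased with the \emph{full} subtyping judgment $\bigcdot;\bigcdot \vdash T' <: T$, into a fact about \emph{simple} subtyping. This is precisely Lemma~\ref{lm:gobsec-security-st-implies-simple-st} (security subtyping implies simple subtyping), instantiated with $\DeltaX = \bigcdot$ and $\GammaSub = \bigcdot$, which yields $\bigcdot \safevdash T' <: T$.

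Next, for each $i \in \{1,2\}$ I would unfold $\vdash_1 v_i : T'$ through the definition of $\stypeof{\bigcdot}{v_i}{T'}$ into a single-facet derivation $\bigcdot \safevdash v_i : \stype{T'}{U'_i}$ for some declassification facet $U'_i$. Because the simple type system tracks only the safety facet, the simple subtyping $\bigcdot \safevdash T' <: T$ lifts to a subtyping between faceted types, $\bigcdot \safevdash \stype{T'}{U'_i} <: \stype{T}{U'_i}$ (the facet-level side condition being discharged from $\bigcdot \safevdash T' <: T$); applying the simple subsumption rule (T1Sub) then gives $\bigcdot \safevdash v_i : \stype{T}{U'_i}$, i.e. $\vdash_1 v_i : T$. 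Recombining the two typings at the original index $k$ establishes $(k,v_1,v_2)\in\atomunion{T}$.

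The only delicate point is the bridge between the two subtyping judgments together with the facet bookkeeping: the atom predicate and the subtyping hypothesis are stated over bare safety types $T'$ and $T$, whereas the simple type system manipulates full faceted types $\stype{T}{U}$. I must therefore reconstruct a declassification facet for the resulting typing and, crucially, route through Lemma~\ref{lm:gobsec-security-st-implies-simple-st} rather than trying to embed the full subtyping derivation directly inside a simple-typing derivation. Everything else is routine unfolding.
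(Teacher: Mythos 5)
Your proposal is correct and follows essentially the same route as the paper's own (very terse) proof: reduce the goal to $\vdash_1 v_i : T$ for $i \in \{1,2\}$, convert the full subtyping hypothesis to simple subtyping via Lemma~\ref{lm:gobsec-security-st-implies-simple-st}, and conclude by the simple subsumption rule (T1Sub). Your additional remarks about the step index being unconstrained and the facet bookkeeping are accurate elaborations of details the paper leaves implicit.
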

\begin{proof}
Proof obligations:

$|-_1 e_i: T$ ($i \in \{1,2\})$. Apply rule (T1Sub). Note that $\bigcdot;\bigcdot |- T' <: T => \bigcdot \safevdash T'<:T$ 
(Lemma~\ref{lm:gobsec-security-st-implies-simple-st}, \nameref{lm:gobsec-security-st-implies-simple-st})
\end{proof}

\subsubsection{Atom reduction}
\begin{lemma}[Atom reduction]
\label{lm:gobsec-atom-reduction}
\mbox{}
\\
Let $e_1 \reduce^{*} e'_1$ and $e_2 \reduce^{*} e'_2$ \\
Let $(k,e_1,e_2) \in \atomunion{T}$ \\
Then $(k,e'_1,e'_2) \in \atomunion{T}$ 
\end{lemma}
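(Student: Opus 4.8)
The plan is to observe that membership in $\atomunion{T}$ is, up to the irrelevant step index, nothing more than simple well-typedness at $T$. Unfolding the definition, $(k,e_1,e_2) \in \atomunion{T}$ holds iff there is some $n > k$ with $\vdash_1 e_1 : T$ and $\vdash_1 e_2 : T$; since $n$ ranges over all naturals, for the fixed $k \ge 0$ this is simply equivalent to $\vdash_1 e_1 : T \wedge \vdash_1 e_2 : T$. Hence the goal $(k,e'_1,e'_2) \in \atomunion{T}$ reduces to establishing $\vdash_1 e'_1 : T$ and $\vdash_1 e'_2 : T$, the same step index $k$ carrying over unchanged because it is otherwise unconstrained.

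The two components are handled symmetrically, so I would fix one reduction sequence, say $e_1 \reduce^{*} e'_1$, together with the hypothesis $\vdash_1 e_1 : T$ extracted above, and proceed by induction on the length of this sequence. The base case (zero steps, $e'_1 = e_1$) is immediate, and the inductive step rests on a single-step preservation (subject reduction) property for the simple type system: if $\vdash_1 e : T$ and $e \reduce e''$, then $\vdash_1 e'' : T$. Applying this along each step of $e_1 \reduce^{*} e'_1$, and symmetrically for $e_2 \reduce^{*} e'_2$, yields $\vdash_1 e'_1 : T$ and $\vdash_1 e'_2 : T$; repackaging gives $(k,e'_1,e'_2) \in \atomunion{T}$.

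The only genuine content is the single-step preservation property for $\vdash_1$, which I would prove by cases on the applicable reduction rule. For rule (EMInvO) the reduct is the method body with the self variable and the argument substituted, and preservation follows from the object typing rule (T1Obj) together with a substitution lemma for simple typing---a routine generalization of Lemma~\ref{lm:gobsec-value-substitution-preserve-simple-typing} that substitutes a well-typed value for a single variable inside an arbitrary term. For rule (EMInvP) the reduct is $\theta(m,\primb_1,\primb_2)$, and preservation amounts to the standing assumption that $\Theta$ and $\theta$ agree, so that the result of a primitive operation carries the simple type dictated by the primitive signature looked up in (T1PmI). Since reduction occurs inside an evaluation context $E$, the single-step lemma must also be closed under evaluation contexts, which follows by a straightforward induction on the structure of $E$ using inversion on (T1mI)/(T1PmI).

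I expect the main obstacle, such as it is, to be stating and discharging the substitution lemma cleanly in the presence of the self variable $z$ and recursive object types: the method body must remain simply well-typed after simultaneously substituting the whole object for $z$ and the argument for $x$. Everything else is mechanical preservation plus an easy induction on the number of reduction steps.
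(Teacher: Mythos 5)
Your proposal is correct and takes essentially the same route as the paper: the paper's own proof is a one-line remark that each subgoal $\vdash_1 e'_i : T$ follows by induction on the typing derivation of $e_i$, i.e., precisely the subject-reduction argument for the simple type system that you spell out. The additional detail you supply---unfolding $\atomunion{T}$ to see that the step index is unconstrained, the substitution lemma for the self variable and argument, the $\Theta$/$\theta$ agreement for the primitive case, and closure under evaluation contexts---is exactly what the paper leaves implicit.
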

\begin{proof}
The proof is straightforward. Each subgoal follows by induction on the typing derivation of $e_i$ ($i \in \{1,2\})$ .
\end{proof}

\subsubsection{Type substitution preserves subtyping}

\begin{lemma}[Type substitution preserves subtyping]
\label{lm:gobsec-type-subst-preserves-subtyping-n}
\mbox{}
\\
Let $\sigma \in \gsetd{\DeltaX}$ \\
If $\DeltaX,\bigcdot |- S <: S'$ then $\bigcdot;\bigcdot |- \sigma(S) <: \sigma(S')$ \\
If $\DeltaX,\bigcdot |- T <: T'$ then $\bigcdot;\bigcdot |- \sigma(T) <: \sigma(T')$
\end{lemma}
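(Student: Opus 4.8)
The plan is to prove both implications by induction on the subtyping derivation, but I would first generalize the statement so that the induction survives bounded method signatures. Concretely, I would prove: for every $\DeltaX$, every $\sigma \in \nsetd{\DeltaX}$, every subtyping environment $\GammaSub$ (which $\sigma$ leaves unchanged, since $\sigma$ never touches self type variables or primitive types), and every auxiliary type-variable context $\DeltaX_0$ disjoint from $\dom(\sigma)$, if $\subtp{\DeltaX,\DeltaX_0;\GammaSub}{U_1}{U_2}$ then $\subtp{\sigma(\DeltaX_0);\GammaSub}{\sigma(U_1)}{\sigma(U_2)}$, where $\sigma(\DeltaX_0)$ applies $\sigma$ to the bounds recorded in $\DeltaX_0$; and symmetrically for the security-type judgment. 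The stated lemma is the instance $\DeltaX_0 = \bigcdot$, noting that the declassification-type judgment subsumes the type-level judgment since $T$ is a syntactic subclass of $U$. The generalization is genuinely needed because rule (SM) checks its premises under $\DeltaX$ extended with the signature's own bound variable $X:A'..B'$: applying $\sigma$ leaves that $X$ free with substituted bounds, so the residual judgment lives under a nonempty type-variable context.

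The structural cases are routine. For (SObj) and (SR) I reduce to per-method subtyping and apply the induction hypothesis to each method, the name-inclusion side condition being unaffected by substitution; (SST) applies the component induction hypotheses; (STrans) composes the two induction hypotheses and re-applies transitivity; (SImpl) is immediate since a primitive signature $I$ is unchanged by $\sigma$. Rules (SVar), (SPVar), and (SPObj) are preserved verbatim, because $\sigma$ replaces only declassification type variables $X$ and never rewrites $\alpha$, $\beta$, or $\primt$. Case (SSubEq) requires the auxiliary fact that type equivalence is stable under substitution, i.e. $\typeeq{O_1}{O_2}$ implies $\typeeq{\sigma(O_1)}{\sigma(O_2)}$; this is straightforward since $\equiv$ is the congruence generated by $\alpha$-renaming and fold/unfold, all of which commute with substituting closed types for free type variables. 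For (SM), the generalized statement lets me apply the induction hypothesis to each premise with $\DeltaX_0$ extended by $X:A'..B'$, yielding premises under $\sigma(\DeltaX_0), X:\sigma(A')..\sigma(B')$---exactly the context needed to reassemble the rule---while the bound-inclusion premises $\subtp{\DeltaX;\GammaSub}{B'}{B}$ and $\subtp{\DeltaX;\GammaSub}{A}{A'}$ are discharged by the induction hypothesis directly.

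The real work, and the main obstacle, is in the variable cases (SGVar1) and (SGVar2). Consider (SGVar1), with conclusion $\subtp{\DeltaX,\DeltaX_0;\GammaSub}{X}{B}$ for some $X:A..B$. If $X$ lies in $\DeltaX_0$ then $\sigma(X)=X$ and (SGVar1) re-applies under $\sigma(\DeltaX_0)$. The difficult subcase is $X \in \dom(\sigma)$: writing $T_X = \sigma(X)$ and splitting $\DeltaX$ as a prefix $\DeltaX'$ immediately followed by $X:A..B$, the definition of $\nsetd{\DeltaX}$ guarantees $\DeltaX' \vdash T_X \in A..B$, hence $\subtp{\DeltaX';\bigcdot}{T_X}{B}$ with the prefix substitution $\sigma'$ satisfying $\sigma' \in \nsetd{\DeltaX'}$. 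Because $\sigma$ ranges over \emph{closed} types, $T_X$ is closed and $B$ mentions only variables of $\DeltaX'$, so I may invoke the lemma on the strictly shorter environment $\DeltaX'$ to obtain $\subtp{\bigcdot;\bigcdot}{\sigma'(T_X)}{\sigma'(B)} = \subtp{\bigcdot;\bigcdot}{T_X}{\sigma(B)}$, which I then weaken to $\subtp{\sigma(\DeltaX_0);\GammaSub}{T_X}{\sigma(B)}$; case (SGVar2) is dual. This appeal to the lemma on a shorter environment is why the induction cannot be on the subtyping derivation alone: I would run a lexicographic induction with outer measure the length of $\DeltaX$ and inner measure the subtyping derivation, so that the variable cases are discharged by the outer induction hypothesis. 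The only supporting facts I rely on are the closedness of $\sigma$'s range, the scoping discipline that bounds reference only earlier type variables, and a trivial weakening lemma for subtyping.
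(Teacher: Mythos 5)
The paper states this lemma without any proof (it is one of the unproved auxiliary lemmas in the appendix), so there is nothing to compare against directly; judged on its own, your argument is essentially correct and supplies exactly the two ideas that a bare induction on the derivation would miss. First, the generalization over a residual context $\DeltaX_0$ (and over $\GammaSub$) is genuinely required: rule (SM) re-enters the judgment under $\DeltaX, X{:}A'..B'$, and rules (SObj)/(SPObj) extend $\GammaSub$, so the ungeneralized statement is not closed under the induction. Second, the lexicographic measure (length of $\DeltaX$ outer, derivation inner) is the right fix for (SGVar1)/(SGVar2): there the definition of $\nsetd{\DeltaX}$ hands you a \emph{fresh} derivation $\subtp{\DeltaX';\bigcdot}{\sigma(X)}{B}$ over the strictly shorter prefix $\DeltaX'$, not a subderivation, and your appeal to the outer hypothesis plus closedness of $\sigma$'s range and the prefix-scoping of bounds (guaranteed by $\wfe{\DeltaX}$) correctly turns it into $\subtp{\bigcdot;\bigcdot}{\sigma(X)}{\sigma(B)}$. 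The remaining cases ((SSubEq) via stability of $\equiv$ under substitution of closed types, (SImpl), the self-type-variable and primitive rules being untouched by $\sigma$) are handled as you say. The only caveats are bookkeeping ones you already gesture at: the fresh bound variable in (SM) must be $\alpha$-renamed away from $\dom(\sigma)$, and the weakening of the empty-context judgment back into $\sigma(\DeltaX_0);\GammaSub$ is an (unstated in the paper, but unproblematic) auxiliary lemma. In short, your proof fills a hole the paper leaves open, and does so with the standard and correct strategy.
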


\subsubsection{Type substitution preserves interval subtyping}

\begin{lemma}[Type substitution preserves interval subtyping]
\label{lm:gobsec-type-subst-preserves-interval-subtyping-n}
\mbox{}
\\
Let $\sigma \in \gsetd{\DeltaX}$ and $\DeltaX |- U$ , $\DeltaX |- A$, $\DeltaX |- B$\\
If $\DeltaX |- U \in A .. B$ \\
then $\rhosyn{U} \in \rhosyn{A} .. \rhosyn{B}$
\end{lemma}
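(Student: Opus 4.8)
The plan is to observe that interval membership is, by definition, nothing more than a conjunction of two ordinary subtyping judgments, so the claim follows immediately by applying the subtyping-preservation lemma (Lemma~\ref{lm:gobsec-type-subst-preserves-subtyping-n}) to each conjunct. Concretely, I would first unfold the single rule defining $\DeltaX |- U \in A..B$ from Figure~\ref{fig:gobsec-aux-definitions-appendix}, which yields the two premises $\subtp{\DeltaX;\bigcdot}{A}{U}$ and $\subtp{\DeltaX;\bigcdot}{U}{B}$.

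Next, since $\sigma \in \nsetd{\DeltaX}$ and $A$, $U$, $B$ are well-formed under $\DeltaX$, I would invoke Lemma~\ref{lm:gobsec-type-subst-preserves-subtyping-n} once on each premise. This turns the two open subtyping facts into their closed counterparts $\subtp{\bigcdot;\bigcdot}{\sigma(A)}{\sigma(U)}$ and $\subtp{\bigcdot;\bigcdot}{\sigma(U)}{\sigma(B)}$. Re-applying the interval rule, now with the empty type-variable environment, repackages these two judgments into $\bigcdot |- \sigma(U) \in \sigma(A)..\sigma(B)$, which is exactly the desired conclusion. There is no induction and no case analysis at this level: the lemma is a direct two-step consequence of subtyping preservation.

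The genuine content is therefore entirely deferred to Lemma~\ref{lm:gobsec-type-subst-preserves-subtyping-n}, which is where the hypothesis $\sigma \in \nsetd{\DeltaX}$ actually does work, so the only things I expect to require care are bookkeeping checks rather than any real obstacle. Specifically, the subtleties all concern the case where $A$, $U$, or $B$ is a type variable $X$: then the derivations of $A <: U$ or $U <: B$ appeal to rules (SGVar1)/(SGVar2) and the bounds recorded for $X$ in $\DeltaX$, and their preservation under $\sigma$ hinges on $\sigma$ mapping each such $X$ to a type lying within those bounds, which is precisely what membership in $\nsetd{\DeltaX}$ guarantees. Accordingly, the two points I would be careful to state explicitly are that Lemma~\ref{lm:gobsec-type-subst-preserves-subtyping-n} is applicable to full declassification types (type variables included, not merely the object-type fragment), and that the well-formedness hypotheses $\DeltaX |- U$, $\DeltaX |- A$, $\DeltaX |- B$ are exactly what licenses treating $\sigma$ as a closing substitution on these types.
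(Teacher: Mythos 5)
Your proposal matches the paper's proof exactly: the paper also just unfolds $\DeltaX \vdash U \in A..B$ into its two subtyping premises and applies Lemma~\ref{lm:gobsec-type-subst-preserves-subtyping-n} to each. Your additional remarks about the type-variable cases and well-formedness are sound bookkeeping but introduce nothing beyond what the paper's one-line proof already relies on.
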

\begin{proof}
Apply Lemma~\ref{lm:gobsec-type-subst-preserves-subtyping-n} (\nameref{lm:gobsec-type-subst-preserves-subtyping-n}) for each subgoal.
\end{proof}

\subsubsection{Interval subtyping expansion}
\begin{lemma}[Interval subtyping expansion]
\label{lm:gobsec-interval-subtyping-expansion}
\mbox{}
\\
Let $\sigma \in \gsetd{\DeltaX}$ and $\bigcdot |- U$ , $\DeltaX |- A$, $\DeltaX |- B$\\
If $\bigcdot |- U \in \sigma(A) .. \sigma(B)$ \\
then $\DeltaX |- U \in A .. B$
\end{lemma}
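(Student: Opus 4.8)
The plan is to unfold both interval judgments into the pairs of subtyping obligations they abbreviate and then discharge each half separately. By the single rule defining $\DeltaX |- U \in A..B$, the goal is equivalent to $\subtp{\DeltaX;\bigcdot}{A}{U}$ together with $\subtp{\DeltaX;\bigcdot}{U}{B}$; dually, the hypothesis $\bigcdot |- U \in \sigma(A)..\sigma(B)$ gives $\subtp{\bigcdot;\bigcdot}{\sigma(A)}{U}$ and $\subtp{\bigcdot;\bigcdot}{U}{\sigma(B)}$. Since $U$ is closed we have $\sigma(U)=U$, so the two halves are genuinely dual: it suffices to show that $\subtp{\bigcdot;\bigcdot}{\sigma(A)}{U}$ implies $\subtp{\DeltaX;\bigcdot}{A}{U}$ for a well-formed bound $\DeltaX |- A$ and closed $U$, the upper-bound half (from $\subtp{\bigcdot;\bigcdot}{U}{\sigma(B)}$ to $\subtp{\DeltaX;\bigcdot}{U}{B}$) being obtained by the same argument with the subtype/supertype sides swapped. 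This reduction turns the lemma into a statement about reflecting a single closed subtyping fact back into the $\DeltaX$ world, and is the exact converse companion to Lemma~\ref{lm:gobsec-type-subst-preserves-subtyping-n}.

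First I would proceed by following the structure of $A$ and of the closed subtyping derivation, pushing $\sigma$ inward through type constructors. When the outermost shape of $A$ is an object (or primitive) type, $\sigma$ distributes over the constructor, so the structural rules (SObj), (SSubEq) and (STrans) are matched by recursing on the component subtyping judgments, using that $\sigma$ commutes with folding/unfolding and with the self-variable machinery of (SObj). The base case where $A$ contains no free type variable is immediate: there $\sigma(A)=A$, the hypothesis already reads $\subtp{\bigcdot;\bigcdot}{A}{U}$, and I invoke the expected weakening lemma for subtyping, namely that a derivation valid in the empty type-variable environment remains valid after extending it with the (unused) declarations of $\DeltaX$. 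This weakening lemma is proved by a routine induction on the subtyping derivation and is the exact mirror of Lemma~\ref{lm:gobsec-type-subst-preserves-subtyping-n}.

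The interesting case, and the one I expect to be the main obstacle, is when $A$ is (or unfolds to expose) a free type variable $X$ with $X:A_X..B_X \in \DeltaX$. Here $\sigma(A)=\sigma(X)$, and the membership $\sigma \in \gsetd{\DeltaX}$ records precisely that the instantiation $\sigma(X)$ is bracketed by the $\sigma$-images of its declared bounds, while abstractly $\subtp{\DeltaX;\bigcdot}{A_X}{X}$ and $\subtp{\DeltaX;\bigcdot}{X}{B_X}$ hold by rules (SGVar2) and (SGVar1). The plan is to reflect the closed fact $\subtp{\bigcdot;\bigcdot}{\sigma(X)}{U}$ back onto the abstract variable by combining these bound facts with transitivity (STrans), crucially exploiting that $U$ is closed so that the declared bounds of $X$ are the only channel through which $X$ can be related to $U$. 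Making this step precise requires an auxiliary reflection lemma describing how a closed subtyping against a closed $U$ must factor through the bounds recorded in $\DeltaX$, with care taken when $A_X$ or $B_X$ are themselves type variables so that the nested lookups $\ubound{\DeltaX}{X}$ terminate under the acyclicity assumption on $\DeltaX$. Once this reflection step is established, the remaining bookkeeping—duality of the two halves, the closed base case via weakening, and reassembling the two subtyping obligations into $\DeltaX |- U \in A..B$—is routine.
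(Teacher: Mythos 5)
The paper states Lemma~\ref{lm:gobsec-interval-subtyping-expansion} in the appendix with no proof at all, so there is no official argument to compare against and your proposal has to stand on its own. Your unfolding of both interval judgments into pairs of subtyping obligations, the observation that $\sigma(U)=U$ for closed $U$, and the weakening argument for the variable-free case are all fine. The genuine gap is exactly the case you yourself flag as ``the main obstacle'': the reflection step for a type variable does not go through, and in fact it is false. Take $\DeltaX = X : O_1..\top$ for some non-empty object type $O_1$, let $\sigma = \xsigma{X}{O_1}$ (which is in $\gsetd{\DeltaX}$ since $\bigcdot \vdash O_1 \in O_1..\top$), and take $A = X$, $B = \top$, $U = O_1$. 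The hypothesis $\bigcdot \vdash U \in \sigma(A)..\sigma(B)$ reduces to $O_1 \sub O_1$ and $O_1 \sub \top$, both derivable; but the goal requires $\subtp{\DeltaX;\bigcdot}{X}{O_1}$, which is underivable: the only way to put $X$ on the left of $\sub$ against a non-variable is (SGVar1), which yields $X \sub \top$, and $\top \sub O_1$ fails by (SR) since $O_1$'s methods are not contained in the empty interface, so no use of (STrans) or (SSubEq) can close the derivation. Your own remark that the declared bounds of $X$ are the only channel relating $X$ to a closed $U$ is correct, and it is precisely what kills the step: the hypothesis constrains $\sigma(X)$, which may sit strictly below the upper bound $B_X$, so $\subtp{\bigcdot;\bigcdot}{\sigma(X)}{U}$ carries no information about $\subtp{\DeltaX;\bigcdot}{B_X}{U}$, which is what (SGVar1) plus transitivity would need. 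The dual half, $\subtp{\DeltaX;\bigcdot}{U}{B}$ when $B$ is a variable, fails symmetrically through (SGVar2) and the lower bound.

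So the auxiliary ``reflection lemma'' you defer to cannot be proved, and the lemma holds only under extra hypotheses (for instance that $A$ and $B$ are variable-free, in which case the whole statement collapses to the weakening argument you already gave). Since the paper supplies no proof either, this looks like a gap in the paper's own development rather than something you overlooked: the lemma is invoked in the object compatibility case to establish $\xsigma{X}{T'} \in \nsetd{\DeltaX,X:A..B}$, and a repair would more plausibly go through adjusting the definition of $\nsetd{\cdot}$ (closing the declared bounds under the accumulated substitution) than through proving the statement as written.
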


\subsubsection{Downward closed/Monotonicity}
\begin{lemma}[Downward closed/Monotonicity]
\label{lm:gobsec-monotonicity}
\mbox{}
\\
Let $\bigcdot |- S$ \\
If $(k,v_1,v_2) \in \gsetv{S}$ and  $j \le k$  \\
then $(j,v_1,v_2) \in \gsetv{S}$
\end{lemma}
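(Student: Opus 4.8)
The plan is to proceed by a direct case analysis on the closed declassification facet of $S$, with no induction on the step index required. Since $\bigcdot |- S$ and $S \triangleq \stype{T}{U}$, the facet $U$ is a \emph{closed} type, hence it is either a primitive type $\primt$ (in which case $T = \primt$) or an object type $O$; there is no type-variable facet $\stype{T}{X}$ to handle. The key structural observation underlying the whole argument is that, in every clause defining $\nsetv{\cdot}$ (Figure~\ref{fig:gobsec-lg-prni-forall-type}), the index $k$ enters in only two monotone ways: through the atom membership $\atomunion{T}$, and as a strict upper bound on universally quantified smaller indices $j' < k$. Lowering $k$ to $j \le k$ therefore only relaxes the atom constraint and shrinks the range of quantified indices, so the required membership follows without strengthening any hypothesis.

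First I would record the downward closure of atoms: if $(k,v_1,v_2) \in \atomunion{T}$ and $j \le k$, then $(j,v_1,v_2) \in \atomunion{T}$. This is immediate from the definitions, since membership at $k$ supplies some $n$ with $k < n$ and $|-_1 v_i : T$, and from $j \le k < n$ we get $(j,v_1,v_2) \in \atomone{n}{T} \subseteq \atomunion{T}$. For the primitive facet $U = \primt$, membership $(k,v_1,v_2) \in \nsetv{\stype{\primt}{\primt}}$ forces $v_1 = v_2 = \primb$ together with $(k,\primb,\primb) \in \atomunion{\primt}$; atom downward closure then yields $(j,\primb,\primb) \in \atomunion{\primt}$, i.e. $(j,v_1,v_2) \in \nsetv{\stype{\primt}{\primt}}$.

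For the object facet $U = O$, I would unfold $(k,v_1,v_2) \in \nsetv{\stype{T}{O}}$ into its atom obligation plus the two method obligations: one for standard signatures $\gmtype{X:A..B}{S'}{S''}$, quantified over $j' < k$, over $T'$ with $T' \in A..B$, and over argument values; and one for primitive signatures $\stype{\primt_1}{\ilab} -> \stype{\primt_2}{\ilab}$, quantified over $j' < k$ and argument values. To establish membership at $j$ I must discharge the same obligations, now quantified over $j' < j$. The atom obligation follows from atom downward closure. For either method obligation I fix an arbitrary $j' < j$ with its accompanying witnesses; since $j \le k$ we have $j' < k$, so I may instantiate the corresponding $k$-level obligation at exactly these witnesses, and its conclusion---related invocations in $\nsetc{\ssubst{S''}{T'}{X}}$ in the standard case, respectively in $\nsetc{\stype{\primt_2}{\rpolicy(\stype{\primt_1}{U_1},\primt_2)}}$ in the primitive case---is verbatim the conclusion required at index $j$. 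This settles both clauses and hence the object case.

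The one point deserving care, which I would flag explicitly, is well-foundedness: the definition of $\nsetv{\stype{T}{O}}$ refers to itself in the antecedent $(j',v_1,v_2) \in \nsetv{\stype{T}{O}}$. Because that recurrence sits strictly below $k$ (namely $j' < k$), and because the sets $\nsetv{\cdot}$ at indices below $j$ are fixed independently of whether one is establishing membership at $j$ or at $k$, no circularity arises; the standard-signature implication to be proved at each $j' < j$ is literally the same implication already guaranteed by the $k$-level hypothesis. Consequently the lemma needs neither an induction on $k$ nor an appeal to the recursive clause, and the direct argument above suffices.
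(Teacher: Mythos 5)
Your proof is correct and follows essentially the same route as the paper's: a case analysis reducing to $\stype{\primt}{\primt}$ and $\stype{T}{O}$, with each obligation at index $j$ discharged by instantiating the corresponding $k$-level quantifier at the same witnesses, using $j' < j \le k$. The only cosmetic difference is that the paper nominally frames this as an induction on $S$ while you observe (correctly) that no induction hypothesis is ever needed, since the antecedent occurrences of $\nsetv{\stype{T}{O}}$ at lower indices are hypotheses rather than proof obligations.
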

\begin{proof}
The proof is by induction on $S$. All valid cases boil down to  $\stype{\primt}{\primt}$ and $\stype{T}{O}$

\begin{case}[S = $\stype{\primt}{\primt}$]
This is direct from the definition of $\nsetv{\stype{\primt}{\primt}}$
\end{case}
\begin{case}[S = $\stype{T}{O}$]
~

Proof obligations:
\begin{enumerate}
	\item $(j,v_1,v_2) \in \atomunion{T}$. This follows directly from $(k,v_1,v_2) \in \gsetv{S}$
	\item Assuming arbitrary $m,j',T',v'_1,v'_2$ such as:
		\begin{itemize}
			\item $m \in O~\wedge~ \tlookup{\bigcdot;O}{m}{\gmtype{X:A..B}{S_1}{S_2}}$
			\item $j'< j$
			\item $|- T' ~\wedge~ T' \in A..B$
			\item $(j',v_1,v_2) \in \nsetv{S}$
			\item $(j',v'_1,v'_2) \in \nsetv{\ssubst{S_1}{X}{T'}}$
		\end{itemize}
		
		\textbf{Show:}
		\begin{center}
			$(j, \gminv{v_1}{m}{T'}{v'_1}, \gminv{v_2}{m}{T'}{v'_2}) \in \nsetc{\ssubst{S''}{X}{T'}}$
		\end{center}
		
		Instantiate the first conjunct of $(k,v_1,v_2) \in \nsetv{S}$  with $m,j',T',v'_1,v'_2$. Note that:
		\begin{itemize}			
			\item $m \in O~\wedge~ \tlookup{\bigcdot;O}{m}{\gmtype{X:A..B}{S_1}{S_2}}$			
			\item $j'< k$. It follows from $j'<j \le k$
			\item $|- T' ~\wedge~ T' \in A..B$
			\item $(j',v_1,v_2) \in \nsetv{S}$
			\item $(j',v'_1,v'_2) \in \nsetv{\ssubst{S_1}{X}{T'}}$
		\end{itemize}
		
		Hence $(j, \gminv{v_1}{m}{T'}{v'_1}, \gminv{v_2}{m}{T'}{v'_2}) \in \nsetc{\ssubst{S''}{X}{T'}}$
	\item Assuming arbitrary $m,j,v'_1,v'_2$ such as:
		\begin{itemize}
			\item $m \in O \quad \tlookup{\bigcdot,O}{m}{\stype{\primt_1}{\ilab} -> \stype{\primt_2}{\ilab}}$
			\item $(j,v'_1,v'_2) \in \nsetv{\stype{\primt_1}{U_1}}$
		\end{itemize}
		\textbf{Show:}
		\begin{center}
				$(j',v_1.m(v'_1),v_2.m(v'_2)) \in \nsetc{\stype{\primt_2}{\rpolicy(\stype{\primt_1}{U_1},\primt_2)}}$
		\end{center}
		
		Instantiate the second conjunct of $(k,v_1,v_2) \in \nsetv{S}$  with $m,j',v'_1,v'_2$. Note that:
		\begin{itemize}
			\item $m \in O \quad \tlookup{\bigcdot,O}{m}{\stype{\primt_1}{\ilab} -> \stype{\primt_2}{\ilab}}$
			\item $j'< k$. It follows from $j'<j \le k$
			\item $(j,v'_1,v'_2) \in \nsetv{\stype{\primt_1}{U_1}}$
		\end{itemize}
		
		Hence $(j',v_1.m(v'_1),v_2.m(v'_2)) \in \nsetc{\stype{\primt_2}{\rpolicy(\stype{\primt_1}{U_1},\primt_2)}}$
\end{enumerate}
\end{case}

\end{proof}

\subsubsection{Syntactic equivalences implies semantic equivalence}
\syntacticImpliesSemantic*
\begin{proof}

Assuming arbitrary $m,j,v'_1,v'_2$ such as:
\begin{itemize}
	\item $m \in O \quad \tlookup{\bigcdot,O}{m}{\stype{\primt_1}{\ilab} -> \stype{\primt_2}{\ilab}}$
	\item $(j,v'_1,v'_2) \in \nsetv{\stype{\primt_1}{U_1}}$
\end{itemize}

\textbf{Show:}
\begin{center}
		$(j,v_1.m(v'_1),v_2.m(v'_2)) \in \nsetc{\stype{\primt_2}{\rpolicy(\stype{\primt_1}{U_1},\primt_2)}}$
\end{center}

\begin{case}[$U_1 = \primt_1$]
The $v'_1 = v'_2 = \primb'$ and we have \textbf{to show}
	\begin{center}
		$(j,\primb.m(\primb'),\primb.m(\primb')) \in \nsetc{\stype{\primt_2}{\primt_2}}$ \\
		$\equiv (j,\theta(m,\primb,\primb'),\theta(m,\primb,\primb')) \in \nsetc{\stype{\primt_2}{\primt_2}}$
	\end{center}
	
	which follows from the assumption that $\theta$ is partial function that respects that signatures of $\methods(\primt)$.
\end{case}
\begin{case}[$U_1 \neq \primt_1$]
Then we have \textbf{to show}
	\begin{center}
		$(j,\primb.m(v'_1),\primb.m(v'_2)) \in \nsetc{\stype{\primt_2}{\top}}$
	\end{center}
	
	which trivially follows by using Lemma~\ref{lm:gobsec-well-typed-term-related-top} (\nameref{lm:gobsec-well-typed-term-related-top})
\end{case}

\end{proof}

\subsubsection{PER Subtyping}
\begin{lemma}[PER Subtyping]
\label{lm:gobsec-per-types-subtyping-n}
\mbox{}
\\
Let $\bigcdot |- S$ , $\bigcdot |- S'$ and  $\subtps{\bigcdot;\bigcdot}{S'}{S}$\\
(1) If $(k,v_1, v_2) \in \gsetv{S'}$ then $(k,v_1,v_2) \in \gsetv{S}$ \\
(2) If $(k,e_1,e_2) \in \nsetc{S'}$ then $(k,e_1,e_2) \in \nsetc{S}$
\end{lemma}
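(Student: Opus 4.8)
The plan is to prove the two parts by strong induction on the step index $k$, establishing part (1) at index $k$ first---appealing to the induction hypothesis only at indices strictly below $k$---and then deriving part (2) at index $k$ from part (1) at index $k$. Both parts begin with inversion of rule (SST) on $\subtps{\bigcdot;\bigcdot}{S'}{S}$, which, writing $S' = \stype{T'}{U'}$ and $S = \stype{T}{U}$, yields $\subtps{\bigcdot;\bigcdot}{T'}{T}$ and $\subtps{\bigcdot;\bigcdot}{U'}{U}$. For part (2) the atom condition $(k,e_1,e_2) \in \atomunion{T}$ follows from membership in $\atomunion{T'}$ together with $\subtps{\bigcdot;\bigcdot}{T'}{T}$ by the argument of Lemma~\ref{lm:gobsec-atom-subtyping} (whose proof appeals only to rule (T1Sub) and so applies unchanged to arbitrary expressions); and whenever $e_1 \reduce^{\le j} v_1$ and $e_2 \reduce^{\le j} v_2$ with $j < k$, the hypothesis gives $(k-j,v_1,v_2) \in \nsetv{S'}$, which part (1) at index $k-j \le k$ converts to $(k-j,v_1,v_2) \in \nsetv{S}$.

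For part (1) I would unfold the closed well-formed declassification types $U$ and $U'$, each of which must be an object type or a primitive type, and split into cases. The degenerate cases come first. If $U$ is a primitive $\primt$, facet-wise well-formedness of $\stype{T}{U}$ forces $T = \primt$, and since no object type or variable is a subtype of a primitive, $\subtps{\bigcdot;\bigcdot}{U'}{\primt}$ forces $U' = \primt$; hence $S = S' = \stype{\primt}{\primt}$ up to equivalence and the claim is immediate. If instead $U$ is an object type $O$ while $U'$ is a primitive $\primt$, then well-formedness forces $T' = \primt$, so $(k,v_1,v_2) \in \nsetv{\stype{\primt}{\primt}}$ gives $v_1 = v_2 = \primb$; here I would apply Lemma~\ref{lm:syntactic-implies-semantic} (using $\primt <: O$, available from $\subtps{\bigcdot;\bigcdot}{U'}{U}$) to obtain $(k,\primb,\primb) \in \nsetv{\stype{\primt}{O}}$, and then raise the safety facet from $\primt$ to $T$ using $\primt <: T$ and Lemma~\ref{lm:gobsec-atom-subtyping}, the method clauses being independent of the safety facet.

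The main case is when $U$ and $U'$ are object types $O$ and $O'$ with $\subtps{\bigcdot;\bigcdot}{O'}{O}$. By inversion of (SObj)/(SR), every method $m$ of $O$ is a method of $O'$, and the closed signatures returned by $\mathsf{msig}$ satisfy $\tlookup{O'}{m}{M_{O'}}$, $\tlookup{O}{m}{M_O}$ with $M_{O'} <: M_O$ once the self type has been unfolded to the respective object type. For a standard signature, (SM) inversion gives wider bounds on the subtype ($A_{O'} <: A_O$ and $B_O <: B_{O'}$), contravariance of the argument ($S_1^{O} <: S_1^{O'}$ under $X:A_O..B_O$), and covariance of the result ($S_2^{O'} <: S_2^O$ under $X:A_O..B_O$). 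To discharge the clause for $\stype{T}{O}$ at some $j < k$, instantiation $T'$, and arguments $v_1',v_2'$, I would feed the corresponding clause of the hypothesis $(k,v_1,v_2) \in \nsetv{\stype{T'}{O'}}$ at the same $T'$: the bound check $T' \in A_{O'}..B_{O'}$ follows from $T' \in A_O..B_O$ by bound containment; the required self-relatedness $(j,v_1,v_2) \in \nsetv{\stype{T'}{O'}}$ follows from the hypothesis by Lemma~\ref{lm:gobsec-monotonicity}; and the given argument relatedness $(j,v_1',v_2') \in \nsetv{\ssubst{S_1^O}{T'}{X}}$ is converted to $(j,v_1',v_2') \in \nsetv{\ssubst{S_1^{O'}}{T'}{X}}$ by the induction hypothesis (part 1) at $j < k$, after turning $S_1^O <: S_1^{O'}$ into $\ssubst{S_1^O}{T'}{X} <: \ssubst{S_1^{O'}}{T'}{X}$ via Lemma~\ref{lm:gobsec-type-subst-preserves-subtyping-n} with $[X \mapsto T'] \in \nsetd{X:A_O..B_O}$. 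The hypothesis then lands the invocations in $\nsetc{\ssubst{S_2^{O'}}{T'}{X}}$, and I would convert this to the required $\nsetc{\ssubst{S_2^O}{T'}{X}}$ by the induction hypothesis (part 2) at $j$, again using substitution-preservation of subtyping on $S_2^{O'} <: S_2^O$. For a primitive signature $I$, rule (SImpl) forces $O'$ to carry the very same $I$, so that clause transfers verbatim up to monotonicity on the step index; the atom condition $(k,v_1,v_2) \in \atomunion{T}$ follows as in part (2).

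The step I expect to be the main obstacle is the bookkeeping in this last case: correctly extracting the closed signature subtyping $M_{O'} <: M_O$ from the recursive record-subtyping derivation, which is stated relative to the self-variable assumption $\alpha <: \beta$ and must be reconciled with the self-substituted signatures returned by $\mathsf{msig}$, while simultaneously aligning the contravariant and covariant substitution-preservation facts with the correct step indices so that every appeal to the induction hypothesis falls at a strictly smaller index. The well-foundedness of the whole argument rests on this index decrease, which is guaranteed because each method clause of $\nsetv{\stype{T}{O}}$ quantifies only over $j < k$.
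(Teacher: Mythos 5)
Your proposal is correct and follows essentially the same route as the paper's proof: a simultaneous induction on the step index (with part (1) feeding part (2)), the atom-subtyping lemma for the typing side conditions, the monotonicity lemma for the self-relatedness premise, the induction hypothesis at smaller indices combined with substitution-preservation of subtyping for the contravariant/covariant signature components, Lemma~\ref{lm:syntactic-implies-semantic} for the case where the subtype is $\stype{\primt}{\primt}$ but the supertype carries an object declassification facet, and rule (SImpl) rigidity for primitive signatures. The only differences are organizational (you split on whether $U'$ is primitive up front, whereas the paper does that case analysis inside the primitive-signature obligation) and that you are somewhat more explicit than the paper about supplying the self-relatedness premise and about why the mutual induction is well-founded.
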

\begin{proof}

We proof the statements (1) and (2) simultaneously.

Induction on $k$ and then nested induction on $S$. 

We focus on $k > 0$ (the case $k=0$ is trivial).

Statement (1):

All valid cases for $S$ boils down to $\stype{\primt}{\primt}$ and $\stype{T}{O}$.

\begin{case}[S = $\stype{\primt}{\primt}$]
~

Proof obligations:
\begin{enumerate}
	\item $(k,v_1,v_2) \in \atomunion{\primt}$. Apply Lemma~\ref{lm:gobsec-atom-subtyping} (\nameref{lm:gobsec-atom-subtyping})
	\item $v_1 = v_2 = b$. From the third hypothesis we know that $S' = \stype{\primt}{\primt}$ and for the second one we know that $v_1 = v_2 = b$
\end{enumerate}
\end{case}
\begin{case}[S = $\stype{T}{O}$]
~

Denote $S' = \stype{T'}{O'}$

Proof obligations:
\begin{enumerate}
	\item $(k,v_1,v_2) \in \atomunion{T}$. Apply Lemma~\ref{lm:gobsec-atom-subtyping} (\nameref{lm:gobsec-atom-subtyping})
	\item Assuming arbitrary $m,j<k,T',v'_1,v'_2$ such as:
		\begin{itemize}
			\item $m \in O \quad \tlookup{\bigcdot,O}{m}{\gmtype{X:A..B}{S_1}{S_2}}$
			\item $|- T'~\wedge~ T' \in A..B$
			\item $(j,v_1,v_2) \in \nsetv{S}$
			\item $(j,v'_1,v'_2) \in \nsetv{\ssubst{S_1}{X}{T'}}$
		\end{itemize}
		
		\textbf{Show:}
		\begin{center}
		 $(j,\gminv{o_1}{m}{T'}{v'_1}, \gminv{o_2}{m}{T'}{v'_2}) \in \nsetc{\ssubst{S_2}{X}{T'}})$
		\end{center}
		
		Instantiate $(k,v_1, v_2) \in \gsetv{\stype{T'}{O'}}$ with $m,j,T',v'_1,v'_2$. Note that:
		\begin{itemize}
			\item $m \in O' \quad \tlookup{\bigcdot,O'}{m}{\gmtype{X:L'..U'}{S'_1}{S'_2}}$
			\item $j<k$
			\item $|- T'~\wedge~ T' \in A'..B'$.
			\item $(j,v'_1,v'_2) \in \nsetv{\ssubst{S'_1}{X}{T'}}$. Apply the IH with $(j,v'_1,v'_2) \in \nsetv{\ssubst{S_1}{X}{T'}}$ and
				$\subtps{\bigcdot;\bigcdot}{\ssubst{S_1}{X}{T'}}{\ssubst{S'_1}{X}{T'}}$
		\end{itemize}
		
		Hence, $(j,\gminv{o_1}{m}{T'}{v'_1}, \gminv{o_2}{m}{T'}{v'_2}) \in \nsetc{\ssubst{S''_2}{X}{T'}})$
		
		Apply IH, statement (2) with $\subtps{\bigcdot;\bigcdot}{\ssubst{S''_2}{X}{T'}}{\ssubst{S_2}{X}{T'}}$ to obtain
		
		$(j,\gminv{o_1}{m}{T'}{v'_1}, \gminv{o_2}{m}{T'}{v'_2}) \in \nsetc{\ssubst{S_2}{X}{T'}})$
		
	\item Assuming arbitrary $m,j,v'_1,v'_2$ such as:
		\begin{itemize}
			\item $m \in O \quad \tlookup{\bigcdot,O}{m}{\stype{\primt_1}{\ilab} -> \stype{\primt_2}{\ilab}}$
			\item $(j,v'_1,v'_2) \in \nsetv{\stype{\primt_1}{U_1}}$
		\end{itemize}
		\textbf{Show:}
		\begin{center}
				$(j,v_1.m(v'_1),v_2.m(v'_2)) \in \nsetc{\stype{\primt_2}{\rpolicy(\stype{\primt_1}{U_1},\primt_2)}}$
		\end{center}
		
		Do a case analysis on $S'$. Each case reduces to $S'= \stype{T'}{O'}$ or $S'= \stype{\primt'}{\primt'}$
		\begin{case}[$S'= \stype{T'}{O'}$]
			Instantiate $(k,v_1, v_2) \in \gsetv{\stype{T'}{O'}}$ with $m,j,v'_1,v'_2$. Note that:
			\begin{itemize}
				\item $m \in O \quad \tlookup{\bigcdot,O}{m}{\stype{\primt_1}{\ilab} -> \stype{\primt_2}{\ilab}}$. Recall, that there 
				is no subtyping rules between primitive types.
				\item $(j,v'_1,v'_2) \in \nsetv{\stype{\primt_1}{U_1}}$
			\end{itemize}	
			
			Hence $(j,v_1.m(v'_1),v_2.m(v'_2)) \in \nsetc{\stype{\primt_2}{\rpolicy(\stype{\primt_1}{U_1},\primt_2)}}$
		
		\end{case}
		\begin{case}[$S'= \stype{\primt'}{\primt'}$]
		~
		
		It means that $v_1 = v_2 = b$. Apply Lemma~\ref{lm:syntactic-implies-semantic} (\nameref{lm:syntactic-implies-semantic}) with 
		$b$,$\primt'$ and $O$. Note that $\stypeof{~}{b}{\primt'}$ and $\primt' <: O$.

		\end{case}
\end{enumerate}
\end{case}

Statement (2):

Denote $S \defas \stype{T}{U}$ and $S' \defas \stype{T'}{U'}$

Proof obligations:
\begin{enumerate}
	\item $(k,e_1,e_2) \in \atomunion{T}$. Apply Lemma \ref{lm:gobsec-atom-subtyping} (\nameref{lm:gobsec-atom-subtyping}) 
	with $(k,e_1,e_2) \in \atomunion{T'}$ ($(k,e_1,e_2) \in \nsetc{S'}$) and
			$\subtps{\bigcdot;\bigcdot}{T'}{T}$
	\item Assuming arbitrary $j<k,v_1,v_2$ such as $j<k$: 
		\begin{itemize}
			\item $e_1 \reduce^{j} v_1$
			\item $e_2 \reduce^{j} v_2$
		\end{itemize}
		
		\textbf{Show:}		
		\begin{center}
			$(j,v_1,v_2) \in \gsetv{S}$
		\end{center}
		
		Instantiate $(k,e_1,e_2) \in \nsetc{S'}$  with $j,v_1,v_2$ to obtain:
		$(k,v_1,v_2) \in \nsetv{S'}$.
		
		Apply the IH, statement (1) with $(k,v_1,v_2) \in \nsetv{S'}$ and
		$\subtps{\bigcdot;\bigcdot}{S'}{S}$ to obtain:
		
		$(k,v_1,v_2) \in \nsetv{S}$	
\end{enumerate}
\end{proof}

\subsubsection{Anti reduction}
\begin{lemma}[Anti reduction]
\label{lm:gobsec-anti-reduction}
\mbox{}
\\
Let $S \defas \stype{T}{U}$ \\
Let $(j,e_1,e_2) \in \atomunion{T}$ \\
Let $j' \le j$ and $j \leq j' + k$ \\
Let $e_1 \reduce^{\le k} e'_1$ and $e_2 \reduce^{\le k} e'_2$ \\
Let $(j',e'_1,e'_2) \in \nsetc{S}$ \\
Then $(j,e_1,e_2) \in \nsetc{S}$
\end{lemma}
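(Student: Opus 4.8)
The plan is to unfold the definition of $\nsetc{\stype{T}{U}}$ and discharge its two requirements. Atom membership $(j,e_1,e_2)\in\atomunion{T}$ is exactly one of the hypotheses, so the real content lies in the second clause: for every $i<j$ with $e_1 \reduce^{\le i} v_1$ and $e_2 \reduce^{\le i} v_2$, one must show $(j-i,v_1,v_2)\in\nsetv{S}$. The idea is to route these two evaluations through the intermediate expressions $e'_1,e'_2$ and then feed the resulting tails into the hypothesis $(j',e'_1,e'_2)\in\nsetc{S}$.

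First I would fix such an $i<j$ together with the terminal values $v_1,v_2$, writing $n_1,n_2\ (\le i)$ for the exact lengths of $e_1 \reduce^{n_1} v_1$ and $e_2 \reduce^{n_2} v_2$, and $k_1,k_2\ (\le k)$ for the lengths of $e_1 \reduce^{k_1} e'_1$ and $e_2 \reduce^{k_2} e'_2$. Since the reduction relation of Figure~\ref{fig:gobsec-dynamic-semantics-appendix} is deterministic and values are irreducible, the two sequences issuing from each $e_\ell$ coincide; hence $e'_\ell$ lies on the path to $v_\ell$, giving $k_\ell\le n_\ell$ as well as the tail reductions $e'_1 \reduce^{\,n_1-k_1} v_1$ and $e'_2 \reduce^{\,n_2-k_2} v_2$.

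Next I would instantiate the computation-relation hypothesis $(j',e'_1,e'_2)\in\nsetc{S}$ at the shared reduction budget $i'' \defas \max(n_1-k_1,\,n_2-k_2)$, which is admissible because $e'_1 \reduce^{\le i''} v_1$ and $e'_2 \reduce^{\le i''} v_2$. Using $n_\ell\le i < j \le j'+k$ together with $k_\ell\le k$ one checks $i'' < j'$, so the hypothesis yields $(j'-i'',v_1,v_2)\in\nsetv{S}$. Finally, since $n_\ell\le i$ gives $j-i \le j'-i''$, I would descend along the step index by downward closure of the value relation (Lemma~\ref{lm:gobsec-monotonicity}) to obtain the required $(j-i,v_1,v_2)\in\nsetv{S}$.

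The crux of the argument, and the part deserving the most care, is the index bookkeeping in these last two steps: deriving $k_\ell\le n_\ell$ and the tail reductions from determinism, and then verifying the two inequalities $i''<j'$ and $j-i\le j'-i''$. This is precisely where the hypotheses $j'\le j\le j'+k$ and the reduction-length bounds $k_\ell\le k$ are consumed, the inequality $j \le j'+k$ being the essential one that guarantees enough residual budget in $\nsetc{S}$ at $j'$. Once these numerical facts are established, the remaining obligations (atom membership and the structural shape of $\nsetc{S}$) are routine.
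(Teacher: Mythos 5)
Your strategy is the same as the paper's: factor each reduction to a value through the intermediate term $e'_\ell$ by determinism, instantiate $(j',e'_1,e'_2)\in\nsetc{S}$ at the length of the residual reduction, and finish with downward closure of the value relation (Lemma~\ref{lm:gobsec-monotonicity}). You are in fact more explicit than the paper about the determinism argument and about naming the exact step counts $n_\ell$ and $k_\ell$.

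However, the two inequalities you defer to ``one checks'' do not actually follow from the hypotheses as stated. From $n_\ell\le i<j\le j'+k$ and $k_\ell\le k$ you only get $n_\ell-k_\ell < j'+(k-k_\ell)$, which is $<j'$ only when $k_\ell=k$. If a prefix reduction $e_\ell\reduce^{k_\ell}e'_\ell$ consumes strictly fewer than $k$ steps (for instance $e'_\ell=e_\ell$ with $k_\ell=0$ while $k>0$), both $i''<j'$ and $j-i\le j'-i''$ can fail, and the hypothesis $(j',e'_1,e'_2)\in\nsetc{S}$ may then say nothing about $v_1,v_2$ because the residual reductions exceed the budget $j'$. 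What the argument actually requires is $k_\ell\ge j-j'$ for each $\ell$, equivalently $j\le j'+\min(k_1,k_2)$ with $k_\ell$ the \emph{exact} prefix lengths (one then verifies $i''<j'$ and $i\ge \min(k_1,k_2)+i''$ using $\max_\ell(k_\ell+t_\ell)\ge\min_\ell k_\ell+\max_\ell t_\ell$). To be fair, the paper's own proof has precisely the same hole: it silently writes $j_1\ge k+j'_1$, i.e.\ it assumes each prefix consumes exactly $k$ steps. So the defect originates in the lemma statement, which phrases the index budget in terms of the upper bound $k$ on the reductions rather than the number of steps actually consumed; but since you singled out this bookkeeping as the crux and asserted that it goes through, you should be aware that, under the literal hypotheses, it does not.
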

\begin{proof}
Denote $S \defas \stype{T}{U}$.

Proof obligations:
\begin{enumerate}
	\item $(j,e_1,e_2) \in \atomunion{T}$. Apply Lemma~\ref{lm:gobsec-atom-reduction} (\nameref{lm:gobsec-atom-reduction}) with 
	$(j',e'_1,e'_2) \in \atomunion{T}$ which follows from $(j',e'_1,e'_2) \in \nsetc{S}$
	\item Assuming $j_1<j,v_1,v_2$ such as:
		\begin{itemize}
			\item $e_1 \reduce^{\le j_1} v_1$
			\item $e_1 \reduce^{\le j_1} v_2$
		\end{itemize}
		
		\textbf{Show}:
		\begin{center}
			$(j-j_1,v_1,v_2) \in \nsetv{S}$
		\end{center}
		
		We have that:
		\begin{center}
			$e_1 \reduce^{\le k} e'_1 \reduce^{j'_1} v_1$ \\
			$e_2 \reduce^{\le k} e'_2 \reduce^{j'_1} v_2$
		\end{center}
		
		where $j'_1 < j'$.

		Instantiate $(j',e'_1,e'_2) \in \nsetc{S}$ with $j'_1$. Note that $j'_1 < j'$.
		
		Hence, $(j'- j'_1,v_1,v_2) \in \nsetv{S}$.
		
		Apply Lemma~\ref{lm:gobsec-monotonicity} (\nameref{lm:gobsec-monotonicity}) with $j - j_1 \le j - (k + j'_1) \le j' - j'_1$ ($ j-k \le j'$) to obtain: 
		
		$(j - j_1,v_1,v_2) \in \nsetv{S}$
\end{enumerate}
\end{proof}

\subsubsection{Monadic bind}
\begin{lemma}[Monadic bind]
\label{lm:gobsec-monadic-bind-n}
\mbox{}
\\
If $(k,e_1,e_2) \in \gsetc{S}$ \\
and $\forall j \le k. \forall v_1,v_2.~(j,v_1,v_2) \in \gsetv{S} \implies (j,\inhole{E}{v_1},\inhole{E}{v_2}) \in \gsetc{S'}$ \\
then $(k,\inhole{E}{e_1},\inhole{E}{e_2}) \in \gsetc{S'}$
\end{lemma}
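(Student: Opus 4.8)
The plan is to unfold the definition of $\gsetc{S'}$ — writing $S \defas \stype{T}{U}$ and $S' \defas \stype{T'}{U'}$ — and discharge its two obligations in turn: the atom membership $(k,\inhole{E}{e_1},\inhole{E}{e_2}) \in \atomunion{T'}$, and the reduction condition relating the limit values. The atom obligation is routine: from $(k,e_1,e_2) \in \gsetc{S}$ we get $\vdash_1 e_i : T$, and since $E$ is an evaluation context plugging a $T$-typed hole to yield a $T'$-typed term, simple type preservation gives $\vdash_1 \inhole{E}{e_i} : T'$ (this context typing is available because the lemma is invoked on sub-derivations where $\inhole{E}{e_i}$ is simply well-typed at $T'$). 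The substance lies in the reduction condition, which I would establish by exploiting the structure of evaluation contexts together with determinism of $\reduce$.

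For the reduction condition I would first dispose of the degenerate cases. If either $e_1$ or $e_2$ fails to reduce to a value within fewer than $k$ steps, then, because $E$ is an evaluation context, the corresponding $\inhole{E}{e_i}$ cannot converge in fewer than $k$ steps either, so the implication in $\gsetc{S'}$ is vacuously satisfied (by type safety, Theorem~\ref{the:gobsec-type-safety}, no stuck irreducible term arises, so the only alternative to convergence is divergence). Hence I may assume $e_1 \reduce^{a_1} v_1$ and $e_2 \reduce^{a_2} v_2$ with $a_1,a_2 < k$. Since $E$ is an evaluation context, each reduction of $e_i$ lifts into the hole, giving $\inhole{E}{e_i} \reduce^{a_i} \inhole{E}{v_i}$, and determinism makes the reduct unique.

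Next I would chain the two hypotheses to relate $\inhole{E}{v_1}$ and $\inhole{E}{v_2}$. Instantiating $(k,e_1,e_2) \in \gsetc{S}$ with the common step bound $a = \max(a_1,a_2) < k$ yields $(k-a,v_1,v_2) \in \gsetv{S}$; feeding these related values to the second hypothesis (with $k-a \le k$) gives $(k-a,\inhole{E}{v_1},\inhole{E}{v_2}) \in \gsetc{S'}$. Finally, I would lift this relation on the reducts back to $\inhole{E}{e_1},\inhole{E}{e_2}$ using Anti-reduction (Lemma~\ref{lm:gobsec-anti-reduction}), instantiated with the reductions $\inhole{E}{e_i} \reduce^{\le a} \inhole{E}{v_i}$, target index $k$, source index $k-a$, and reduction bound $a$; the side conditions $k-a \le k$ and $k \le (k-a)+a$ hold by construction, and the atom premise is the condition already discharged above.

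The main obstacle is the step-index bookkeeping. Routing the argument directly — subtracting $\max(a_1,a_2)$ for the first phase and then $\max(c_1,c_2)$ for the residual reductions $\inhole{E}{v_i} \reduce^{c_i} w_i$ — over-counts, since $\max(a_1,a_2)+\max(c_1,c_2)$ can exceed $\max(a_1+c_1,\,a_2+c_2)$ when the two sides converge at different rates, leaving the residual index too small. This is precisely what Anti-reduction, and underneath it Monotonicity (Lemma~\ref{lm:gobsec-monotonicity}), is designed to absorb: rather than threading indices by hand through both phases, I route the conclusion through Anti-reduction so that the single additive relationship $k \le (k-a)+a$ suffices to keep the residual index large enough. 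I would take care to verify that the evaluation-context lifting preserves exact step counts and that determinism forces the values produced by the Anti-reduction step to coincide with the $w_1,w_2$ demanded by the goal.
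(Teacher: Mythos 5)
Your proposal is correct and follows essentially the same route as the paper's proof: lift the reductions $e_i \reduce^{\le j'} v_i$ into the evaluation context, instantiate the first hypothesis to get $(k-j',v_1,v_2)\in\gsetv{S}$, feed these to the second hypothesis at index $k-j'\le k$, and close with the Anti-reduction lemma using $k \le (k-j')+j'$. You are in fact somewhat more explicit than the paper about the atom-membership obligation and the vacuous non-convergence cases, but the decomposition and the key lemma are identical.
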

\begin{proof}
Let us assume that $e_2 \reduce^{\le j'} v'_1$ and $e_2 \reduce^{\le j'} v'_2$ where $j' \leq k$ (in other case the lemma vacuously holds)

Instantiate $(k,e_1,e_2) \in \gsetc{S}$ with $j',v'_1,v'_2$. 

Hence, $(k-j',v'_1,v'_2) \in \nsetv{S'}$

By the dynamic semantics we have know that:
\begin{center}
	$\inhole{E}{e_1} \reduce^{\le j'} \inhole{E}{v'_1}$ \\
	$\inhole{E}{e_2} \reduce^{\le j'} \inhole{E}{v'_2}$
\end{center}

Instantiate the second premise with $k-j',v'_1,v'_2$. Note that $k-j' \le k$ and $(k-j',v'_1,v'_2) \in \nsetv{S}$.

Hence, $(k-j',\inhole{E}{v'_1},\inhole{E}{v'_2}) \in \gsetc{S'}$

Instantiate Lemma~\ref{lm:gobsec-anti-reduction} (\nameref{lm:gobsec-anti-reduction}). Note that:
\begin{itemize}
	\item $k - j' \le k$
	\item $ k \leq k-j' + j'$
	\item	$\inhole{E}{e_1} \reduce^{\le j'} \inhole{E}{v'_1}$ 
	\item $\inhole{E}{e_2} \reduce^{\le j'} \inhole{E}{v'_2}$
	\item $(k-j',\inhole{E}{v'_1},\inhole{E}{v'_2}) \in \gsetc{S'}$
\end{itemize}

Hence, $(k,\inhole{E}{e_1},\inhole{E}{e_2}) \in \gsetc{S'}$
\end{proof}

\subsubsection{Substitutions preserve simple typing}
\begin{lemma}[Substitutions preserve simple typing]
\label{lm:gobsec-substitution-in-atoms}
\mbox{}
\\
Let $\DeltaX,\Gamma |- e_1 : \stype{T_1}{U_1}$ and  $\DeltaX,\Gamma |- e_2 : \stype{T_1}{U_1}$ \\
Let $\sigma \in \nsetd{\DeltaX}$ and $(k,\gamma_1,\gamma_2) \in \nsetg{\sigma(\Gamma)}$ \\
Then $(k,\sigma(\gamma_1(e_1)),\sigma(\gamma_2(e_2))) \in \atomunion{\sigma(T_1)}$
\end{lemma}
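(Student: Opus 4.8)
The plan is to reduce membership in $\atomunion{\sigma(T_1)}$ to two simple-typing facts and to establish those by applying the two substitutions in turn. Unfolding the definition of $\atomunion{\sigma(T_1)}$, it suffices to pick any $n > k$ (say $n = k+1$) and to exhibit the derivations $\stypeof{}{\sigma(\gamma_1(e_1))}{\sigma(T_1)}$ and $\stypeof{}{\sigma(\gamma_2(e_2))}{\sigma(T_1)}$; the step index is otherwise irrelevant. So the entire argument lives at the level of the simple type system.

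First I would move from the security-typing hypotheses to simple typing: applying Lemma~\ref{lm:gobsec-security-ts-implies-simple-ts} to each of $\DeltaX;\Gamma |- e_i : \stype{T_1}{U_1}$ (for $i\in\{1,2\}$) yields $\stypeof{\Gamma}{e_i}{T_1}$, discarding the declassification facet and leaving a standard open derivation over $\Gamma$. Next I would push the type substitution $\sigma$ through. Because polymorphic type variables $X$ occur only in declassification facets, $\sigma$ acts on a term solely via the (simple-typing-irrelevant) type annotations of method invocations, and on $\Gamma$ and $T_1$ only by rewriting nested declassification positions; hence it preserves simple typing, giving $\stypeof{\sigma(\Gamma)}{\sigma(e_i)}{\sigma(T_1)}$. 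This step is a routine induction on the simple-typing derivation, appealing at the subsumption rule to a subtyping-preservation fact analogous to Lemma~\ref{lm:gobsec-type-subst-preserves-subtyping-n}.

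Finally I would close the open terms with the value substitutions. From $(k,\gamma_1,\gamma_2) \in \nsetg{\sigma(\Gamma)}$, unfolding the definition of $\nsetg{\cdot}$ gives, for each $x \in \dom(\Gamma)$, that $(k,\gamma_1(x),\gamma_2(x)) \in \nsetv{\sigma(\Gamma)(x)}$; and since every value interpretation satisfies $\nsetv{\stype{T}{U}} \subseteq \atomunion{T}$, each $\gamma_i$ sends the variables of $\Gamma$ to closed values simply well-typed at the corresponding safety types. Thus $\gamma_i$ is a closing substitution consistent with $\sigma(\Gamma)$, and Lemma~\ref{lm:gobsec-value-substitution-preserve-simple-typing} produces $\stypeof{}{\gamma_i(\sigma(e_i))}{\sigma(T_1)}$, i.e.\ the two required simple-typing derivations for the fully closed terms.

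The main obstacle is the interface between the logical relation and the substitution lemma. Lemma~\ref{lm:gobsec-value-substitution-preserve-simple-typing} is phrased with $\gamma \models \Gamma$, which demands full security typing $\bigcdot;\bigcdot |- v : S$ of each substituted value, whereas membership in $\nsetg{\sigma(\Gamma)}$ only furnishes simple typing of those values (through $\atomunion{\cdot}$). Since the conclusion I need is itself only a simple-typing statement, the resolution is to use exactly the simple-typing content of that lemma---equivalently, a standard simple-typing substitution lemma whose hypothesis is $\stypeof{}{\gamma(x)}{T_x}$ for each $x$. The only remaining care is bookkeeping: $\sigma$ and the $\gamma_i$ act on disjoint syntactic positions (types versus values), so the two phases commute and compose to $\sigma(\gamma_i(e_i))$ without interference, while $\sigma(T_1)$ is precisely the safety type tracked throughout.
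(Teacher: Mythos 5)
Your proposal is correct and follows essentially the same route as the paper: reduce membership in $\atomunion{\sigma(T_1)}$ to simple typing of the two closed terms, observe that the type substitution $\sigma$ is invisible to the simple type system, and close the terms via the value-substitution lemma (Lemma~\ref{lm:gobsec-value-substitution-preserve-simple-typing}). Your extra observation about the mismatch between $\gamma \models \Gamma$ (which demands full security typing of the substituted values) and what $\nsetg{\sigma(\Gamma)}$ actually supplies (only simple typing via $\atomunion{\cdot}$) is a genuine subtlety that the paper's one-line proof glosses over, and your resolution—using only the simple-typing content of that lemma—is the right one.
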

\begin{proof}
The proof is straightforward. Then the goal is equivalent to show:
$(k,\gamma_1(e_1),\gamma_2(e_2)) \in \atomunion{\sigma(T_1)}$ (because type variable are not taken into account by the simple type system). 

For each value substitution apply Lemma~\ref{lm:gobsec-value-substitution-preserve-simple-typing} (\nameref{lm:gobsec-value-substitution-preserve-simple-typing}) with
$\Gamma |-_1 : e_i$ and $\gamma_i \models \Gamma$ to
obtain $|-_{1} \gamma_i(e_i) : \sigma(T_1)$
\end{proof}

\subsubsection{Well-typed terms are related at top}
\begin{lemma}[Well-typed terms are related at top]
\label{lm:gobsec-well-typed-term-related-top}
\mbox{}
\\
Let $\sigma \in \nsetd{\DeltaX}$ and $(k,\gamma_1,\gamma_2) \in \nsetg{\sigma(\Gamma)}$ \\
Let $\DeltaX;\Gamma |- e_1: \stype{T}{\top}$ \\
Let $\DeltaX;\Gamma |- e_2: \stype{T}{\top}$ \\
Then $(k,\sigma(\gamma_1(e_1)),\sigma(\gamma_2(e_2))) \in \nsetc{\stype{\sigma(T)}{\top}}$
\end{lemma}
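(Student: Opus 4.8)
The plan is to exploit the fact that $\top = \rtypex{\alpha}{\rtop}$ exposes no methods, so that the value relation at a top-faceted type collapses to the bare atom predicate: two simply-well-typed values sharing the same safety type are always related, because a public observer equipped with the empty interface has no experiment with which to discriminate them. Consequently the whole lemma reduces to checking that the two closed terms are atoms at $\sigma(T)$ and then mechanically unfolding the definition of $\nsetc{\stype{\sigma(T)}{\top}}$ from Figure~\ref{fig:gobsec-lg-prni-forall-type}.

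Concretely, I would first observe that $\sigma(\stype{T}{\top}) = \stype{\sigma(T)}{\top}$, since the empty object type $\top$ contains no free type variables and is therefore fixed by $\sigma$; this guarantees that the declassification facet of the target type really is $\top$ after substitution, matching the goal. Next, from the hypotheses $\DeltaX;\Gamma |- e_i : \stype{T}{\top}$ I would derive the simple typings $\stypeof{\Gamma}{e_i}{T}$ via Lemma~\ref{lm:gobsec-security-ts-implies-simple-ts}, and then apply Lemma~\ref{lm:gobsec-substitution-in-atoms} to obtain $(k, \sigma(\gamma_1(e_1)), \sigma(\gamma_2(e_2))) \in \atomunion{\sigma(T)}$. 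This discharges the atom side-condition built into the definition of $\nsetc$.

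It then remains to establish the reduction clause of $\nsetc{\stype{\sigma(T)}{\top}}$. I would fix an arbitrary $j < k$ and assume $\sigma(\gamma_1(e_1)) \reduce^{\le j} v_1$ and $\sigma(\gamma_2(e_2)) \reduce^{\le j} v_2$; the obligation becomes $(k-j, v_1, v_2) \in \nsetv{\stype{\sigma(T)}{\top}}$. Lemma~\ref{lm:gobsec-atom-reduction} upgrades the atom membership just obtained to $(k-j, v_1, v_2) \in \atomunion{\sigma(T)}$. Because $\top$ is the empty object type, it contains no method names, so both universally quantified clauses in the definition of $\nsetv{\stype{\sigma(T)}{O}}$---the one governing standard signatures and the one governing primitive signatures---range over the empty set and hold vacuously. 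Hence $(k-j, v_1, v_2) \in \nsetv{\stype{\sigma(T)}{\top}}$, closing the goal.

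I do not expect a genuine obstacle here: the entire mathematical content is the observation that the empty declassification interface admits no distinguishing experiment. The only points demanding care are purely clerical, namely confirming that $\sigma$ fixes $\top$, tracking the step-index decrement from $k$ to $k-j$ through the atom-reduction step, and verifying that the vacuity argument covers both the object-signature and primitive-signature conjuncts of $\nsetv{\stype{T}{O}}$ in the extended relation of Figure~\ref{fig:gobsec-lg-prni-forall-type}.
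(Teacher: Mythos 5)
Your proposal is correct and follows essentially the same route as the paper: atom membership via Lemma~\ref{lm:gobsec-substitution-in-atoms}, then the reduction clause discharged by Lemma~\ref{lm:gobsec-atom-reduction} together with the observation that $\nsetv{\stype{\sigma(T)}{\top}}$ collapses to $\atomunion{\sigma(T)}$ because the empty interface makes both quantified clauses vacuous. The paper states this collapse as a one-line ``equivalent to show'' step where you spell out the vacuity explicitly, but the argument is the same.
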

\begin{proof}
Proof obligations 
\begin{enumerate}
	\item $(k,\sigma(\gamma_1(e_1)), \sigma(\gamma_1(e_2))) \in \atomunion{\sigma(U)}$. It follows from
	Lemma~\ref{lm:gobsec-substitution-in-atoms}~(\nameref{lm:gobsec-substitution-in-atoms}).
	\item Assuming $j<k,v_1, v_2$ such as: 
		\begin{itemize}
			\item $\sigma(\gamma_1(e_1)) \reduce^{\le j} v_1$
			\item $\sigma(\gamma_2(e_2)) \reduce^{\le j} v_2$
		\end{itemize}
		
		\textbf{Show}
		\begin{center}
			$(k-j,v_1,v_2) \in \gsetv{\stype{\sigma(T)}{\top}}$
		\end{center}
		
		Which is equivalent to show
		\begin{center}
			$(k-j,v_1,v_2) \in \atomunion{\sigma(T)}$
		\end{center}
		Apply Lemma ~\ref{lm:gobsec-atom-reduction}~(\nameref{lm:gobsec-atom-reduction}) with  
		$(k,\sigma(\gamma_1(e_1)), \sigma(\gamma_1(e_2))) \in \atomunion{\sigma(T)}$, $v_1$ and $v_2$
		to obtain
		
		$(k-j,v_1,v_2) \in \atomunion{\sigma(T)}$
\end{enumerate}
\end{proof}

\subsubsection{Related values are related terms}
\begin{lemma}[Related values are related terms]
\label{lm:gobsec-related-values-related-terms-n}
If $(k,v_1,v_2) \in \gsetv{S}$ then $(k,v_1,v_2) \in \gsetc{S}$
\end{lemma}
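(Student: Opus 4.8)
The plan is to unfold the definitions of $\gsetc{S}$ and $\gsetv{S}$ and to observe that, since $v_1$ and $v_2$ are already values, the only reduction witnessing the computation relation is the trivial (zero-step) one. Write $S \defas \stype{T}{U}$. By the definition of $\nsetc{\stype{T}{U}}$, proving $(k,v_1,v_2) \in \gsetc{S}$ reduces to two obligations: first, $(k,v_1,v_2) \in \atomunion{T}$; and second, for every $j < k$, whenever $v_1 \reduce^{\le j} v'_1$ and $v_2 \reduce^{\le j} v'_2$, we have $(k-j,v'_1,v'_2) \in \gsetv{S}$.

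The first obligation is immediate: the value relation $\gsetv{\stype{T}{U}}$ is by construction a subset of $\atomunion{T}$ (in every clause of the logical relation of Figure~\ref{fig:gobsec-logical-relation}), so the hypothesis $(k,v_1,v_2) \in \gsetv{S}$ already delivers $(k,v_1,v_2) \in \atomunion{T}$.

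For the second obligation, I would fix an arbitrary $j < k$ together with reductions $v_1 \reduce^{\le j} v'_1$ and $v_2 \reduce^{\le j} v'_2$. Since $v_1$ and $v_2$ are values, they are irreducible under the dynamic semantics of Figure~\ref{fig:gobsec-dynamic-semantics}: a value is never a redex, and no evaluation context decomposes a value into a reducible subterm. Hence both reduction sequences have length zero, giving $v'_1 = v_1$ and $v'_2 = v_2$. It then suffices to establish $(k-j,v_1,v_2) \in \gsetv{S}$. As $j < k$ yields $k - j \le k$, this follows from the hypothesis $(k,v_1,v_2) \in \gsetv{S}$ by downward closure, namely Lemma~\ref{lm:gobsec-monotonicity} (\nameref{lm:gobsec-monotonicity}), instantiated at the smaller index $k-j$.

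There is no genuine obstacle: this is the standard \emph{``values are computations''} fact for step-indexed logical relations. The only point deserving attention is the irreducibility of values, which lets us collapse the hypothetical reduction to the identity and thereby reduces the entire computation obligation to a single appeal to monotonicity.
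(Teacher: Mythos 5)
Your proposal is correct and matches the paper's argument: the paper dismisses this lemma with ``the proof trivially follows,'' and your write-up is exactly the explicit unfolding of that triviality --- atom membership is inherited from the value relation, values are irreducible so the computation clause collapses to the zero-step case, and the residual obligation at index $k-j$ is discharged by downward closure (Lemma~\ref{lm:gobsec-monotonicity}). The one point worth making explicit, which you correctly identified, is that monotonicity genuinely is needed here (for $j>0$ the required index $k-j$ is strictly below $k$), so the lemma is not purely definitional.
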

\begin{proof}
The proof trivially follows.
\end{proof}

\clearpage
\subsection{Proof of Fundamental Property}

\subsubsection{Pre-compatibility: Method Invocation}
\begin{lemma}[Pre-compatibility: Method Invocation]
\label{lm:gobsec-pre-compatibility-method-inv-n}
\mbox{}
\\
Let $\sigma \in \gsetd{\DeltaX},~\wft{\DeltaX}{\stype{T_1}{U_1}}$ \\
Let $k'' \le k' \le k$ \\
Let $(k',v_1,v_2) \in \gsetv{\sigma(\stype{T_1}{U_1})}$ \\
Let $\DeltaX |- m \in U_1$ and $\tlookup{\DeltaX;U_1}{m}{\gmtype{X:A..B}{S_2}{S}}$\\
Let $\DeltaX |- U' ~\wedge~ \DeltaX |- U' \in A..B$\\
Let $(k,v''_1,v''_2) \in \gsetv{\sigma(\ssubst{S_2}{U'}{X})}$\\
then $(k'',\gminv{v_1}{m}{\rhosyn{U'}}{v''_1},\gminv{v_2}{m}{\rhosyn{U'}}{v''_2}) \in \gsetc{\sigma(\ssubst{S}{U'}{X})}$
\end{lemma}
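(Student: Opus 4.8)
The plan is to reduce the goal to a single instantiation of the object case of the value relation \(\nsetv{\stype{\sigma(T_1)}{\sigma(U_1)}}\), feeding it the method name \(m\), the concrete witness type \(T' := \sigma(U')\) (recall \(\rhosyn{U'}\) abbreviates \(\sigma(U')\)), and the two related arguments \(v''_1, v''_2\). First I would record the routine rewritings. Since the bound variable \(X\) of the method signature is fresh for \(\sigma\) (it is not in \(\dom(\sigma)\), whose domain is the variables of \(\DeltaX\), and \(\sigma\)'s range is closed), type substitution commutes with the method-level substitution, so \(\sigma(\ssubst{S}{U'}{X}) = \ssubst{\sigma(S)}{\sigma(U')}{X}\) and likewise for \(S_2\). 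This makes the argument hypothesis \((k, v''_1, v''_2) \in \gsetv{\sigma(\ssubst{S_2}{U'}{X})}\) line up syntactically with the argument slot of the relation, and identifies the target type \(\gsetc{\sigma(\ssubst{S}{U'}{X})}\) with \(\nsetc{\ssubst{\sigma(S)}{\sigma(U')}{X}}\).

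Next I would discharge the atom (simple-typing) component of the target \(\nsetc{\cdot}\) membership: both invocations are simply well-typed at the relevant safety type by rule (T1mI), using that \(v_1,v_2\) and \(v''_1,v''_2\) are themselves atoms (cf. Lemma~\ref{lm:gobsec-substitution-in-atoms}). Then comes the core step: unfold \((k', v_1, v_2) \in \nsetv{\stype{\sigma(T_1)}{\sigma(U_1)}}\) and instantiate its quantified method clause with \(m\), \(T' := \sigma(U')\), and \(v''_1, v''_2\). The witness satisfies the bounds because from \(\DeltaX \vdash U' \in A..B\) I obtain \(\sigma(U') \in \sigma(A)..\sigma(B)\) by Lemma~\ref{lm:gobsec-type-subst-preserves-interval-subtyping-n}, which is exactly what the relation requires of \(T'\); and the self-relatedness of \(v_1,v_2\) together with the relatedness of the arguments at the chosen step index follow from the hypotheses by downward closure (Lemma~\ref{lm:gobsec-monotonicity}), since the instantiated index lies below both \(k'\) and \(k\).

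The delicate point, and what I expect to be the main obstacle, is identifying the signature of \(m\) inside \(\sigma(U_1)\). The lemma supplies the signature \(\gmtype{X:A..B}{S_2}{S}\) via \(\tlookup{\DeltaX;U_1}{m}{\cdot}\), which for a \emph{type variable} \(U_1 = X\) (with \(X : A..B \in \DeltaX\)) is read off the upper bound, whereas the value relation consults the \emph{actual} type \(\sigma(U_1)\). Since \(\sigma(X) <: \sigma(B)\) is only a subtype of that bound (Lemma~\ref{lm:gobsec-type-subst-preserves-subtyping-n}), its signature for \(m\) need not be identical. I would bridge this by pushing \((k', v_1, v_2)\) up along subtyping with Lemma~\ref{lm:gobsec-per-types-subtyping-n} (PER subtyping): from \(\stype{\sigma(T_1)}{\sigma(X)} <: \stype{\sigma(T_1)}{\sigma(B)}\) the pair is also related at \(\stype{\sigma(T_1)}{\sigma(B)}\), whose \(m\)-signature \emph{is} \(\sigma\) applied to \(\gmtype{X:A..B}{S_2}{S}\). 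Instantiating the clause there then yields the invocations in \(\nsetc{\ssubst{\sigma(S)}{\sigma(U')}{X}}\), as desired; when \(U_1\) is already an object type the signature matches directly and this detour is unnecessary.

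Finally I would reconcile step indices. The value relation certifies the invocation as a computation only for indices strictly below \(k'\), while the goal is at \(k'' \le k'\); the single (EMInvO) reduction step that any method call must take before reaching a value is what closes this gap. I would make it precise using anti-reduction (Lemma~\ref{lm:gobsec-anti-reduction}) applied to the one-step reduct of the invocation, together with monotonicity (Lemma~\ref{lm:gobsec-monotonicity}) to realign the argument and self-relatedness hypotheses. Care is needed in this bookkeeping so that the conclusion lands at exactly \(k''\); I expect this edge case (\(k'' = k'\)) to be the fiddliest part of the routine index calculations, handled by charging the leading invocation step against the index budget.
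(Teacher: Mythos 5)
Your proposal takes the same route as the paper's proof: a single instantiation of the object clause of $(k',v_1,v_2)\in\nsetv{\stype{\sigma(T_1)}{\sigma(U_1)}}$ with $m$, the witness $T':=\sigma(U')$, and the arguments $v''_1,v''_2$, justified by Lemma~\ref{lm:gobsec-type-subst-preserves-interval-subtyping-n} for the bounds, Lemma~\ref{lm:gobsec-monotonicity} to bring the argument pair down to the smaller index, and the commutation $\sigma(\ssubst{S}{U'}{X})=\ssubst{\sigma(S)}{\sigma(U')}{X}$ to land on the stated type. The two ``delicate points'' you flag are precisely where the paper's own proof is silent or imprecise, so they are worth keeping. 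First, when $U_1$ is a type variable, $\tlookup{\DeltaX;U_1}{m}{\cdot}$ reads the signature off the upper bound while the value relation consults the actual type $\sigma(U_1)$; the paper simply asserts that the lookup on the substituted type returns $\sigma$ applied to the hypothesized signature, and your detour through $\nsetv{\stype{\sigma(T_1)}{\sigma(B)}}$ via Lemma~\ref{lm:gobsec-per-types-subtyping-n} (the direction is right: related at the subtype implies related at the supertype) is a sound way to make that step honest. Second, the index gap is real: the object clause yields the invocations as computations only at indices strictly below $k'$, whereas the lemma permits $k''=k'$; the paper dismisses this with ``$k''<k$ follows from the hypothesis,'' which neither follows from $k''\le k'\le k$ nor is the relevant comparison. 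Your instinct to charge the leading (EMInvO) step is the standard remedy, but note that Lemma~\ref{lm:gobsec-anti-reduction} only transports relatedness from a reduct at a \emph{smaller} index to the source at a larger one, and the value clause hands you the whole invocation (not its reduct) at indices $<k'$, so with the relations as literally defined the one-step discount does not obviously recover the missing index when $k''=k'$; the clean repair is to strengthen the hypothesis to $k''<k'$ or to let $\nsetc{\cdot}$ absorb the $\beta$-step. Neither point invalidates your argument; on both counts your write-up is more careful than the paper's.
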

\begin{proof}	
	Instantiate $(k',v_1,v_2) \in \gsetv{\sigma(\stype{T_1}{U_1})}$ with:	
	$m, k'',\rhosyn{U'},v''_1,v''_2$. Note that:
	\begin{itemize}
		\item $m \in \sigma(T_1)$. It follows from $\DeltaX |- m \in U_1$, $\sigma \in \nsetd{\DeltaX}$ and $\wft{\DeltaX}{\stype{T_1}{U_1}}$. Then 
		 
		$\tlookup{\bigcdot,\sigma(T_1)}{m}{\gmtype{X:\sigma(A)..\sigma(B)}{\sigma(S_2)}{\sigma(S)}}$.
		\item $k'' < k$ which follows directly from hypothesis.
		\item $|- \rhosyn{U'}$ which is direct from $\DeltaX |- U'$ and $\rho \in \gsetd{\DeltaX}$.
		\item $\rhosyn{U'} \in \sigma(A) .. \sigma(B)$. Apply Lemma~\ref{lm:gobsec-type-subst-preserves-interval-subtyping-n}
		(\nameref{lm:gobsec-type-subst-preserves-interval-subtyping-n}) with $\sigma \in \gsetd{\DeltaX}$ and $\DeltaX |- U' \in A..B$
		\item $(k'',v''_1,v''_2) \in \gsetv{\ssubst{\sigma(S_2)}{\sigma(U')}{X}}$. It follows from
			\begin{itemize}
				\item $(k,v''_1,v''_2) \in \gsetv{\ssubst{\sigma(S_2)}{\sigma(U')}{X}}$. Note that $\sigma(\ssubst{S_2}{U'}{X}) = \ssubst{\sigma(S_2)}{\sigma(U')}{X}$
				and $(k,v''_1,v''_2) \in \gsetv{\sigma(\ssubst{S_2}{U'}{X})}$ is given in hypothesis
				\item Apply Lemma~\ref{lm:gobsec-monotonicity} (\nameref{lm:gobsec-monotonicity}) with $(k,v''_1,v''_2) \in \gsetv{\ssubst{\sigma(S_2)}{\sigma(U')}{X}}$  and $k'' \le k$ we obtain 
				$(k'',v''_1,v''_2) \in \gsetv{\ssubst{\sigma(S_2)}{\sigma(U')}{X}}$.
			\end{itemize}
	\end{itemize}
	
	Hence, $(k'',\gminv{v_1}{m}{\rhosyn{U'}}{v''_1},\gminv{v_2}{m}{\rhosyn{U'}}{v''_2}) \in \gsetc{\ssubst{\sigma(S)}{\sigma(U')}{X}}$
	
	Note $\sigma(\ssubst{S}{U'}{X}) = \ssubst{\sigma(S)}{\sigma(U')}{X}$.
	
	Hence, $(k'',\gminv{v_1}{m}{\rhosyn{U'}}{v''_1},\gminv{v_2}{m}{\rhosyn{U'}}{v''_2}) \in \gsetc{\sigma(\ssubst{S}{U'}{X})}$
\end{proof}

\subsubsection{Compatibility-Var}
\begin{lemma}[\gobsec Compatibility-Var]
\label{lm:gobsec-compatibility-var-n}
\mbox{}

$\DeltaX;\Gamma |- x \approx x: \Gamma(x)$
\end{lemma}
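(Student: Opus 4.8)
The plan is to discharge the three static side-conditions of Definition~\ref{def:gobsec-expression-equivalence} and then reduce the semantic obligation to a single application of the ``related values are related terms'' lemma. First I would establish the typing premise $\DeltaX;\Gamma |- x : \Gamma(x)$, which is immediate from rule (TVar), since we may assume $x \in \dom(\Gamma)$ (otherwise $\Gamma(x)$ is undefined and there is nothing to prove). The well-formedness conditions $\wfex{\DeltaX}{\Gamma}$ and $\DeltaX |- \Gamma(x)$ follow from the standing assumption that environments are well-formed, together with the fact that every type bound occurring in a well-formed $\Gamma$ is itself well-formed under $\DeltaX$.

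For the semantic part, I would fix an arbitrary step index $k \ge 0$, a type substitution $\sigma \in \nsetd{\DeltaX}$, and related value substitutions $(k,\gamma_1,\gamma_2) \in \nsetg{\sigma(\Gamma)}$, with the goal of showing $(k, \sigma(\gamma_1(x)), \sigma(\gamma_2(x))) \in \nsetc{\sigma(\Gamma(x))}$. The core step is a lookup argument: by induction on the structure of $\Gamma$ (peeling bindings off $\nsetg{\sigma(\Gamma)}$ until the binding for $x$ is exposed), the definition of $\nsetg{\cdot}$ yields $(k, \gamma_1(x), \gamma_2(x)) \in \nsetv{\sigma(\Gamma)(x)}$, and since $\sigma(\Gamma)(x) = \sigma(\Gamma(x))$ this is exactly $(k, \gamma_1(x), \gamma_2(x)) \in \nsetv{\sigma(\Gamma(x))}$. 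Because each $\gamma_i$ maps $x$ to a closed value that already inhabits the value relation at the $\sigma$-closed type $\sigma(\Gamma(x))$, applying $\sigma$ leaves it unchanged, i.e. $\sigma(\gamma_i(x)) = \gamma_i(x)$, so the goal simplifies to $(k, \gamma_1(x), \gamma_2(x)) \in \nsetc{\sigma(\Gamma(x))}$.

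Finally I would invoke Lemma~\ref{lm:gobsec-related-values-related-terms-n} (\nameref{lm:gobsec-related-values-related-terms-n}) with $S \defas \sigma(\Gamma(x))$ to promote the value relation to the computation relation, which closes the proof. There is essentially no hard part: the only mildly technical point is the environment-lookup induction that extracts the pointwise relation at $x$ from $\nsetg{\sigma(\Gamma)}$, together with the observation that $\sigma$ acts as the identity on the closed values supplied by $\gamma_1$ and $\gamma_2$; everything else is a direct reading of the definitions.
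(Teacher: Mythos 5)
Your proof is correct and follows essentially the same route as the paper's: discharge the (TVar) typing obligation, extract $(k,\gamma_1(x),\gamma_2(x)) \in \nsetv{\sigma(\Gamma(x))}$ from the definition of $\nsetg{\sigma(\Gamma)}$, observe that $\sigma$ acts as the identity on the closed values $\gamma_i(x)$, and conclude via Lemma~\ref{lm:gobsec-related-values-related-terms-n}. The only difference is that you spell out the environment-lookup induction and the well-formedness side conditions, which the paper leaves implicit.
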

\begin{proof}
  First, let us denote $S \defas \Gamma(x)$.
	
	Proof obligations:
	\begin{enumerate}
		\item $\DeltaX;\Gamma |- x: S$ which is direct.
		\item Assuming arbitrary $k,\sigma,\gamma_1,\gamma_2$ such as:			
			$k \ge 0, \sigma \in \nsetd{\DeltaX},\gamma_1,\gamma_2,~(k,\gamma_1, \gamma_2) \in \nsetg{\rho(\Gamma)}$
			
			\textbf{Show}
			\begin{center}
			$(k, \rhosyn{\gamma_1(x)}, \rhosyn{\gamma_2(x)}) \in \gsetc{\sigma(S)}$ \\
			$\equiv (k,\gamma_1(x),\gamma_2(x)) \in \gsetc{\sigma(S)}$
			\end{center}
			
			From $(k,\gamma_1,\gamma_2) \in \gsetg{\sigma(\Gamma)}$ we know that exists $v_1,v_2$ such as: 
			\begin{itemize}
				\item $\gamma_1(x) = v_1$
				\item $\gamma_2(x) = v_2$
				\item $(k,v_1,v_2) \in \gsetv{\sigma(S)}$
			\end{itemize}
			
		 Apply Lemma~\ref{lm:gobsec-related-values-related-terms-n} (\nameref{lm:gobsec-related-values-related-terms-n}) 
			with $(k,\gamma_1(x),\gamma_2(x)) \in \gsetv{\sigma(S)}$ to obtain ${(k,\gamma_1(x),\gamma_2(x)) \in \gsetc{\sigma(S)}}$
		\end{enumerate}			
\end{proof}

\subsubsection{Compatibility-Prim}

\begin{lemma}[\gobsec Compatibility-Prim]
\label{lm:gobsec-compatibility-prim-n}
\mbox{}

Let $\primt = \Delta_{\primb}(\primb)$.

Then $\DeltaX;\Gamma |- b \approx b: \stype{\primt}{\primt}$
\end{lemma}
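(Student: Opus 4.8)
The plan is to unfold Definition~\ref{def:gobsec-expression-equivalence} and discharge each of its conjuncts, exploiting the fact that a primitive value carries no free term or type variables. First I would establish the typing obligation $\DeltaX;\Gamma |- \primb : \stype{\primt}{\primt}$, which is immediate from rule (TPrim) together with the hypothesis $\primt = \Delta_{\primb}(\primb)$; the well-formedness side conditions $\wfex{\DeltaX}{\Gamma}$ and $\DeltaX |- \stype{\primt}{\primt}$ hold trivially, since $\primt$ is a closed primitive type and $\Gamma$, $\DeltaX$ are assumed well-formed.

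For the relational part, I would fix arbitrary $k \ge 0$, $\sigma \in \nsetd{\DeltaX}$, and $(k,\gamma_1,\gamma_2) \in \nsetg{\sigma(\Gamma)}$, and must show $(k, \sigma(\gamma_1(\primb)), \sigma(\gamma_2(\primb))) \in \nsetc{\sigma(\stype{\primt}{\primt})}$. The key observation is that $\primb$ is a closed value and $\stype{\primt}{\primt}$ is a closed type, so both term substitutions and the type substitution act as the identity: $\sigma(\gamma_i(\primb)) = \primb$ and $\sigma(\stype{\primt}{\primt}) = \stype{\primt}{\primt}$. The goal therefore reduces to $(k, \primb, \primb) \in \nsetc{\stype{\primt}{\primt}}$.

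I would then prove $(k, \primb, \primb) \in \nsetv{\stype{\primt}{\primt}}$ and lift this to the computation relation by Lemma~\ref{lm:gobsec-related-values-related-terms-n} (\nameref{lm:gobsec-related-values-related-terms-n}). By the dedicated clause $\nsetv{\stype{\primt}{\primt}} = \{(k,\primb,\primb) \in \atomunion{\primt}\}$, the triple $(k,\primb,\primb)$ already has the required shape (identical values in both positions), so the only remaining obligation is membership in $\atomunion{\primt}$, i.e.\ $\stypeof{}{\primb}{\primt}$. This follows from rule (T1Prim) of the simple type system (again using the hypothesis on $\primt$), and it holds at every step index by the definition $\atomunion{\primt} = \bigcup_{n \ge 0} \atomone{n}{\primt}$.

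I do not expect any genuine obstacle: the statement is essentially immediate once the special-case clause for $\nsetv{\stype{\primt}{\primt}}$ is unfolded, because that clause bottoms out in plain syntactic equality rather than recursing through method invocations. The only points requiring care are bookkeeping ones---recording that closedness makes both substitutions vacuous, and justifying the atom membership through the simple type system. This stands in deliberate contrast with the object-facet case $\stype{\primt}{O}$, where showing that equal primitives are related genuinely requires the more delicate Lemma~\ref{lm:syntactic-implies-semantic}.
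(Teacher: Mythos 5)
Your proposal is correct and follows essentially the same route as the paper's proof: discharge the typing obligation with (TPrim), observe that the value and type substitutions are vacuous so the goal reduces to $(k,\primb,\primb) \in \nsetc{\stype{\primt}{\primt}}$, and conclude from $(k,\primb,\primb) \in \nsetv{\stype{\primt}{\primt}}$ via Lemma~\ref{lm:gobsec-related-values-related-terms-n}. The extra bookkeeping you supply (atom membership via the simple type system) is a harmless elaboration of what the paper leaves implicit.
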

\begin{proof}
Proof obligations:
\begin{enumerate}
	\item $\DeltaX;\Gamma |- b : \stype{\primt}{\primt}$. Apply rule (TPrim)
	\item Assuming arbitrary $k,\rho, \gamma_1,\gamma_2$ such as:
		\begin{itemize}
			\item $k \ge 0, \sigma \in \nsetd{\DeltaX},~(k,\gamma_1,\gamma_2) \in \nsetg{\sigma(\Gamma)}$
		\end{itemize}
		\textbf{Show:}
		\begin{center}
			$(k, \rhosyn{\gamma_1(b)}, \rhosyn{\gamma_2(b)}) \in \nsetc{\sigma(\stype{\primt}{\primt})}$ \\
			$\equiv (k, b, b) \in \nsetc{\stype{\primt}{\primt}}$ 
		\end{center}
		
		Apply Lemma~\ref{lm:gobsec-related-values-related-terms-n}~(\nameref{lm:gobsec-related-values-related-terms-n})
		with $(k,b,b) \in \nsetv{\stype{\primt}{\primt}}$ to obtain:
		
		$(k, b, b) \in \nsetc{\stype{\primt}{\primt}}$ 
\end{enumerate}
\end{proof}

\subsubsection{Compatibility Subsumption}
\begin{lemma}[\gobsec Compatibility-Subsumption]
\label{lm:gobsec-compatibility-sub-n}
\mbox{}

Let $\DeltaX;\Gamma |- e_1  \approx e_2 : S'$. Let $\DeltaX; \bigcdot |-S' <: S$.

Then $\DeltaX;\Gamma |- e_1  \approx e_2 : S$.
\end{lemma}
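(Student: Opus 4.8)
The plan is to unfold Definition~\ref{def:gobsec-expression-equivalence} (as restated in Figure~\ref{fig:gobsec-lg-prni-forall-type}) for the conclusion $\DeltaX;\Gamma |- e_1 \approx e_2 : S$ and discharge its two obligations. The first obligation is the typing $\DeltaX;\Gamma |- e_i : S$ for $i \in \{1,2\}$. This follows immediately: from the hypothesis $\DeltaX;\Gamma |- e_1 \approx e_2 : S'$ I extract $\DeltaX;\Gamma |- e_i : S'$, and then apply the subsumption rule (TSub) together with the given subtyping $\subtps{\DeltaX;\bigcdot}{S'}{S}$ to obtain $\DeltaX;\Gamma |- e_i : S$.

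The second, relational obligation is where the work lies. I fix an arbitrary step index $k \ge 0$, a type substitution $\sigma \in \nsetd{\DeltaX}$, and value substitutions $\gamma_1,\gamma_2$ with $(k,\gamma_1,\gamma_2) \in \nsetg{\sigma(\Gamma)}$. Instantiating the relational clause of the hypothesis $\DeltaX;\Gamma |- e_1 \approx e_2 : S'$ at these same $k,\sigma,\gamma_1,\gamma_2$ yields $(k,\sigma(\gamma_1(e_1)),\sigma(\gamma_2(e_2))) \in \nsetc{\sigma(S')}$. I must upgrade this membership to $\nsetc{\sigma(S)}$. The tool for this is part (2) of the PER Subtyping lemma (Lemma~\ref{lm:gobsec-per-types-subtyping-n}), which propagates subtyping from the value relation to the computation relation; but it is stated on \emph{closed} types and therefore requires the closed subtyping judgment $\subtps{\bigcdot;\bigcdot}{\sigma(S')}{\sigma(S)}$.

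Consequently the one nontrivial step is to transport the open subtyping $\subtps{\DeltaX;\bigcdot}{S'}{S}$ across the substitution $\sigma$. This is exactly Lemma~\ref{lm:gobsec-type-subst-preserves-subtyping-n} (\nameref{lm:gobsec-type-subst-preserves-subtyping-n}): since $\sigma \in \nsetd{\DeltaX}$, applying it to $\subtps{\DeltaX;\bigcdot}{S'}{S}$ gives $\subtps{\bigcdot;\bigcdot}{\sigma(S')}{\sigma(S)}$. With this in hand I invoke Lemma~\ref{lm:gobsec-per-types-subtyping-n}(2) on $(k,\sigma(\gamma_1(e_1)),\sigma(\gamma_2(e_2))) \in \nsetc{\sigma(S')}$ to conclude $(k,\sigma(\gamma_1(e_1)),\sigma(\gamma_2(e_2))) \in \nsetc{\sigma(S)}$, which closes the relational obligation and hence the lemma. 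I do not expect a genuine obstacle here: all the semantic content---the induction on step index and type structure that actually shows the value and computation relations respect subtyping---is already encapsulated in the previously established PER Subtyping lemma, so the present argument is purely a matter of composing that lemma with the substitution lemma and (TSub), after carefully aligning the open ($\DeltaX$-indexed) and closed forms of the subtyping judgment.
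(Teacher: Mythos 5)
Your proof is correct and matches the paper's own argument step for step: the typing obligation via (TSub), and the relational obligation by instantiating the hypothesis, transporting the subtyping under $\sigma$ with Lemma~\ref{lm:gobsec-type-subst-preserves-subtyping-n}, and closing with part (2) of the PER Subtyping lemma (Lemma~\ref{lm:gobsec-per-types-subtyping-n}). No differences worth noting.
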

\begin{proof}
Proof obligations:
\begin{enumerate}
	\item $\DeltaX;\Gamma |-  e_1 : S$ and  $\DeltaX;\Gamma |-  e_2 : S$. Apply rule (TSub) with 
	$\DeltaX;\Gamma |- e_i : S'$ (obtained from $\DeltaX;\Gamma |- e_1  \approx e_2 : S'$) and $\DeltaX; \bigcdot |-S' <: S$
	\item Assuming arbitrary $k,\sigma,\gamma_1,\gamma_2$ such as : 
		\begin{itemize}
			\item $k \ge 0, \rho \in \nsetd{\DeltaX},~(k,\gamma_1,\gamma_2) \in \nsetg{\sigma(\Gamma)}$
		\end{itemize}
		
		\textbf{Show:} 
		\begin{center}
		$(k,\rhosyn{\gamma_1(e_1)},\rhosyn{\gamma_2(e_2)}) \in \nsetc{\rho(S)}$.
		\end{center}
		
		Instantiate  $\DeltaX;\Gamma |- e_1  \approx e_2 : S'$ with $k,\sigma,\gamma_1,\gamma_2$ to obtain:
		
		$(k,\rhosyn{\gamma_1(e_1)}),\rhosyn{\gamma_2(e_2)}) \in \gsetc{\sigma(S')}$. 
		
		Apply Lemma~\ref{lm:gobsec-per-types-subtyping-n} (\nameref{lm:gobsec-per-types-subtyping-n}) with 
		$\bigcdot; \bigcdot |- \sigma(S') <: \sigma(S)$. Note that:
		\begin{itemize}
			\item $\bigcdot; \bigcdot |- \sigma(S') <: \sigma(S)$ follows from Lemma \ref{lm:gobsec-type-subst-preserves-subtyping-n}
			~(\nameref{lm:gobsec-type-subst-preserves-subtyping-n}) applied to $\sigma \in \nsetd{\DeltaX}$ and 
			$\DeltaX; \bigcdot |-S' <: S$.
		\end{itemize}
		
		Hence, $(k,\rhosyn{\gamma_1(e_1)},\rhosyn{\gamma_2(e_2)}) \in \nsetc{\rho(S)}$.
\end{enumerate}
\end{proof}

\subsubsection{Compatibility Object}
\begin{lemma}[\gobsec Compatibility-Object]
\label{lm:gobsec-compatibility-object-n}
\mbox{}
\\
Let be $S \triangleq \stype{O}{U} \quad$ \\
Let be $O \triangleq \rtypex{\alpha}{\gntrecordx{m}{\gmtype{X : A..B}{S'}{S''}}}$ \\
Then: 
\begin{mathpar}	
  \inference{
			\tlookup{\DeltaX,O}{m_i}{\gmtype{X:U_{li}..U_{ui}}{S'_i}{S''_i}}	\\
			\DeltaX,X:U_{li}..U_{ui};\Gamma, z: S, x:S^{'}_i |- e_i \approx e'_i: {S''_i}
    }{
      \DeltaX;\Gamma |- \objectxx{z}{S}{\overline{m(x)e}} \approx \objectxx{z}{S}{\overline{m(x)e'}}: S
    }
\end{mathpar}

\begin{proof}
Denote $o = \objectxx{z}{S}{\overline{m(x)e}}$ and $o' = \objectxx{z}{S}{\overline{m(x)e'}}$

Proof obligations:
\begin{enumerate}
	\item $\DeltaX;\Gamma |- o : S ~\wedge~ \DeltaX;\Gamma |-  o': S$. Apply rule (TObj)
	\item Consider arbitrary $k,\sigma, \gamma_1,\gamma_2$ such as: $k \geq 0, \sigma \in \gsetd{\DeltaX}, 
		(k,\gamma_1,\gamma_2) \in \gsetg{\sigma(\Gamma)}$
		
		\textbf{Show:} 
		\begin{center}
		$ (k,\rhosyn{\gamma_1(o)},\rhosyn{\gamma_2(o')}) : \gsetc{\sigma(S)}$ \\
		$\equiv  (k,  \objectxx{z}{\rhosyn{S}}{\overline{m(x) \rhosyn{\gamma_1(e)}}},\objectxx{z}{\rhosyn{S}}{\overline{m(x)\rhosyn{\gamma_2(e')}}}) : \gsetc{\sigma(S)}$
		\end{center}
		
		Apply Lemma~\ref{lm:gobsec-related-values-related-terms-n} (\nameref{lm:gobsec-related-values-related-terms-n}) to transform the goal to 
		\begin{center}
		$(k,  \objectxx{z}{\rhosyn{S}}{\overline{m(x) \rhosyn{\gamma_1(e)}}},\objectxx{z}{\rhosyn{S}}{\overline{m(x)\rhosyn{\gamma_2(e')}}}) : \gsetv{\sigma(S)}$
		\end{center}
		
		Let us denote $o_1 = \objectxx{z}{\rhosyn{S}}{\overline{m(x) \rhosyn{\gamma_1(e)}}}, o_2 = \objectxx{z}{\rhosyn{S}}{\overline{m(x)\rhosyn{\gamma_2(e')}}}$

		By well-formedness of the type $S$ (Figure~\ref{fig:gbosec-ok-types-subtyping}) we know that $\rho(U)$ is necesarily an object type (\ie it is not a primitive type)
		
		Proof of $(k,o_1,o_2) \in \gsetv{\stype{\sigma(O)}{\sigma(U)}}$.
		
		Sub goals 
		\begin{itemize}
			\item $(k,o_1,o_2) \in \atomunion{\sigma(O)}$. Apply Lemma~\ref{lm:gobsec-substitution-in-atoms} (\nameref{lm:gobsec-substitution-in-atoms}) with 
			$\DeltaX;\Gamma |- o : S$, $\DeltaX;\Gamma |-  o': S$, $\sigma \in \nsetd{\DeltaX}$ and $(k,\gamma_1,\gamma_2) \in \nsetg{\sigma(\Gamma)}$
			\item Assuming arbitrary $m,j,T',v'_1,v'_2$ such as: 
			\begin{itemize}
				\item $m \in \rho(U) \quad \tlookup{\bigcdot,\sigma(O)}{m}{\gmtype{X:\sigma(A)..\sigma(B)}{\sigma(S')}{\sigma(S'')}}$
				\item $j<k$
				\item $|- T'~\wedge~T' \in \sigma(A) .. \sigma(B)$			
				\item $(j,o_1,o_2) \in \nsetv{\sigma(S)}$
				\item $(j, v_1', v_2') \in \nsetv{\sigma(S')}$
			\end{itemize}
			
		  \textbf{Show}:
			\begin{center}
				$(j, \gminv{o_1}{m}{T'}{v_1'}, \gminv{o_2}{m}{T'}{v_2'}) \in \nsetc{\sigma(S'')}$ 
			\end{center}
			
			Denote $\methimpl{o_1}{m}{x.\rhosyn{\gamma_1(e)}}$ and $\methimpl{o_1}{m}{x.\rhosyn{\gamma_2(e')}}$
			
	    Then, the above goal rewrites to
			\begin{center}
			$(j, 
				\ssubst{\ssubst{\ssubst{\rhosyn{\gamma_1(e)}}{T'}{X}}{o_1}{z}}{v_1'}{x},  
				\ssubst{\ssubst{\ssubst{\rhosyn{\gamma_2(e')}}{T'}{X}}{o_2}{z}}{v_2'}{x}) \in \nsetc{\sigma(S'')}$ \\
			$(j, 
				\ssubst{\ssubst{\xsigma{X}{T'}(\gamma_1(e))}{o_1}{z}}{v_1'}{x},  
				\ssubst{\ssubst{\xsigma{X}{T'}(\gamma_2(e'))}{o_2}{z}}{v_2'}{x}) \in \nsetc{\sigma(S'')}$
			\end{center}
			
			Instantiate the second conjunct of the IH $\DeltaX,X:A..B;\Gamma, z: S, x:S^{'} |- e \approx e': {S''}$ 
		  with 
			
			$j, \sigma' = \xsigma{X}{T'}, \gamma'_1  = \extgammax{\gamma_1}{z}{o_1}{x}{v'_1}, 
			\gamma'_1 =\extgammax{\gamma_2}{z}{o_2}{x}{v'_2}$. Note that:
			\begin{itemize}
				\item $j \geq 0$
				\item $\xsigma{X}{T'} \in \nsetd{\DeltaX,X:A..B}$. It follows from:
					\begin{itemize}
						\item $\sigma \in \nsetd{\DeltaX}$. It follows from above.
						\item $\DeltaX |- T' \in A ..B$. Apply Lemma~\ref{lm:gobsec-interval-subtyping-expansion} 
						(\nameref{lm:gobsec-interval-subtyping-expansion}) with $\sigma \in \nsetd{\DeltaX}$ and $|- T'$ and $T' \in \sigma(A) .. \sigma(B)$
					\end{itemize}
				\item $(j, \gamma'_1,\gamma'_2) \in \nsetg{\sigma(\Gamma), z: \sigma(S), x:\sigma(S')}$. It follows from:
					\begin{itemize}
						\item $(j, \gamma_1,\gamma_2) \in \nsetg{\sigma(\Gamma)}$. It follows from above.
						\item $(j, o_1,o_2) \in \nsetv{\sigma(S)}$. It follows from the above.
						\item $(j, v'_1,v'_2) \in \nsetv{\sigma(S')}$. It follows from the above.
					\end{itemize}
			\end{itemize}
			
			Hence $(j,\xsigma{X}{T'}(\extgammax{\gamma_1}{z}{o_1}{x}{v'_2}(e)),\xsigma{X}{T'}(\extgammax{\gamma_2}{z}{o_2}{x}{v'_2}(e'))) \in \nsetc{\sigma(S'')}$
				
			Since $o_1,o_2,v'_1, v'_2$ are closed values with respect to type variables we can rewrite this as:
		  \begin{center}		
				$(j, 
				\ssubst{\ssubst{\xsigma{X}{T'}(\gamma_1(e))}{o_1}{z}}{v_1'}{x},  
				\ssubst{\ssubst{\xsigma{X}{T'}(\gamma_2(e'))}{o_2}{z}}{v_2'}{x}) \in \nsetc{\sigma(S'')}$
			\end{center}		
		\end{itemize}
\end{enumerate}
\end{proof}
\end{lemma}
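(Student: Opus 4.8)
The plan is to follow the standard recipe for a value-constructor compatibility lemma: discharge the two syntactic typing obligations, reduce the semantic goal to membership in the value relation, and then carry the real work in the method-invocation clause by instantiating the relatedness hypotheses on the method bodies with suitably extended type- and value-substitutions. The typing goals $\DeltaX;\Gamma \vdash \objectxx{z}{S}{\overline{m(x)e}} : S$ and its counterpart for $\overline{m(x)e'}$ both follow from rule (TObj), whose premise demands exactly that each body type-check under $\DeltaX,X:A..B;\Gamma,z:S,x:S'_i$ at $S''_i$; this well-typedness is part of each hypothesis $\DeltaX,X:A..B;\Gamma,z:S,x:S'_i \vdash e_i \approx e'_i : S''_i$ by unfolding Definition~\ref{def:gobsec-expression-equivalence}.

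For the semantic part I would fix arbitrary $k\ge 0$, $\sigma\in\nsetd{\DeltaX}$, and $(k,\gamma_1,\gamma_2)\in\nsetg{\sigma(\Gamma)}$, and write $o_1,o_2$ for the two closed objects obtained by applying the substitutions to the respective bodies. Because objects are values, Lemma~\ref{lm:gobsec-related-values-related-terms-n} reduces the goal to $(k,o_1,o_2)\in\nsetv{\sigma(S)}$, where $\sigma(S)=\stype{\sigma(O)}{\sigma(U)}$; by well-formedness of $S$ the declassification facet $\sigma(U)$ is an object type, so we land squarely in the object case of the value relation (the primitive-signature conjunct is vacuous, since $O$ carries only standard signatures). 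The atom obligation $(k,o_1,o_2)\in\atomunion{\sigma(O)}$ is discharged by Lemma~\ref{lm:gobsec-substitution-in-atoms}. The core is then the method clause: for a method $m$ of $\sigma(U)$ with standard signature, a step $j<k$, a witness type $T'$ within the substituted bounds, and related arguments $(j,v'_1,v'_2)\in\nsetv{\ssubst{\sigma(S')}{T'}{X}}$ — together with the self-reference $(j,o_1,o_2)\in\nsetv{\sigma(S)}$ handed to us by the definition — I must show $(j,\gminv{o_1}{m}{T'}{v'_1},\gminv{o_2}{m}{T'}{v'_2})\in\nsetc{\ssubst{\sigma(S'')}{T'}{X}}$.

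Here I would appeal to the operational semantics: by (EMInvO) each invocation reduces to the method body with $z\mapsto o_i$ and $x\mapsto v'_i$ substituted and $X\mapsto T'$ chosen at the type level. I then instantiate the body hypothesis at index $j$ with the extended type substitution $\xsigma{X}{T'}$ and the value substitutions $\gamma_1[z\mapsto o_1,x\mapsto v'_1]$ and $\gamma_2[z\mapsto o_2,x\mapsto v'_2]$. The side conditions are: $\xsigma{X}{T'}\in\nsetd{\DeltaX,X:A..B}$, which needs the bound check lifted from the semantic level back to the syntactic one via Lemma~\ref{lm:gobsec-interval-subtyping-expansion}; and membership of the extended value substitutions in $\nsetg{\sigma(\Gamma),z:\sigma(S),x:\sigma(S')}$, which consumes the self-reference and the argument relatedness. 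The hypothesis then produces relatedness of the substituted bodies at the target type, and since a type-variable substitution commutes with $\sigma$ on closed values so that $\ssubst{\sigma(S'')}{T'}{X}$ is the correct target, this is exactly the required computation.

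The step I expect to be the main obstacle is the self-referential use of $(j,o_1,o_2)\in\nsetv{\sigma(S)}$ at the strictly smaller index $j<k$: it is precisely what makes the object relation well-founded under step-indexing and what lets recursive object types be handled without circularity, so I would be careful that the index actually drops before this relation is consumed. A secondary, purely bookkeeping difficulty is tracking the commutation of the three substitutions (for $X$, $z$, and $x$) with $\sigma$ and the $\gamma_i$; since $o_1,o_2,v'_1,v'_2$ are closed with respect to type variables these rearrangements are sound, but they must be carried out precisely to line up with the expected target $\ssubst{\sigma(S'')}{T'}{X}$.
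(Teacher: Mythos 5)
Your proposal matches the paper's proof essentially step for step: both discharge the typing obligations via (TObj), reduce the goal to the value relation with Lemma~\ref{lm:gobsec-related-values-related-terms-n}, settle the atom obligation with Lemma~\ref{lm:gobsec-substitution-in-atoms}, and then instantiate each body hypothesis at index $j$ with the extended type substitution $\xsigma{X}{T'}$ (justified by Lemma~\ref{lm:gobsec-interval-subtyping-expansion}) and the value substitutions extended with $z\mapsto o_i$, $x\mapsto v'_i$, consuming the self-reference $(j,o_1,o_2)\in\nsetv{\sigma(S)}$ supplied by the definition of the value relation exactly as the paper does. Your closing remarks on the step-index drop making the self-reference well-founded and on the substitution commutation being sound because $o_1,o_2,v'_1,v'_2$ are closed with respect to type variables are precisely the two points the paper's proof also relies on.
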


\clearpage
\subsubsection{Compatibility Method Invocation Declassification}

\begin{lemma}[\gobsec Compatibility-Method-Invocation-Declassification]
\label{lm:gobsec-compatibility-method-invocation-decl-n}
\mbox{}
\\
Let $S_1 \defas \stype{T_1}{U_1},S_2 \defas \stype{T_2}{U_2}$ \\
Let $\DeltaX;\Gamma |- e_1 \approx e'_1 : \stype{T_1}{U_1}$ \\
Let $\DeltaX |- m \in U_1,~\tlookup{\DeltaX;U_1}{m}{\gmtype{X:A..B}{S_2}{S}}$	\\
Let $\DeltaX |- U' ~\wedge~ \DeltaX |- U' \in A..B$\\
Let $\DeltaX;\Gamma |- e_2 \approx e'_2 : \ssubst{S_2}{U'}{X}$ \\
Then $\DeltaX;\Gamma |- \gminv{e_1}{m}{U'}{e_2} \approx \gminv{e'_1}{m}{U'}{e'_2}: \ssubst{S}{U'}{X}$
\end{lemma}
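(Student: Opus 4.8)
The plan is to verify the two components of logical relatedness for open terms (Definition~\ref{def:gobsec-expression-equivalence}): the typing obligation and the semantic obligation. For typing, both $\gminv{e_1}{m}{U'}{e_2}$ and $\gminv{e'_1}{m}{U'}{e'_2}$ receive type $\ssubst{S}{U'}{X}$ by rule (TmD), using the typing components of the hypotheses $\DeltaX;\Gamma |- e_1 \approx e'_1 : \stype{T_1}{U_1}$ and $\DeltaX;\Gamma |- e_2 \approx e'_2 : \ssubst{S_2}{U'}{X}$ together with the side premises $\DeltaX |- m \in U_1$, $\tlookup{\DeltaX;U_1}{m}{\gmtype{X:A..B}{S_2}{S}}$, and $\DeltaX |- U' \in A..B$; the well-formedness conditions are inherited from the hypotheses.

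For the semantic part, I would fix an index $k$, a type substitution $\sigma \in \nsetd{\DeltaX}$, and value substitutions with $(k,\gamma_1,\gamma_2) \in \nsetg{\sigma(\Gamma)}$. Writing $a_1 = \sigma(\gamma_1(e_1))$, $a_1' = \sigma(\gamma_2(e'_1))$, $b_1 = \sigma(\gamma_1(e_2))$, and $b_1' = \sigma(\gamma_2(e'_2))$, the goal is $(k, \gminv{a_1}{m}{\sigma(U')}{b_1}, \gminv{a_1'}{m}{\sigma(U')}{b_1'}) \in \nsetc{\sigma(\ssubst{S}{U'}{X})}$, where one notes $\sigma(\ssubst{S}{U'}{X}) = \ssubst{\sigma(S)}{\sigma(U')}{X}$ since $X$ is bound by the signature and outside the domain of $\sigma$. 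Instantiating the two induction hypotheses at $k,\sigma,\gamma_1,\gamma_2$ yields $(k,a_1,a_1') \in \nsetc{\sigma(\stype{T_1}{U_1})}$ and $(k,b_1,b_1') \in \nsetc{\sigma(\ssubst{S_2}{U'}{X})}$.

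I would then discharge the goal by peeling off the evaluation of the receiver and then of the argument using the monadic-bind lemma (Lemma~\ref{lm:gobsec-monadic-bind-n}) on each subexpression in turn. Binding on the receiver reduces the goal to showing that, for any related receiver values $(j,v_1,v_1') \in \nsetv{\sigma(\stype{T_1}{U_1})}$ with $j \le k$, the invocations $\gminv{v_1}{m}{\sigma(U')}{b_1}$ and $\gminv{v_1'}{m}{\sigma(U')}{b_1'}$ are related at $\ssubst{\sigma(S)}{\sigma(U')}{X}$; binding again on the argument (after shifting $(k,b_1,b_1')$ down to index $j$ by monotonicity, Lemma~\ref{lm:gobsec-monotonicity}) further reduces this to the case of a related argument value $(i,v_2,v_2') \in \nsetv{\sigma(\ssubst{S_2}{U'}{X})}$. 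At that point the goal is exactly the conclusion of the pre-compatibility lemma for method invocation (Lemma~\ref{lm:gobsec-pre-compatibility-method-inv-n}), applied with the receiver and argument relatedness and the bound check $\DeltaX |- U' \in A..B$ lifted through $\sigma$ by Lemma~\ref{lm:gobsec-type-subst-preserves-interval-subtyping-n}. The atom/simple-typing conditions needed along the way follow from Lemma~\ref{lm:gobsec-substitution-in-atoms}, and anti-reduction (Lemma~\ref{lm:gobsec-anti-reduction}) is what the bind lemma uses internally to reassemble the partial reductions.

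The main obstacle is the step-index bookkeeping rather than any conceptual difficulty: one must track how many steps are consumed reducing the receiver and then the argument, ensure the indices fed to the pre-compatibility lemma satisfy its $k'' \le k' \le k$ constraint, and insert monotonicity at the right places so that the receiver relatedness (available at the post-receiver-reduction index) and the argument relatedness (available at $k$ and then shifted) live at compatible indices. A secondary point of care is the commutation $\sigma(\ssubst{S}{U'}{X}) = \ssubst{\sigma(S)}{\sigma(U')}{X}$, which must be checked so that the type appearing in the pre-compatibility lemma's conclusion matches the goal.
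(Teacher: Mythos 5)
Your proposal matches the paper's proof essentially step for step: discharge the typing obligation with rule (TmD), instantiate both hypotheses at $k,\sigma,\gamma_1,\gamma_2$, apply the monadic-bind lemma first to the receiver and then to the argument, and close the resulting value-level goal with the pre-compatibility lemma for method invocation, using the same auxiliary facts (monotonicity, preservation of interval subtyping under $\sigma$, and the commutation $\sigma(\ssubst{S}{U'}{X}) = \ssubst{\sigma(S)}{\sigma(U')}{X}$). The step-index bookkeeping you flag is handled in the paper by the $k'' \le k' \le k$ hypotheses of the pre-compatibility lemma, so no further argument is needed.
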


\begin{proof}
Let us denote $e = \gminv{e_1}{m}{U'}{e_2}$ and $e' = \gminv{e'_1}{m}{U'}{e'_2}$

\noindent
Proof obligations:
\begin{enumerate}
	\item $\DeltaX;\Gamma |- e : \ssubst{S}{U'}{X}$ and $\DeltaX;\Gamma |- e' : \ssubst{S}{U'}{X}$ which follow directly 
	 from the premises and the rule (TmD).
	\item Assuming arbitrary $k, \sigma, \gamma_1, \gamma_2$ such as: 
		$k \geq 0,\sigma \in \nsetd{\DeltaX}, (k,\gamma_1,\gamma_2) \in \nsetg{\sigma(\Gamma)}$
		
		\textbf{Show:}
		\begin{center}
			$(k, \sigma(\gamma_1(\gminv{e_1}{m}{U'}{e_2})), \sigma(\gamma_2(\gminv{e'_1}{m}{U'}{e'_2}))) \in \nsetc{\sigma(\ssubst{S}{U'}{X})}$ \\		
			$\equiv 
			(k, 
			\gminv{\sigma(\gamma_1(e_1))}{m}{\sigma(U')}{\sigma(\gamma_1(e_2))}, 
			\gminv{\sigma(\gamma_2(e'_1))}{m}{\sigma(U')}{\sigma(\gamma_2(e'_2))}) \in \nsetc{\sigma(\ssubst{S}{U'}{X})}$ 		
		\end{center}
		(because $X \notin \dom(\sigma)$)
			
		Instantiate the hypothesis $\DeltaX;\Gamma |- e_1 \approx e'_1 : \stype{T_1}{U_1}$ with $k, \sigma,\gamma_1,\gamma_2$, hence:
		
		$(k, \rhosyn{\gamma_1(e_1)},\rhosyn{\gamma_2(e'_1)}) \in \nsetc{\sigma(\stype{T_1}{U_1})}$
		
		Let $k' \le k$ and let $(k',v_1,v_2) \in \gsetv{\sigma(\stype{T_1}{U_1})}$. By Lemma~\ref{lm:gobsec-monadic-bind-n} (\nameref{lm:gobsec-monadic-bind-n}) we can rewrite the goal \textbf{to show}
		\begin{center}
		$(k', \gminv{v_1}{m}{\rhosyn{U'}}{\rhosyn{\gamma_1(e_2)}}, \gminv{v_2}{m}{\rhosyn{U'}}{\rhosyn{\gamma_2(e'_2)}})) \in \gsetc{\ssubst{S}{U'}{X}}$ 
		\end{center}
		
		Instantiate the hypothesis $\DeltaX;\Gamma |- e_2 \approx e'_2 : \ssubst{S_2}{U'}{X}$ with $k, \rho,\gamma_1,\gamma_2$, hence: 
		
		$(k, \rhosyn{\gamma_1(e_2)},\rhosyn{\gamma_2(e'_2)} \in \gsetc{\sigma(\ssubst{S_2}{U'}{X})}$
		
		Let $k'' \le k'$ and let $(k'',v''_1,v''_2) \in \gsetv{\sigma(\ssubst{S_2}{U'}{X})}$. By Lemma~\ref{lm:gobsec-monadic-bind-n} (\nameref{lm:gobsec-monadic-bind-n}) we can rewrite the goal \textbf{to show}
		\begin{center}
		$(k'', \gminv{v_1}{m}{\rhosyn{U'}}{v''_1}, \gminv{v_2}{m}{\rhosyn{U'}}{v''_2})) \in \gsetc{\sigma(\ssubst{S}{U'}{X})}$ 
		\end{center}		

		Then, we apply the Lemma~\ref{lm:gobsec-pre-compatibility-method-inv-n} (\nameref{lm:gobsec-pre-compatibility-method-inv-n}) 
		with :
		\begin{itemize}
			\item $\sigma \in \gsetd{\DeltaX}$
			\item $\wft{\DeltaX}{\stype{T_1}{U_1}}$
			\item $(k',v_1,v_2) \in \gsetv{\sigma(\stype{T_1}{U_1})}$ 
			\item $\DeltaX |- m \in U_1$ and $\tlookup{\DeltaX;U_1}{m}{\gmtype{X:A..B}{S_2}{S}}$
			\item $\DeltaX |- U' ~\wedge~ \DeltaX |- U' \in A..B$
			\item $(k'',v''_1,v''_2) \in \gsetv{\sigma(\ssubst{S_2}{U'}{X})}$
		\end{itemize}								
		
		to obtain: $(k'', \gminv{v_1}{m}{\rhosyn{U'}}{v''_1}, \gminv{v_2}{m}{\rhosyn{U'}}{v''_2})) \in \gsetc{\sigma(\ssubst{S}{U'}{X})}$ 		
				
\end{enumerate}

\end{proof}

\subsubsection{Compatibility-Method-Invocation-High}
\begin{lemma}[\gobsec Compatibility-Method-Invocation-High]
\label{lm:gobsec-compatibility-method-invocation-high-n}
\mbox{}
\\
Let $S_1 \defas \stype{T_1}{U_1},S_2 \defas \stype{T_2}{U_2}$ \\
If $\DeltaX;\Gamma |- e_1 \approx e'_1 : \stype{T_1}{U_1}$ \\
$\DeltaX |- m \notin U_1,~\tlookup{\DeltaX;T_1}{m}{\gmtype{X:A..B}{S_2}{\stype{T}{U}}}$	\\
$\DeltaX |- U' ~\wedge~ \DeltaX |- U' \in A..B$\\
$\DeltaX;\Gamma |- e_2 \approx e'_2 : \ssubst{S_2}{U'}{X}$ \\
then $\DeltaX;\Gamma |- \gminv{e_1}{m}{U'}{e_2} \approx \gminv{e'_1}{m}{U'}{e'_2}: \stype{\ssubst{T}{U'}{X}}{\top}$
\end{lemma}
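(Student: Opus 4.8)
The plan is to exploit that the result type $\stype{\ssubst{T}{U'}{X}}{\top}$ carries the declassification type $\top$, which blinds the public observer: any two appropriately-typed terms are trivially related there. This makes the High case substantially simpler than the Declassification case (Lemma~\ref{lm:gobsec-compatibility-method-invocation-decl-n}). In particular, whereas that proof threaded the argument relation through the method body via the pre-compatibility and monadic-bind lemmas, here I need not inspect the argument relation at all: the secret result hides every dependence on the inputs. The whole argument reduces to discharging the typing obligation with rule (TmH) and then invoking Lemma~\ref{lm:gobsec-well-typed-term-related-top}.

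First I would discharge the typing conjunct of the relatedness goal. From $\DeltaX;\Gamma |- e_1 \approx e'_1 : \stype{T_1}{U_1}$ and $\DeltaX;\Gamma |- e_2 \approx e'_2 : \ssubst{S_2}{U'}{X}$, the definition of open-term relatedness yields the plain typings $\DeltaX;\Gamma |- e_1 : \stype{T_1}{U_1}$, $\DeltaX;\Gamma |- e'_1 : \stype{T_1}{U_1}$, $\DeltaX;\Gamma |- e_2 : \ssubst{S_2}{U'}{X}$, and $\DeltaX;\Gamma |- e'_2 : \ssubst{S_2}{U'}{X}$. Combining each unprimed/primed pair with the hypotheses $\DeltaX |- m \notin U_1$, $\tlookup{\DeltaX;T_1}{m}{\gmtype{X:A..B}{S_2}{\stype{T}{U}}}$, and $\DeltaX |- U' \in A..B$, rule (TmH) delivers $\DeltaX;\Gamma |- \gminv{e_1}{m}{U'}{e_2} : \stype{\ssubst{T}{U'}{X}}{\top}$ and, symmetrically, $\DeltaX;\Gamma |- \gminv{e'_1}{m}{U'}{e'_2} : \stype{\ssubst{T}{U'}{X}}{\top}$, settling the first proof obligation.

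Second I would fix arbitrary $k \ge 0$, $\sigma \in \nsetd{\DeltaX}$, and $(k,\gamma_1,\gamma_2) \in \nsetg{\sigma(\Gamma)}$, and show $(k, \sigma(\gamma_1(\gminv{e_1}{m}{U'}{e_2})), \sigma(\gamma_2(\gminv{e'_1}{m}{U'}{e'_2}))) \in \nsetc{\sigma(\stype{\ssubst{T}{U'}{X}}{\top})}$. Since $\sigma(\top) = \top$, the target type is precisely $\stype{\sigma(\ssubst{T}{U'}{X})}{\top}$, so I simply apply Lemma~\ref{lm:gobsec-well-typed-term-related-top} (\nameref{lm:gobsec-well-typed-term-related-top}) instantiated with the two whole method-invocation expressions (whose common safety type is $\ssubst{T}{U'}{X}$), the type substitution $\sigma$, and the value substitutions $\gamma_1, \gamma_2$; the two typing derivations established above supply exactly its premises. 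This yields the required membership.

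There is no genuine obstacle in this case. The only points demanding care are bookkeeping: correctly recovering the four plain typing premises needed by (TmH) from the two relatedness hypotheses, and noting that $\top$ is invariant under $\sigma$ so that the conclusion of Lemma~\ref{lm:gobsec-well-typed-term-related-top} aligns syntactically with $\sigma(\stype{\ssubst{T}{U'}{X}}{\top})$. All the conceptual work has already been packaged into the top-relatedness lemma, which encapsulates that a secret result leaks nothing about its arguments.
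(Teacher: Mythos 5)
Your proposal is correct and follows essentially the same route as the paper's proof: discharge the typing obligations with rule (TmH) and then conclude by applying Lemma~\ref{lm:gobsec-well-typed-term-related-top} to the two fully-typed method-invocation expressions under the given $\sigma$, $\gamma_1$, $\gamma_2$. Your observation that the argument relation never needs to be inspected because the result is protected at $\top$ is exactly the point the paper's proof relies on.
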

\begin{proof}

Proof obligations:
\begin{enumerate}
	\item $\DeltaX;\Gamma |- \gminv{e_1}{m}{U'}{e_2}: \stype{\ssubst{T}{U'}{X}}{\top}$ and $\DeltaX;\Gamma |- \gminv{e'_1}{m}{U'}{e'_2}: \stype{\ssubst{T}{U'}{X}}{\top}$. Apply rule (TmH).
	\item  Assuming arbitrary $k, \sigma, \gamma_1, \gamma_2$ such as: 
		$k \geq 0,\sigma \in \nsetd{\DeltaX}, (k,\gamma_1,\gamma_2) \in \nsetg{\sigma(\Gamma)}$
		
		\textbf{Show:}
		\begin{center}
			$(k, \sigma(\gamma_1(\gminv{e_1}{m}{U'}{e_2})), \sigma(\gamma_2(\gminv{e'_1}{m}{U'}{e'_2}))) \in \gsetc{\stype{\sigma(\ssubst{T}{U'}{X})}{\top}}$ 
		\end{center}
		
		Apply Lemma~\ref{lm:gobsec-well-typed-term-related-top} (\nameref{lm:gobsec-well-typed-term-related-top}) with:
		\begin{itemize}
			\item $\DeltaX;\Gamma |- \gminv{e_1}{m}{U'}{e_2}: \stype{\ssubst{T}{U'}{X}}{\top}$ and 
			$\DeltaX;\Gamma |- \gminv{e'_1}{m}{U'}{e'_2}: \stype{\ssubst{T}{U'}{X}}{\top}$. It follows from above.
			\item $\sigma \in \nsetd{\DeltaX}, (k,\gamma_1,\gamma_2) \in \nsetg{\sigma(\Gamma)}$. It follows from above.
		\end{itemize}
		
		Hence, $(k, \sigma(\gamma_1(\gminv{e_1}{m}{U'}{e_2})), \sigma(\gamma_2(\gminv{e'_1}{m}{U'}{e'_2}))) \in \gsetc{\stype{\sigma(\ssubst{T}{U'}{X})}{\top}}$ 
		\end{enumerate}
\end{proof}

\subsubsection{Compatibility TPmD}
\begin{lemma}[\gobsec Compatibility TPmD]
\label{lm:gobsec-compatibility-primitive-method-decl-n}
\mbox{}
\\
Let $\DeltaX;\Gamma |- e_1 \approx e'_1 :\stype{T}{U}$ \\
Let $\DeltaX |- m \in U,~\tlookup{\DeltaX,U}{m}{\stype{\primt_1}{\ilab} -> \stype{\primt_2}{\ilab}}$	\\
Let $\DeltaX;\Gamma |- e_2 \approx e'_2 : \stype{\primt_1}{U_1}$ \\
Let $\rpolicy(\stype{\primt_1}{U_1},\primt_2) = \primt'_2$ \\
Then $\DeltaX;\Gamma |- \minv{e_1}{m}{e_2} \approx \minv{e'_1}{m}{e'_2}: \stype{\primt_2}{\primt'_2}$
\end{lemma}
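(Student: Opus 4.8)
The plan is to follow the structure of the compatibility lemma for (the non-primitive) declassifying method invocation (Lemma~\ref{lm:gobsec-compatibility-method-invocation-decl-n}), adapted to the ad-hoc polymorphic primitive case. The two well-typedness obligations, $\DeltaX;\Gamma |- \minv{e_1}{m}{e_2} : \stype{\primt_2}{\primt'_2}$ and the symmetric one for $e'_1,e'_2$, follow immediately by applying rule (TPmD) to the typing components extracted from the two relatedness hypotheses. For the relational part I would fix arbitrary $k \ge 0$, $\sigma \in \nsetd{\DeltaX}$ and $(k,\gamma_1,\gamma_2) \in \nsetg{\sigma(\Gamma)}$, and must show $(k, \sigma(\gamma_1(\minv{e_1}{m}{e_2})), \sigma(\gamma_2(\minv{e'_1}{m}{e'_2}))) \in \nsetc{\sigma(\stype{\primt_2}{\primt'_2})}$. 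Since type variables never occur in terms, the closed terms involved are $\minv{\sigma(\gamma_1(e_1))}{m}{\sigma(\gamma_1(e_2))}$ and $\minv{\sigma(\gamma_2(e'_1))}{m}{\sigma(\gamma_2(e'_2))}$.

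The core of the argument is two successive uses of the monadic bind lemma (Lemma~\ref{lm:gobsec-monadic-bind-n}). First I instantiate the hypothesis $\DeltaX;\Gamma |- e_1 \approx e'_1 : \stype{T}{U}$ with the current $k,\sigma,\gamma_1,\gamma_2$ to obtain related receiver computations in $\nsetc{\sigma(\stype{T}{U})}$, and bind them in the evaluation context $[~].m(\ldots)$; this reduces the goal to proving the statement for arbitrary related receiver values $(k',v_1,v_2) \in \nsetv{\sigma(\stype{T}{U})}$, with $k' \le k$. Second, I instantiate $\DeltaX;\Gamma |- e_2 \approx e'_2 : \stype{\primt_1}{U_1}$, obtaining related argument computations in $\nsetc{\sigma(\stype{\primt_1}{U_1})}$, and bind them in the context $v_1.m([~])$; this further reduces the goal to $(k'', v_1.m(v''_1), v_2.m(v''_2)) \in \nsetc{\sigma(\stype{\primt_2}{\primt'_2})}$ for arbitrary related argument values $(k'', v''_1, v''_2) \in \nsetv{\sigma(\stype{\primt_1}{U_1})}$, with $k'' \le k' \le k$. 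The simple-typing side conditions required by each bind are supplied by Lemma~\ref{lm:gobsec-substitution-in-atoms}.

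It then remains to discharge this last goal from the value relation. Because $m \in U$ carries a primitive signature, $\sigma(U)$ is an object type $O$ containing $m$ with signature $\stype{\primt_1}{\ilab} -> \stype{\primt_2}{\ilab}$, so I instantiate the primitive-signature clause of $\nsetv{\stype{\sigma(T)}{O}}$ (Figure~\ref{fig:gobsec-pt-logical-relation}) with $m, k'', v''_1, v''_2$. Its side condition $\primt_1 <: \sigma(U_1)$ follows from well-formedness of $\stype{\primt_1}{U_1}$ (which forces $\primt_1 <: U_1$) together with Lemma~\ref{lm:gobsec-type-subst-preserves-subtyping-n}; feeding it the related arguments then yields $(k'', v_1.m(v''_1), v_2.m(v''_2)) \in \nsetc{\stype{\primt_2}{\rpolicy(\stype{\primt_1}{\sigma(U_1)},\primt_2)}}$. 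In the degenerate case where $\sigma(U)$ is itself a primitive type, well-formedness forces $\sigma(\stype{T}{U}) = \stype{\primt}{\primt}$, so the receivers are syntactically equal primitive values and the result instead follows directly from Lemma~\ref{lm:syntactic-implies-semantic}.

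The step deserving most care, and the main obstacle, is reconciling the declassification facet computed semantically, $\rpolicy(\stype{\primt_1}{\sigma(U_1)},\primt_2)$, with the one appearing in the goal, $\sigma(\primt'_2) = \sigma(\rpolicy(\stype{\primt_1}{U_1},\primt_2))$, since $\rpolicy$ inspects the syntactic identity $\primt_1 = U_1$, which is not stable under substitution. I would establish $\rpolicy(\stype{\primt_1}{\sigma(U_1)},\primt_2) <: \sigma(\primt'_2)$ by a short case analysis on $U_1$: when $U_1$ is $\primt_1$, an object type, or maps under $\sigma$ to a non-primitive, the two facets coincide; the only strict case is $U_1 = X$ with $\sigma(X) = \primt_1$, where the semantic facet is the public $\primt_2$ while $\primt'_2 = \top$, and then $\primt_2 <: \top$ closes the gap. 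Concluding with Lemma~\ref{lm:gobsec-per-types-subtyping-n} (PER subtyping) transports the obtained relation into $\nsetc{\stype{\primt_2}{\sigma(\primt'_2)}} = \nsetc{\sigma(\stype{\primt_2}{\primt'_2})}$, discharging the goal.
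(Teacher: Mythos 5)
Your proposal follows the paper's proof essentially step for step: discharge typing via (TPmD), two applications of the monadic bind lemma to reduce to related receiver and argument values, then instantiate the primitive-signature clause of $\nsetv{\stype{\sigma(T)}{\sigma(U)}}$. The one place you diverge is the final reconciliation of declassification facets, and there your version is actually the more careful one: the paper simply asserts the equality $\rpolicy(\stype{\primt_1}{\sigma(U_1)},\primt_2) = \rpolicy(\stype{\primt_1}{U_1},\primt_2)$, which fails exactly in the case you isolate ($U_1 = X$ with $\sigma(X) = \primt_1$, where the semantic facet is $\primt_2$ but $\primt'_2 = \top$); your fix via $\stype{\primt_2}{\primt_2} <: \stype{\primt_2}{\top}$ and PER subtyping (Lemma~\ref{lm:gobsec-per-types-subtyping-n}) closes that gap correctly, since $\primt'_2$ is closed and hence unaffected by $\sigma$ in the goal. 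Your explicit handling of the degenerate case $\sigma(U) = \primt$ via Lemma~\ref{lm:syntactic-implies-semantic} is likewise a case the paper's proof of this lemma passes over silently (it appears only inside the PER subtyping proof), so no objection there.
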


\begin{proof}
Proof obligations:
\begin{enumerate}
	\item $\DeltaX;\Gamma |- \minv{e_1}{m}{e_2}: \stype{\primt_2}{\primt'_2}$ and $\DeltaX;\Gamma |- \minv{e'_1}{m}{e'_2}: \stype{\primt_2}{\primt'_2}$. Apply rule (TPmD).
	\item Assuming arbitrary $k, \sigma, \gamma_1, \gamma_2$ such as: 
		$k \geq 0,\sigma \in \nsetd{\DeltaX}, (k,\gamma_1,\gamma_2) \in \nsetg{\sigma(\Gamma)}$
		
		\textbf{Show:}
		\begin{center}
			$(k, \sigma(\gamma_1(\minv{e_1}{m}{e_2})), \sigma(\gamma_2(\minv{e'_1}{m}{e'_2}))) \in \nsetc{\sigma(\stype{\primt_2}{\primt'_2})}$ \\		
			$\equiv 
			(k, 
			\minv{\sigma(\gamma_1(e_1))}{m}{\sigma(\gamma_1(e_2))}, 
			\minv{\sigma(\gamma_2(e'_1))}{m}{\sigma(\gamma_2(e'_2))}) \in \nsetc{\sigma(\stype{\primt_2}{\primt'_2})}$\\
			$\equiv 
			(k, 
			\minv{\sigma(\gamma_1(e_1))}{m}{\sigma(\gamma_1(e_2))}, 
			\minv{\sigma(\gamma_2(e'_1))}{m}{\sigma(\gamma_2(e'_2))}) \in \nsetc{\Gbox{\stype{\primt_2}{\primt'_2}}}$
		\end{center}
		
		Instantiate the hypothesis $\DeltaX;\Gamma |- e_1 \approx e'_1 :\stype{T}{U}$ with $k, \sigma,\gamma_1,\gamma_2$, hence:
		
		$(k, \rhosyn{\gamma_1(e_1)},\rhosyn{\gamma_2(e'_1)}) \in \nsetc{\sigma(\stype{T}{U})}$
		
		Let $k' \le k$ and let $(k',v_1,v_2) \in \gsetv{\sigma(\stype{T}{U})}$. By Lemma~\ref{lm:gobsec-monadic-bind-n} (\nameref{lm:gobsec-monadic-bind-n}) we can rewrite the goal \textbf{to show}
		\begin{center}
		$(k', \minv{v_1}{m}{\rhosyn{\gamma_1(e_2)}}, \minv{v_2}{m}{\rhosyn{\gamma_2(e'_2)}}) \in \gsetc{\stype{\primt_2}{\primt'_2}}$ 
		\end{center}
		
		Instantiate the hypothesis $\DeltaX;\Gamma |- e_2 \approx e'_2 : \stype{\primt_1}{U_1}$ with $k, \rho,\gamma_1,\gamma_2$, hence: 
		
		$(k, \rhosyn{\gamma_1(e_2)},\rhosyn{\gamma_2(e'_2)} \in \gsetc{\sigma(\stype{\primt_1}{U_1})}$
		
		Let $k'' \le k'$ and let $(k'',v''_1,v''_2) \in \gsetv{\sigma(\stype{\primt_1}{U_1})}$. By Lemma~\ref{lm:gobsec-monadic-bind-n} (\nameref{lm:gobsec-monadic-bind-n}) we can rewrite the goal \textbf{to show}
		\begin{center}
		$(k'', \minv{v_1}{m}{v''_1}, \minv{v_2}{m}{v''_2})) \in \gsetc{\stype{\primt_2}{\primt'_2}}$ 
		\end{center}
		
		Instantiate $(k',v_1,v_2) \in \gsetv{\sigma(\stype{T}{U})}$ with $m, k'',v''_1,v''_2$. Note that:
		\begin{itemize}
			\item $m \in \sigma(U)$. It follows from $\DeltaX |- m \in U$, $\sigma \in \nsetd{\DeltaX}$ and $\wft{\DeltaX}{\stype{T}{U}}$. Then 
			 
			$\tlookup{\bigcdot,\sigma(T_1)}{m}{\stype{\primt_1}{\ilab} -> \stype{\primt_2}{\ilab}}$.
			\item $k'' < k$ which follows above assumptions.
			\item $(k'',v''_1,v''_2) \in \gsetv{\stype{\primt_1}{\sigma(U_1)}}$. Apply Lemma~\ref{lm:gobsec-monotonicity} (\nameref{lm:gobsec-monotonicity}) 
			with $(k,v''_1,v''_2) \in \gsetv{\stype{\primt_1}{\sigma(U_1)}}$  and $k'' \le k$ 
		\end{itemize}

		Hence, $(k'',\minv{v_1}{m}{v''_1},\minv{v_2}{m}{v''_2}) \in \gsetc{\stype{\primt_2}{\rpolicy(\stype{\primt_1}{\rho(U_1)},\primt_2)}}$
		
		Then, note that $\rpolicy(\stype{\primt_1}{\rho(U_1)},\primt_2) = \rpolicy(\stype{\primt_1}{U_1},\primt_2) = \primt'_2$.
		
		Hence, $(k'', \minv{v_1}{m}{v''_1}, \minv{v_2}{m}{v''_2})) \in \gsetc{\stype{\primt_2}{\primt'_2}}$ 
\end{enumerate}
\end{proof}

\subsubsection{Compatibility TPmH}
\begin{lemma}[\gobsec Compatibility TPmH]
\label{lm:gobsec-compatibility-primitive-method-high-n}
\mbox{}
\\
Let $\DeltaX;\Gamma |- e_1 \approx e'_1 :\stype{T}{U}$ \\
Let $\DeltaX |- m \in T,~\tlookup{\DeltaX,T}{m}{\stype{\primt_1}{\ilab} -> \stype{\primt_2}{\ilab}}$	\\
Let $\DeltaX;\Gamma |- e_2 \approx e'_2 : \stype{\primt_1}{U_1}$ \\
Then $\DeltaX;\Gamma |- \minv{e_1}{m}{e_2} \approx \minv{e'_1}{m}{e'_2}: \stype{\primt_2}{\top}$
\end{lemma}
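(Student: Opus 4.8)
The plan is to follow the structure of the non-primitive high case, Lemma~\ref{lm:gobsec-compatibility-method-invocation-high-n}, since the decisive feature is identical: the result type $\stype{\primt_2}{\top}$ carries a $\top$ declassification facet, at which the public observer can distinguish nothing. Consequently the relational content of the statement collapses to well-typedness of the two invocations, and no reasoning about $\rpolicy$ or about the relational interpretation of the receiver's interface is required.

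First I would discharge the two typing obligations demanded by the definition of logical relatedness of open terms. From the premise $\DeltaX;\Gamma |- e_1 \approx e'_1 : \stype{T}{U}$ we extract $\DeltaX;\Gamma |- e_1 : \stype{T}{U}$ and $\DeltaX;\Gamma |- e'_1 : \stype{T}{U}$, and the premise on $e_2,e'_2$ analogously gives their typing at $\stype{\primt_1}{U_1}$. Feeding these, together with the side condition $\DeltaX |- m \notin U$ characterising this high case and the looked-up primitive signature $\tlookup{\DeltaX,T}{m}{\stype{\primt_1}{\ilab} -> \stype{\primt_2}{\ilab}}$, into rule (TPmH) yields $\DeltaX;\Gamma |- \minv{e_1}{m}{e_2} : \stype{\primt_2}{\top}$ and likewise $\DeltaX;\Gamma |- \minv{e'_1}{m}{e'_2} : \stype{\primt_2}{\top}$.

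For the relational part, I would fix arbitrary $k \ge 0$, $\sigma \in \nsetd{\DeltaX}$, and $(k,\gamma_1,\gamma_2) \in \nsetg{\sigma(\Gamma)}$, and observe that the goal is $(k, \sigma(\gamma_1(\minv{e_1}{m}{e_2})), \sigma(\gamma_2(\minv{e'_1}{m}{e'_2}))) \in \nsetc{\stype{\primt_2}{\top}}$, since the type substitution $\sigma$ fixes neither $\primt_2$ nor $\top$. This follows immediately from Lemma~\ref{lm:gobsec-well-typed-term-related-top} (\nameref{lm:gobsec-well-typed-term-related-top}), instantiated with the two typing derivations obtained in the previous step and the chosen $\sigma,\gamma_1,\gamma_2$. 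The only thing to verify is that these typing derivations are genuinely available as inputs to that lemma, which is exactly what the first step supplies; so there is no real obstacle here, in sharp contrast with the declassification case (Lemma~\ref{lm:gobsec-compatibility-primitive-method-decl-n}), where one must track the primitive operation through the receiver's relational interpretation and the $\rpolicy$ function to land in the correct $\nsetc{\cdot}$.
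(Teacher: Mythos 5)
Your proposal matches the paper's proof essentially step for step: both discharge the typing obligations by rule (TPmH) and then dispatch the relational goal by a direct appeal to Lemma~\ref{lm:gobsec-well-typed-term-related-top}, exploiting that the $\top$ declassification facet makes any two well-typed terms related. No substantive difference; your observation that this case is much lighter than the declassification case (which must thread through $\rpolicy$ and the receiver's relational interpretation) is consistent with how the paper organizes the two lemmas.
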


\begin{proof}
Proof obligations:
\begin{enumerate}
	\item $\DeltaX;\Gamma |- \minv{e_1}{m}{e_2}: \stype{\primt_2}{\top}$ and 
	$\DeltaX;\Gamma |- \minv{e'_1}{m}{e'_2}: \stype{\primt_2}{\top}$. Apply rule (TPmH).
	\item Assuming arbitrary $k, \sigma, \gamma_1, \gamma_2$ such as: 
		$k \geq 0,\sigma \in \nsetd{\DeltaX}, (k,\gamma_1,\gamma_2) \in \nsetg{\sigma(\Gamma)}$
		
		\textbf{Show:}
		\begin{center}
			$(k, \sigma(\gamma_1(\minv{e_1}{m}{e_2})), \sigma(\gamma_2(\minv{e'_1}{m}{e'_2}))) \in \gsetc{\stype{\sigma(\primt_2)}{\top}}$ 
		\end{center}
		
		Apply Lemma~\ref{lm:gobsec-well-typed-term-related-top} (\nameref{lm:gobsec-well-typed-term-related-top}) with:
		\begin{itemize}
			\item $\DeltaX;\Gamma |- \minv{e_1}{m}{e_2}: \stype{\primt_2}{\top}$ and 
	$\DeltaX;\Gamma |- \minv{e'_1}{m}{e'_2}: \stype{\primt_2}{\top}$. It follows from above
			\item $\sigma \in \nsetd{\DeltaX}, (k,\gamma_1,\gamma_2) \in \nsetg{\sigma(\Gamma)}$. It follows from above.
		\end{itemize}
		
		Hence, $(k, \sigma(\gamma_1(\minv{e_1}{m}{e_2})), \sigma(\gamma_2(\minv{e'_1}{m}{e'_2}))) \in \gsetc{\stype{\sigma(\primt_2)}{\top}}$ 
\end{enumerate}
\end{proof}

\subsubsection{Fundamental property}
\gfprni*
\begin{proof}
The proof is by induction on the typing derivation of $\DeltaX;\Gamma |- e : S$.

Each case follows directly from the corresponding compatibility lemma 
(Lemmas~\ref{lm:gobsec-compatibility-var-n} ... \ref{lm:gobsec-compatibility-primitive-method-high-n})
\end{proof}

\fpImpliesPRNI*
\begin{proof}
The proof is direct given the similarity of both definitions. The only difference between both definitions is that 
${\Delta,\Gamma |- e \approx e:S}$
uses the security type system (Figure~\ref{fig:gobsec-static-semantics}) and $\gtrni{\Delta}{\Gamma}{e}{S}$ uses the
simple type system (Figure~\ref{fig:gobsec-safe-type-system}). We use Lemma~\ref{lm:gobsec-security-ts-implies-simple-ts} (\nameref{lm:gobsec-security-ts-implies-simple-ts})
to show that $\DeltaX;\Gamma |- e: \stype{T}{U} => \stypeof{\Gamma}{e}{T}$
\end{proof}

\renewcommand{\rhosyn}[1]{\rho(#1)} 
\renewcommand{\rhosynx}[2]{{#1}_\mathsf{syn}(#2)}
\renewcommand{\gsetv}[1]{\gsetvx{#1}{\rho}}
\renewcommand{\gsetc}[1]{\gsetcx{#1}{\rho}}
\renewcommand{\gsetd}[1]{\mathcal{D}\llbracket#1\rrbracket}
\renewcommand{\gsetg}[1]{\gsetgx{#1}{\rho}}
\label{sec:gobsec-appendix-prni-proof}

\clearpage
\section{List of figures}
\listoffigures

\section{List of theorems}
\listoftheorems[ignoreall,show={lemma,theorem}]

\fi

\end{document}